\numberwithin{equation}{section}
\numberwithin{figure}{section}
  \theoremstyle{plain}
  \newtheorem*{thm*}{\protect\theoremname}
  \theoremstyle{plain}
  \newtheorem*{lem*}{\protect\lemmaname}
  \theoremstyle{plain}
  \newtheorem*{prop*}{\protect\propositionname}
  \theoremstyle{plain}
  \newtheorem*{cor*}{\protect\corollaryname}
\theoremstyle{plain}
\newtheorem{thm}{\protect\theoremname}[section]
  \theoremstyle{plain}
  \newtheorem{lem}[thm]{\protect\lemmaname}
  \theoremstyle{remark}
  \newtheorem{rem}[thm]{\protect\remarkname}
  \theoremstyle{plain}
  \newtheorem{prop}[thm]{\protect\propositionname}
  \theoremstyle{plain}
  \newtheorem{cor}[thm]{\protect\corollaryname}
\def\Ddots{\mathinner{\mkern1mu\raise\p@
\vbox{\kern7\p@\hbox{.}}\mkern2mu
\raise4\p@\hbox{.}\mkern2mu\raise7\p@\hbox{.}\mkern1mu}}
  \providecommand{\corollaryname}{Corollary}
  \providecommand{\lemmaname}{Lemma}
  \providecommand{\propositionname}{Proposition}
  \providecommand{\remarkname}{Remark}
  \providecommand{\theoremname}{Theorem}
\providecommand{\theoremname}{Theorem}
\begin{document}
\global\long\def\SLE{\mathrm{SLE}}
 \global\long\def\SLEk{\mathrm{SLE}_{\kappa}}
 \global\long\def\SLEkappa#1{\mathrm{SLE}_{#1}}
 \global\long\def\SLEkapparho#1#2{\mathrm{SLE}_{#1}(#2)}

\global\long\def\PR{\mathsf{P}}
 \global\long\def\EX{\mathsf{E}}
 \global\long\def\sF{\mathcal{F}}
 \global\long\def\sFalt{\overleftarrow{\sF}}

\global\long\def\bR{\mathbb{R}}
 \global\long\def\bZ{\mathbb{Z}}
 \global\long\def\bN{\mathbb{N}}
 \global\long\def\bZpos{\mathbb{Z}_{> 0}}
 \global\long\def\bZnn{\mathbb{Z}_{\geq 0}}
 \global\long\def\bQ{\mathbb{Q}}

\global\long\def\bC{\mathbb{C}}
 \global\long\def\Rsphere{\overline{\bC}}
 \global\long\def\re{\Re\mathfrak{e}}
 \global\long\def\im{\Im\mathfrak{m}}
 \global\long\def\arg{\mathrm{arg}}
 \global\long\def\ii{\mathfrak{i}}

\global\long\def\bD{\mathbb{D}}
 \global\long\def\bH{\mathbb{H}}

\global\long\def\sZ{\mathcal{Z}}
 \global\long\def\sD{\mathcal{D}}
 \global\long\def\sC{\mathcal{C}}
 \global\long\def\sL{\mathcal{L}}
 \global\long\def\sA{\mathcal{A}}

\global\long\def\bx{\boldsymbol{x}}
\global\long\def\bw{\boldsymbol{w}}

\global\long\def\dist{\mathrm{dist}}

\global\long\def\eps{\varepsilon}
 \global\long\def\const{\mathrm{const.}}
 \global\long\def\half{\frac{1}{2}}
 \global\long\def\Ord{\mathcal{O}}
 \global\long\def\ord{\mathit{o}}
 \global\long\def\OO{\mathcal{O}}
 \global\long\def\oo{\mathit{o}}

\global\long\def\domain{\Lambda}
\global\long\def\bdry{\partial}
 \global\long\def\cl#1{\overline{#1}}

\global\long\def\Ampl{\zeta}
 \global\long\def\Corr{\chi}
 \global\long\def\lft{-}
 \global\long\def\rgt{+}
 \global\long\def\rgtlft{\pm}

 \global\long\def\Catalan{\mathrm{C}}

\global\long\def\ud{\mathrm{d}}
 \global\long\def\der#1{\frac{\ud}{\ud#1}}
 \global\long\def\pder#1{\frac{\partial}{\partial#1}}
 \global\long\def\pdder#1{\frac{\partial^{2}}{\partial#1^{2}}}

\global\long\def\set#1{\left\{  #1\right\}  }
 \global\long\def\setcond#1#2{\left\{  #1\;\big|\;#2\right\}  }

\global\long\def\CubeInt{\widetilde{\rho}}
 \global\long\def\SimplexInt{\rho}
 \global\long\def\FWint{\varphi}
 \global\long\def\FWintalt{{\overleftarrow{\FWint}}}
 \global\long\def\AsyInt{\alpha}
 \global\long\def\DecoInt{\widetilde{\omega}}

\global\long\def\SurfSimplex{\mathfrak{R}}
 \global\long\def\SurfCube{\SurfSimplex^{\approx}}
 \global\long\def\SurfFW{\mathfrak{L}^{\Supset}}
 \global\long\def\SurfFWalt{\mathfrak{L}^{\Subset}}
 \global\long\def\SurfAsy{\mathfrak{M}^{\sim\!\!\!\supset}}

\global\long\def\SurfSimplexAnch{\mathfrak{R}^{(x_{0})}}
 \global\long\def\SurfCubeAnch{\SurfSimplex^{(x_{0});\approx}}
 \global\long\def\SurfFWAnch{\mathfrak{G}^{(x_{0});\Supset}}

\global\long\def\fSimplex{f}
 \global\long\def\fCube{\fSimplex^{\approx}}
 \global\long\def\fFW{\fSimplex^{\Supset}}
 \global\long\def\fFWalt{\fSimplex^{\Subset}}
 \global\long\def\fAsy{\fSimplex^{\sim\!\!\!\supset}}

\global\long\def\braid{\sigma}
 \global\long\def\twoFone{\,{}_{2}F_{1}}

\global\long\def\sl{\mathfrak{sl}}
 \global\long\def\Uqsltwo{\mathcal{U}_{q}(\mathfrak{sl}_{2})}
 \global\long\def\Hcp{\Delta}

\global\long\def\chamber{\mathfrak{X}}
\global\long\def\chamberalt{\overleftarrow{\chamber}}
 \global\long\def\Wchamber{\mathfrak{W}}
 \global\long\def\sR{\mathcal{R}}
 \global\long\def\FKcone{L\mathrm{-cone}}

\global\long\def\sP{\mathcal{P}}

\global\long\def\Arch{\mathrm{Arch}}
 \global\long\def\Conn{\mathrm{Conn}}
 \global\long\def\arch#1#2{\left[#1\phantom{}^{\frown}#2\right]}
 \global\long\def\nested{\boldsymbol{\underline{\Cap}}}
 \global\long\def\unnested{\boldsymbol{\underline{\cap\cap}}}


\global\long\def\sciOp{\text{\Rightscissors}}
 \global\long\def\tieOp{\wp}

\global\long\def\qnum#1{\left[#1\right] }
 \global\long\def\qfact#1{\left[#1\right]! }
 \global\long\def\qbin#1#2{\left[\begin{array}{c}
 #1\\
#2 
\end{array}\right]}

\global\long\def\Hom{\mathrm{Hom}}
 \global\long\def\End{\mathrm{End}}
 \global\long\def\Aut{\mathrm{Aut}}
 \global\long\def\Rad{\mathrm{Rad}}
 \global\long\def\Ext{\mathrm{Ext}}

\global\long\def\dmn{\mathrm{dim}}
 \global\long\def\spn{\mathrm{span}}
 \global\long\def\tens{\otimes}
 \global\long\def\Pf{\mathrm{Pf}}
 \global\long\def\sgn{\mathrm{sign}}

\global\long\def\Mat{\mathrm{Mat}}
 \global\long\def\unitmat{\mathbb{I}}
 \global\long\def\id{\mathrm{id}}
 \global\long\def\isom{\cong}

\global\long\def\Wd{\mathsf{M}}
 \global\long\def\Wbas{e}
 \global\long\def\Tbas{\tau}
\global\long\def\HWsp{\mathrm{H}}

\global\long\def\Mob{\mu}

\global\long\def\diadim{\delta}

\global\long\def\SymmGrp{\mathfrak{S}}
\global\long\def\symgrp{\SymmGrp}

\global\long\def\pialt{\overleftarrow{\pi}}

\title[Conformally covariant boundary correlation functions]{Conformally covariant boundary correlation functions\\
with a quantum group}
\author{K.~Kyt\"ol\"a and E.~Peltola}

\

\vspace{2.5cm}

\begin{center}
\LARGE \bf \scshape Conformally covariant boundary correlation functions \\ with a quantum group
\end{center}

\vspace{0.75cm}

\begin{center}
{\large \scshape Kalle Kyt\"ol\"a}\\
{\footnotesize{\tt kalle.kytola@aalto.fi}}\\
{\small{Department of Mathematics and Systems Analysis}}\\
{\small{P.O. Box 11100, FI-00076 Aalto University, Finland}}\bigskip{}
\\
{\large \scshape Eveliina Peltola}\\
{\footnotesize{\tt eveliina.peltola@unige.ch}}\\
{\small{Section de Math\'{e}matiques, Universit\'{e} de Gen\`{e}ve,}}\\
{\small{2--4 rue du Li\`{e}vre, C.P. 64, 1211 Gen\`{e}ve 4, Switzerland}}
\end{center}

\vspace{0.75cm}

\begin{center}
\begin{minipage}{0.85\textwidth} \footnotesize
{\scshape Abstract.}
Particular boundary correlation functions of conformal field theory
are needed to answer some questions related to random conformally invariant
curves known as Schramm-Loewner evolutions ($\SLE$). In this article,
we introduce a correspondence and establish its fundamental properties,
which are used in the companion articles \cite{JJK-SLE_boundary_visits,KP-pure_partition_functions_of_multiple_SLEs}
for explicitly solving two such problems. The correspondence associates
Coulomb gas type integrals to vectors in a tensor product representation
of a quantum group, a $q$-deformation of the Lie algebra $\sl_2$.
We show that desired properties of the functions are guaranteed by
natural representation theoretical properties of the vectors.
\end{minipage}
\end{center}

\vspace{0.75cm}

\bigskip{}


\section{Introduction}

Boundary correlation functions in conformal field theories in general,
and, in particular, questions about random conformally invariant
curves, frequently lead to quite similar systems of partial differential
equations whose boundary conditions are specified in terms of the
asymptotic behavior of solutions. The main result of this article
provides a systematic method to construct explicit solutions. The
method is a form of the so-called ``hidden quantum group symmetry
of conformal field theories''
\cite{FFK-braid_matrices,
MR-comment_on_quantum_group_symmetry_in_CFT,
BMP-quantum_group_structure_in_the_Fock_space_resolutions_of_slnhat_representations,
GS-quantum_group_meaning_of_the_Coulomb_gas,
FW-topological_representation_of_Uqsl2,
PS-common_structures_between_finite_systems_and_CFTs,
RRR-contour_picture_of_quantum_groups_in_CFT,
Varchenko-ICM1990-multidimensional_hypergeometric_functions}
(see also the textbooks \cite{Fuchs-affine_Lie_algebras_and_quantum_groups,Varchenko-multidimensional_hypergeometric_functions_and_representation_theory_of_Lie_algebras_and_quantum_groups,
GRS-quantum_groups_in_2d_physics}).
In this article, we establish properties which are directly relevant
for solving the PDE problems arising in the theory of Schramm-Loewner
evolutions ($\SLE$). In two companion articles, the results are applied
to produce explicit answers to the questions of boundary visit probabilities
of chordal $\SLE$s \cite{JJK-SLE_boundary_visits} and the pure geometries
of multiple $\SLE$s \cite{KP-pure_partition_functions_of_multiple_SLEs}.
With appropriate modifications, the method can also be applied to
bulk correlation functions --- in \cite{FP-monodromy_invariant_correlations}
it is applied to the construction of single-valued bulk correlation
functions of conformal field theory.

At the heart of the method are integral solutions to differential
equations, crucial to the entire Coulomb gas formalism of
conformal field theory \cite{DF-four_point_correlation_functions}.
Such an idea was used already by Euler for solving the
hypergeometric differential equation, and it can be summarized as follows.
Let $\sD$ be a differential operator acting in a variable $x$.
Suppose that $f(x,w)$ is a function of $x$ and an auxiliary variable
$w$ such that the differential operator $\sD$ acting on $f$
gives an exact form in the $w$-variable:
$( \sD f ) \, \ud w = \ud \eta$.
Let $\Gamma$
be an integration surface in the $w$-variable, and
define a function $F$ of $x$ by the integral
\[ F(x) = \int_\Gamma f(x,w) \, \ud w . \]
If the order of integration and differentiation can be exchanged in
\[  \sD F = \sD \int_\Gamma f \, \ud w
    = \int_\Gamma (\sD f) \, \ud w = \int_\Gamma \ud \eta , \]
then by Stokes' theorem we have
\[  \sD F = \int_{\Gamma} \ud \eta
    = \int_{\bdry \Gamma} \eta . \]
If the surface $\Gamma$ is closed, $\bdry \Gamma = \emptyset$,
then the right hand side above vanishes, and $F$ therefore
satisfies the differential equation $\sD F = 0$.
When an appropriate
auxiliary function $f$ is known, the remaining difficulty in
solving the differential equation $\sD F = 0$ lies in choosing a
surface $\Gamma$ which is not only closed, but also such that
the constructed function $F = \int_\Gamma f \, \ud w$ satisfies
whatever boundary conditions are imposed.

The method of this
article exploits a hidden quantum group structure in the choice
of the appropriate integration surface~$\Gamma$
for solving certain partial differential equations of conformal field theory.
We construct a linear correspondence
from a representation of a quantum group to
functions defined by integrals. In this correspondence, representation
theoretic operations conveniently allow both to verify
the closedness of the integration surface, and consequently the
differential equations, and to analyze the boundary behavior
of the function. The properties are formulated in a systematic
and readily applicable fashion, as demonstrated in the examples
provided later in this introduction.

\subsection{\label{sub: PDEs from CFT}Partial differential equations from conformal field theory}
Let us parametrize the central charge $c$ of the conformal field theory
by a number $\kappa>0$ via $c=c(\kappa) = 13 - \frac{3\kappa}{2} - \frac{24}{\kappa}$.
According to the seminal paper \cite{BPZ-infinite_conformal_symmetry_in_2d_QFT},
the correlation functions of
general fields can (usually) be reduced to correlation functions of the
primary fields in the corresponding ``conformal families''. Moreover,
when the conformal weights of the primary fields lie in the so-called Kac table,
their correlation functions can be expected to satisfy partial differential equations,
by virtue of degeneracies in the representation theory of the Virasoro algebra.
For primary fields in the first row of the Kac table,
the conformal weights are of the form
\begin{align}\label{eq: Kac labeled conformal weights}
h_{1,d} =\; & \frac{(d-1)(2(d+1)-\kappa)}{2\kappa}, \qquad \text{ with } d \in \bZ_{>0},
\end{align}
and explicit expressions for
these partial differential equations have been found by Benoit and
Saint-Aubin~\cite{BSA-degenerate_CFTs_and_explicit_expressions}. These are the
PDEs considered in the present article. Specifically, for a conformal field theory in the upper half-plane
$\bH = \set{z \in \bC \; \big| \; \im(z) > 0}$, a boundary
correlation function of $n$ such primary fields is a function defined on the chamber
\begin{align}\label{eq: chamber}
\chamber_{n}=\; & 
\set{(x_{1},\ldots,x_{n})\in\bR^{n}\;\Big|\; x_{1}<\cdots<x_{n}} ,
\end{align}
and if the fields at $x_j$, for $j=1,\ldots,n$, have conformal weights $h_{1,d_j}$, respectively, 
then the Benoit~\& Saint-Aubin
partial differential equations for the correlation function
$F \colon \chamber_n \to \bC$ read
\begin{align}\label{eq: BSA PDE for a function}
\sum_{k=1}^{d_{j}}\sum_{\substack{p_{1},\ldots,p_{k}\geq1\\
p_{1}+\cdots+p_{k}=d_{j}}}
\frac{(-4/\kappa)^{d_{j}-k}\,(d_{j}-1)!^{2}}{\prod_{u=1}^{k-1}(\sum_{i=1}^{u}p_{i})(\sum_{i=u+1}^{k}p_{i})}\times\sL_{-p_{1}}^{(j)}\cdots\sL_{-p_{k}}^{(j)}
\; F(x_1 , \ldots , x_n) \; = \; 0 ,
\end{align}
where $\sL_{-p}^{(j)} = -\sum_{i\neq j}\left((x_{i}-x_{j})^{1-p}\pder{x_{i}}+(1-p)\, h_{1,d_{i}}\,(x_{i}-x_{j})^{-p}\right) $.
Moreover, covariance of the correlation function under global conformal
transformations requires that under any M\"obius transformation $\Mob \colon \bH \to \bH$
of the upper half-plane such that
$\Mob(x_1) < \cdots < \Mob(x_n)$, we have
\begin{align}\label{eq: Mobius covariance for a function}
F(x_{1},\ldots,x_{n}) =\; & 
\prod_{i=1}^{n} \Mob'(x_{i})^{h_{1,d_i}} \times F(\Mob(x_{1}),\ldots,\Mob(x_{n})) .
\end{align}
In the main result of the present article, we systematically construct 
integral solutions to the PDEs~\eqref{eq: BSA PDE for a function} and
covariance condition~\eqref{eq: Mobius covariance for a function}, and provide 
tools for analyzing their boundary conditions.

\subsection{\label{sub: quantum group}The role of the quantum group}

The method we introduce in this article relies on the representation
theory of the quantum group $\Uqsltwo$ in the generic,
semisimple case (for the
precise definitions, see Section~\ref{sec: q stuff}). Informally,
$\Uqsltwo$ is a deformation of the Lie algebra $\sl_{2}$ of traceless
complex $2\times2$-matrices, with a complex deformation parameter~$q$
that we assume not to be a root of unity or zero.
The deformation parameter is related to the central charge
$c = c(\kappa)$ of the conformal field theory via $q = e^{\ii \pi 4 / \kappa}$.
Our assumption on $q$ corresponds to~$\kappa \notin \bQ$.

As an algebra,
$\Uqsltwo$ is generated by an invertible Cartan element $K$, and
raising and lowering operators $E$ and $F$, which shift the eigenvalues
of $K$ by multiplicative factors $q^{2}$ and $q^{-2}$, respectively.
To state our result, the following representation theoretical notions are needed.
The algebra $\Uqsltwo$ has, for all positive integers $d$, an irreducible
representation $\Wd_{d}$ of dimension $d$, which $q$-deforms the
$d$-dimensional irreducible representation of~$\sl_{2}$. Tensor products of
representations are defined by equipping $\Uqsltwo$ with a Hopf algebra
structure, and they decompose into direct sums of irreducible
subrepresentations.
By trivial representation we mean the one-dimensional representation
$\Wd_1 \isom \bC$ or a direct sum of copies of it.
Finally, we say that a vector $v$ in a representation
is a highest weight vector if it is annihilated by the
raising operator, i.e., $E.v=0$.

To construct $n$-point boundary correlation functions,
we form a tensor product of $n$ irreducible representations of the
quantum group, and to its vectors we associate certain functions
of integral form. 
We show that under this association, desired properties of the functions
follow from natural representation theoretical properties of the vectors.
Functions $\sF[v]$
associated to highest weight vectors $v$ are well-defined on the
chamber domain~\eqref{eq: chamber}, 
and they satisfy a system of Benoit~\& Saint-Aubin 
partial differential equations. 
Asymptotics of the functions can be read off from projections to subrepresentations.
Homogeneity degree of the function is related to the eigenvalue of
the Cartan element $K$, and for vectors in the trivial subrepresentation,
the associated function is covariant under all M\"obius transformations.

\subsection{\label{sub: informal statement}The main result}
We now outline
the main result of this article, whose
precise statement will be given in 
Theorems~\ref{thm: SCCG correspondence non-hwv}
and \ref{thm: SCCG correspondence hwv} in 
Section~\ref{sub: the main correspondence result},
once all relevant notation and conventions have been introduced. Examples
of its applications are discussed in Sections~\ref{sub: example applications} and~\ref{sub: another example application}.

For the precise definition of our correspondence,
at intermediate steps we need an auxiliary anchor point $x_{0}$,
and we have to use functions defined on the restricted chamber
\begin{align}\label{eq: restricted chamber}
\chamber_{n}^{(x_{0})} = \; & 
\set{(x_{1},\ldots,x_{n})\in\bR^{n}\;\Big|\; x_{0}<x_{1}<\cdots<x_{n}}.
\end{align}
The correspondence consists of linear mappings
\begin{align*}
\sF^{(x_{0})} \;\colon\; & \bigotimes_{i=1}^{n}
\Wd_{d_{i}} \to \mathcal{C}^\infty \big( \chamber_{n}^{(x_{0})} \big)
\end{align*}
from the tensor product 
$\bigotimes_{i=1}^{n} \Wd_{d_{i}} = \Wd_{d_{n}}\tens\cdots\tens\Wd_{d_{1}}$
of $n$ irreducible representations of the
quantum group $\Uqsltwo$ to a space of smooth functions of $n$ variables.
By construction, detailed in Section~\ref{sub: definition of the 
correspondence}, the mapping $\sF^{(x_{0})}$ 
sends appropriate basis vectors to functions of the form
\begin{align*}
C \times \prod_{1\leq i<j\leq n}(x_{j}-x_{i})^{\frac{2}{\kappa}(d_{i}-1)(d_{j}-1)} \times
    \underset{\!\!\!\SurfFW}{\int \!\!\! \cdot \! \cdot \! \cdot \!\!\! \int}
    \prod_{\substack{1\leq i\leq n \\ 1\leq r\leq\ell}}(w_{r}-x_{i})^{-\frac{4}{\kappa}(d_{i}-1)}
    \prod_{1\leq r<s\leq\ell}(w_{s}-w_{r})^{\frac{8}{\kappa}}
    \; \ud w_1 \cdots \ud w_{\ell} ,
\end{align*}
where $C$ is a phase factor and the integrations are over a family $\SurfFW$ 
of non-intersecting loops based at the anchor point $x_0$.
The function depends on the vector through the topology of the integration
surface~$\SurfFW$, as well as the phase factor $C$.
The family $\SurfFW$ of non-intersecting loops is schematically illustrated
in Figure~\ref{fig: intro non-intersecting loops}.
\noindent 
\begin{figure}
\includegraphics[width=1\textwidth]{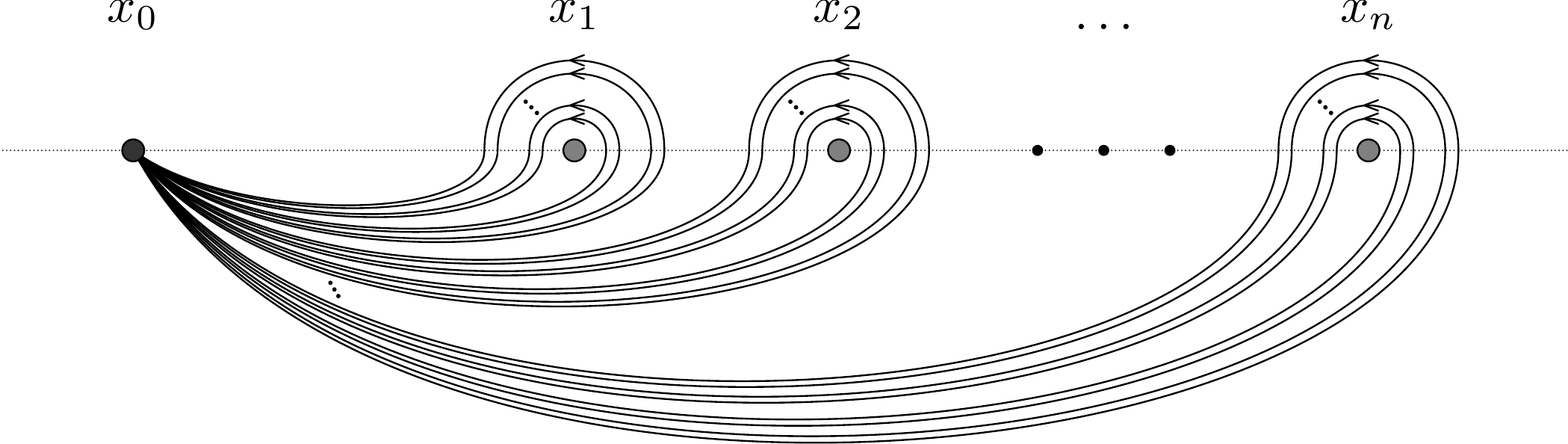}
\caption{\label{fig: intro non-intersecting loops}
Under our correspondence, basis vectors in a tensor product representation
of a quantum group are mapped to integral form functions.
The integration is over
a family $\SurfFW$ of non-intersecting loops based at the anchor point $x_0$,
with nested loops surrounding the points $x_1 , \ldots, x_n$.
The numbers of loops and the phase factor $C$ multiplying the integral depend on the
basis vector, as detailed in Sections~\ref{sub: basis functions as FW integrals}
and~\ref{sub: definition of the correspondence}.
}
\end{figure}

We refer to the mappings $\sF^{(x_{0})}$ as 
``the spin chain~- Coulomb gas correspondence'', 
since they map the state spaces $\bigotimes_{i=1}^{n}\Wd_{d_{i}}$
of finite quantum spin chains to spaces of screened correlation functions
in the Coulomb gas formalism of conformal field theory. Such
integral form correlation functions of CFT have been studied since 
\cite{DF-multipoint_correlation_functions}.

Our main theorem --- informally stated --- is the following.
\begin{thm*}[Theorems~\ref{thm: SCCG correspondence non-hwv} and \ref{thm: SCCG correspondence hwv}
in Section~\ref{sub: the main correspondence result}]
Under the mappings $v\mapsto\sF^{(x_{0})}[v]$,
properties of the vector $v\in\bigotimes_{i=1}^{n}\Wd_{d_{i}}$
ensure properties of the corresponding function
\begin{align*}
\sF^{(x_{0})}[v] \colon \chamber_{n}^{(x_{0})} \to \bC
\end{align*}
as follows:
\begin{description}
\item[{(well-def.)}] For any highest weight vector $v$,
the function $\sF^{(x_{0})}[v]$ is independent of $x_0$,
and thus gives rise to a well-defined function 
on the chamber $\chamber_{n}$, denoted by $\sF[v]$.\smallskip{}
\item [{(PDE)}] If $v$ is a highest weight vector, then the function $F = \sF[v]\colon\chamber_{n}\to\bC$
satisfies, for each $j=1,\ldots,n$, the linear homogeneous
partial differential equation~\eqref{eq: BSA PDE for a function}
of order $d_j$ equal to the dimension of a factor $\Wd_{d_j}$ in the tensor product.
\smallskip{}

\item [{(COV)}] The functions have the following covariance properties:

\begin{itemize}
\item The function $\sF[v]$ (resp. $\sF^{(x_{0})}[v]$) is translation
invariant.
\item If $v$ is an eigenvector of the Cartan generator $K$, then the function
$\sF[v]$ (resp. $\sF^{(x_{0})}[v]$) is homogeneous of a degree that
depends on the eigenvalue.
\item If $v$ belongs to the trivial subrepresentation of $\bigotimes_{i=1}^{n}\Wd_{d_{i}}$,
then $F = \sF[v]$ transforms covariantly under all M\"obius transformations as
in~\eqref{eq: Mobius covariance for a function}.\smallskip{}

\end{itemize}
\item [{(ASY)}] Suppose that $v$ belongs to the subrepresentation of $\bigotimes_{i=1}^{n}\Wd_{d_{i}}$
obtained by picking the $d$-dimensional irreducible direct summand
in the tensor product of the $j$:th and $j+1$:st factors $\Wd_{d_{j}}$
and $\Wd_{d_{j+1}}$, and let 
\begin{align*}
\hat{v} \in \Big(\bigotimes_{i>j+1}\Wd_{d_{i}}\Big)\tens\Wd_{d}
\tens\Big(\bigotimes_{i<j}\Wd_{d_{i}}\Big)
\end{align*}
denote the vector obtained by identifying $v$ as a vector in an $(n-1)$-fold
tensor product representation.
Then, as $|x_{j+1}-x_{j}|\to0$, the function $\sF[v]$ has the asymptotic
behavior
\begin{align*}
\sF[v]\sim\; & B\times(x_{j+1}-x_{j})^{\Delta}\times\sF[\hat{v}],
\end{align*}
where the constant $B=B_{d}^{d_{j},d_{j+1}}$ and the exponent $\Delta=\Delta_{d}^{d_{j},d_{j+1}}$
are explicit. The analogous statement holds also for $\sF^{(x_{0})}[v]$.
\end{description}
\end{thm*}
In practice, this theorem is applied as follows. In specific problems, we are looking
for particular solutions to systems of PDEs of conformal field theory. Typically, the sought
solution has specific M\"obius covariance properties and 
specific boundary conditions as the distance of some of its arguments 
tend to zero. By our correspondence, the task of finding a function with these properties
is translated to the problem of finding a corresponding vector in the tensor product representation.
The different parts (PDE), (COV), and (ASY) of the theorem state that a careful choice
of the vector would ensure the desired properties of the function ---
even the most delicate boundary conditions for the function can be guaranteed by (ASY) if
the vector has appropriate projections to certain subrepresentations.
All of the requirements are explicit
linear conditions on the vector living in a finite dimensional vector space, and we moreover have a variety
of representation theoretical tools at our disposal to solve for such a vector.
By outlining a few case studies in Sections~\ref{sub: example applications}
and~\ref{sub: another example application}, we exemplify how the correspondence thus allows
us to translate the original, possibly rather complicated problem to an explicitly solvable one,
and to eventually express the function of interest explicitly as a linear combination
of integral form functions.

Before example applications, we make a few further observations about the interpretation
of the constructed correspondence, and comparisons to related research.
\begin{itemize}
\item The functions $\sF^{(x_{0})}[v]$ are given by integrals over
auxiliary screening variables,
and slightly informally,
the quantum group can be thought of as acting on the integration surfaces:
the generator $F$ increases the dimension of the integration surface
(i.e., the number of screening variables), and the generator $E$
decreases it.
\item A precise version of the quantum group action on integration surfaces
has been given by Felder and Wieczerkowski, who define an action of
$\Uqsltwo$ on a suitable homology theory \cite{FW-topological_representation_of_Uqsl2}.
Our representation of the quantum group could be obtained from this
homology via a degenerate evaluation. The evaluation in particular
renders the infinite dimensional Verma modules in the work of Felder
and Wieczerkowski into just the finite dimensional irreducible 
representations~$\Wd_{d}$.
\item The special properties of the highest weight vectors can be seen to
arise from the closedness of the associated integration surface ---
a version of Stokes' formula can be used in these cases without boundary
terms (Lemma~\ref{lem: general integration by parts} and 
Corollary~\ref{cor: integration by parts for hwv}).
\item The most common way to obtain M\"obius covariance in the Coulomb gas
formalism of conformal field theory is to ensure a charge neutrality,
which takes into account a background charge. The integrals associated
to vectors in the trivial subrepresentation do not satisfy this simplest
charge neutrality, but rather fall short of it precisely by the amount
of the background charge. Our full M\"obius covariance statement for
these vectors requires a little more work 
(Proposition~\ref{prop: full Mobius covariance}).
\item In the presentation of this article, we have opted for straightforwardness
and self-containedness. We have therefore tried to avoid invoking
results from the literature, where they would require significant
additional theory --- in particular regarding general quantum groups,
the homology theory of \cite{FW-topological_representation_of_Uqsl2},
and the structure of charged Fock spaces and properties of vertex
operators \cite{Felder-BRST_approach,FF-representations,IK-representation_theory_of_the_Virasoro_algebra}.
\end{itemize}

\subsection{\label{sub: example applications}Two example applications to Schramm-Loewner evolutions}

We next illustrate the use of
our main result by
briefly describing two applications, which both arise from the theory
of $\SLE$s.

SLEs are random curves in planar domains that were introduced by Schramm
\cite{Schramm-LERW_and_UST} as candidates of scaling limits of interfaces
in statistical mechanics models at criticality. $\SLE$s are constructed
by a growth process of the curve, encoded in a Loewner chain, in such
a way as to ensure the fundamental properties of conformal invariance
(of the law associated to different domains) and domain Markov property
(which describes the conditional law of the continuation of the curve,
given a segment of it). In all $\SLE$ variants, a single parameter $\kappa>0$
captures some of the most important properties of the curve --- in
physical terms, $\kappa$ determines the universality class of the
underlying statistical mechanics model and the central charge $c(\kappa)$
of the conformal field theory. In our applications, $\kappa$
determines the deformation parameter of the quantum group according
to $q=e^{\ii\pi4/\kappa}$.

The simplest setup of $\SLE$s concerns curves living in a simply connected
domain, starting from one marked boundary point and ending at another.
A classification result, sometimes referred to as ``Schramm's principle'',
states that such random non-self-traversing curves with domain Markov
property and conformally invariant laws are uniquely characterized
by the parameter $\kappa$. The term chordal $\SLEk$ is used for
these random curves. Chordal $\SLEk$, with particular values of $\kappa$,
are known to be the scaling limits of interfaces in the presence of the
simplest (Dobrushin) boundary conditions in various critical models
of statistical mechanics --- see e.g. \cite{Smirnov-critical_percolation,
LSW-LERW_and_UST,
CN-critical_percolation_exploration_path,
Zhan-scaling_limits_of_planar_LERW,
CDHKS-convergence_of_Ising_interfaces_to_SLE}.

\subsubsection{\label{sss: multiple SLEs}\textbf{Application to multiple $\SLE$s}}

Multiple $\SLE$s arise from trying to generalize Schramm's principle
to cases where several interfaces are present: in a simply connected
domain with $2N$ marked boundary points we may have $N$ curves connecting
pairwise the marked points.
Such processes have been studied in
\cite{Dubedat-commutation,BBK-multiple_SLEs,Graham-multiple_SLEs,LK-configurational_measure,PW-Global_multiple_SLEs_and_pure_partition_functions},
and in some cases they are known to be the scaling limits of lattice model
interfaces in the presence of alternating boundary conditions
\cite{Izyurov-PhD_thesis,CS-universality_in_2d_Ising,KS-configurations_of_FK_Ising_interfaces}.

Multiple random curves with conformally invariant laws and domain Markov
properties are no longer specified by the parameter $\kappa$ alone,
but for a fixed $\kappa<8$, a finite dimensional convex set of possible
laws exists. The extremal points of the set of multiple $\SLE$ laws were
called pure geometries in \cite{BBK-multiple_SLEs}. Coulomb gas integrals
for this problem were considered in \cite{Dubedat-Euler_integrals,Kytola-local_mgales}.
The problem of explicit description of the pure geometries is very closely related to
crossing probabilities, for which formulas using Coulomb gas integrals have been recently obtained
in the series of articles \cite{FK-solution_space_for_a_system_of_null_state_PDEs_1,
FK-solution_space_for_a_system_of_null_state_PDEs_2,FK-solution_space_for_a_system_of_null_state_PDEs_3,
FK-solution_space_for_a_system_of_null_state_PDEs_4,
FSKZ-A_formula_for_crossing_probabilities_of_critical_systems_inside_polygons}.
In comparison with the approaches of \cite{Dubedat-Euler_integrals,FK-solution_space_for_a_system_of_null_state_PDEs_1},
the Coulomb gas integrals in our approach have the advantage of treating
all the $2N$ marked points on equal footing, at the expense of increasing
the dimension of the integration surface by one.

As summarized precisely in \cite[Appendix A]{KP-pure_partition_functions_of_multiple_SLEs},
an explicit Loewner chain construction of a multiple $\SLE$ uses a partition
function \cite{Dubedat-commutation,BBK-multiple_SLEs}, which is most
convenient to write down for the reference domain of the upper 
half-plane $\bH=\set{z\in\bC\;\big|\;\im(z)>0}$ as
\begin{align*}
 & \sZ(x_{1},\ldots,x_{2N}),
\end{align*}
where the marked points on the real line $x_{i}\in\bR=\bdry\bH$,
for $i=1,\ldots,2N$, are ordered as 
\begin{align*}
 & x_{1}<\cdots<x_{2N}.
\end{align*}
A stochastic reparametrization invariance of the random curves requires
the PDEs
\begin{align}\label{eq: multiple SLE PDEs}
 & \left[\frac{\kappa}{2}\pdder{x_{i}}+\sum_{j\neq i}\left(\frac{2}{x_{j}-x_{i}}\pder{x_{j}}-\frac{2h}{(x_{j}-x_{i})^{2}}\right)\right]\sZ(x_{1},\ldots,x_{2N})=0\qquad\text{for all }i=1,\ldots,2N,
\end{align}
where $h=\frac{6-\kappa}{2\kappa}$, see \cite{Dubedat-commutation}.
Moreover, conformal invariance of the law of the random curves requires
the following M\"obius covariance: 
\begin{align}\label{eq: multiple SLE Mobius covariance}
\sZ(x_{1},\ldots,x_{2N})=\; & 
\prod_{i=1}^{2N}\Mob'(x_{i})^{h}\times\sZ(\Mob(x_{1}),\ldots,\Mob(x_{2N}))
\end{align}
for any conformal map $\Mob$ of $\bH$ onto itself, which preserves
the order of the marked points.

The pure geometries of multiple $\SLE$s are labeled by planar pair
partitions $\alpha$ of $2N$ points, and the problem is to find the
corresponding partition functions $\sZ_{\alpha}$. The requirements of
covariance~\eqref{eq: multiple SLE Mobius covariance}
and PDEs~\eqref{eq: multiple SLE PDEs} are the same for all pure geometries $\alpha$,
but the boundary conditions depend on $\alpha$.
The asymptotic behavior of $\sZ_{\alpha}$
as $|x_{j+1}-x_{j}|\to0$ depends on whether the points indexed
$j$ and $j+1$ form a pair of $\alpha$ or not
--- see \cite{BBK-multiple_SLEs, KP-pure_partition_functions_of_multiple_SLEs, PW-Global_multiple_SLEs_and_pure_partition_functions}.
More precisely, when $x_{j-1} < \xi < x_{j+2}$, we have
\begin{align}\label{eq: multiple SLE asymptotics}
\lim_{x_{j},x_{j+1}\to\xi}
\frac{\sZ_{\alpha}(x_{1},\ldots,x_{2N})}{|x_{j+1}-x_{j}|^{\Delta}}
=\; & \begin{cases}
0\quad & \text{if }\set{j,j+1}\notin\alpha\\
\sZ_{\hat{\alpha}}(x_{1},\ldots,x_{j-1},x_{j+2},\ldots,x_{2N}) & \text{if }\set{j,j+1}\in\alpha
\end{cases},
\end{align}
where $\Delta=-2h=\frac{\kappa-6}{\kappa}$, and $\hat{\alpha}$ is
the planar pair partition of $2N-2$ points obtained from $\alpha$
by removing the pair $\set{j,j+1}$.

The results of the present article are used in \cite{KP-pure_partition_functions_of_multiple_SLEs}
to explicitly construct solutions to 
\eqref{eq: multiple SLE PDEs}, \eqref{eq: multiple SLE Mobius covariance}, 
and \eqref{eq: multiple SLE asymptotics}, in the form $\sZ_{\alpha} = \sF[v_{\alpha}]$,
where $\sF$ is our spin chain - Coulomb gas correspondence map.
Specifically, one forms the $2N$-fold tensor product of two-dimensional 
irreducibles
$\Wd_{2}$ of the quantum group. The trivial subrepresentation in
this tensor product, consisting of vectors $v\in\Wd_{2}^{\tens2N}$
such that $E.v=0$ and $K.v=v$, is of dimension equal to the Catalan
number $\Catalan_{N}=\frac{1}{N+1}\binom{2N}{N}$, which coincides with
the number of planar pair partitions of $2N$ points. One then wants
to judiciously choose in this subrepresentation $\Catalan_{N}$ linearly
independent vectors
\begin{align*}
 & v_{\alpha}\in\Wd_{2}^{\tens2N},
\end{align*}
indexed by the the planar pair partitions $\alpha$, so that 
$\sZ_{\alpha} = \sF[v_{\alpha}]$ will be the desired multiple $\SLE$
partition functions of the pure geometry $\alpha$.

For vectors $v_{\alpha}\in\Wd_{2}^{\tens2N}$, the $2N$ differential
equations of order two, guaranteed by the (PDE) part of our main theorem,
turn out to be exactly the equations~\eqref{eq: multiple SLE PDEs}
needed for the reparametrization invariance of the multiple $\SLE$. Moreover,
the full M\"obius covariance guaranteed by the (COV) part is
exactly~\eqref{eq: multiple SLE Mobius covariance}, as we wanted.
The main task is then to choose $v_{\alpha}$ in such a way that the
boundary conditions~\eqref{eq: multiple SLE asymptotics} are satisfied.
For this, the (ASY) part will be used.

The asymptotics property (ASY) refers to the decomposition of a tensor
product of two representations into irreducible subrepresentations,
which in this case simply reads
\begin{align*}
\Wd_{2}\tens\Wd_{2}\isom\; & \Wd_{1}\oplus\Wd_{3}.
\end{align*}
For this case, the possible exponent values appearing in the statement
(ASY) are $\Delta_{1}^{2,2}=-2h=\frac{\kappa-6}{\kappa}$ and $\Delta_{3}^{2,2}=\frac{2}{\kappa}>\Delta_{1}^{2,2}$.
Letting $\hat{\pi}_{j,j+1}^{(1)}$ denote the projection determined
by picking the one-dimensional irreducible in the direct sum decomposition
of the tensor product of the $j$:th and $j+1$:st factors, 
Equation~\eqref{eq: multiple SLE asymptotics} will be guaranteed if we have
\begin{align*}
\hat{\pi}_{j,j+1}^{(1)}(v_{\alpha})=\; & \begin{cases}
0\qquad & \text{if }\set{j,j+1}\notin\alpha\\
\frac{1}{B} \, v_{\hat{\alpha}}\quad & \text{if }\set{j,j+1}\in\alpha
\end{cases},
\end{align*}
where $v_{\hat{\alpha}}\in\Wd_{2}^{\tens(2N-2)}$ is the vector corresponding
to the planar pair partition $\hat{\alpha}$ of $2N-2$ points obtained from $\alpha$
by removing the pair $\set{j,j+1}$.

In \cite{KP-pure_partition_functions_of_multiple_SLEs}, this problem
is analyzed in detail, and in particular, it is shown that
there is a unique collection of vectors
$v_{\alpha}$ satisfying the above requirements,
up to an overall normalization. The explicit construction
of the multiple $\SLE$ partition functions $\sZ_{\alpha} = \sF[v_{\alpha}]$
for the pure geometries thus crucially relies on the results of the present article.

\subsubsection{\textbf{Application to boundary visit probabilities for chordal $\SLE$}}

For concrete probabilistic information about the $\SLE$ random curves,
it is natural to study the probabilities for $\SLE$ curves to visit
small neighborhoods of given points.
In fact, probabilities to visit infinitesimal neighborhoods
can be appropriately renormalized to obtain finite
amplitudes known as $\SLE$ Green's functions, see
\cite{LS-natural_parametrization_for_SLE,
AKL-the_Greens_function_for_radial_SLE,
LW-multi_point_Greens_functions_for_SLE,
LZ-SLE_curves_and_natural_parametrization,
Lawler-Minkowski_content_of_the_intersection_of_SLE_wuth_real_line}.
A second application of our method, considered in \cite{JJK-SLE_boundary_visits}, concerns
finding explicit formulas for the order-refined multi-point boundary Green's function of the chordal
SLE, i.e., the probability amplitude for visits to several
boundary points in a prescribed order.

For simplicity, we fix the
domain to be the upper half-plane $\bH=\set{z\in\bC\;\big|\;\im(z)>0}$,
the starting point at $x\in\bR$ and the end point at $\infty$. We order
the points to be visited on the real line $\bR=\bdry\bH$, and label
them by a superscript $-$ or $+$, according to whether they are
on the left or right of the starting point $x$, and thus denote
\begin{align*}
 & y_{L}^{-}<\cdots<y_{2}^{-}<y_{1}^{-}<x<y_{1}^{+}<y_{2}^{+}<\cdots<y_{R}^{+},
\end{align*}
where $L$ and $R$ are the number of points to be visited on the
left and right, respectively. With a given order $\omega$ of visits,
a suitably renormalized probability of visits
\begin{align*}
 & \Ampl_{\omega}(y_{L}^{-},\ldots,y_{1}^{-},x,y_{1}^{+},\ldots,y_{R}^{+})
\end{align*}
can then be defined, see \cite{JJK-SLE_boundary_visits} for details.
The task is to find an explicit expression for it.
This function $\Ampl_{\omega}$ is by construction translation invariant and homogeneous.
It can moreover be argued to satisfy linear homogeneous partial differential
equations: a second order equation, and $L+R$ third order equations.
These PDEs are of the Benoit~\& Saint-Aubin type \eqref{eq: BSA PDE for a function}.
They do not depend on the order $\omega$ of visits,
but again the boundary conditions do. When two of the arguments are close
to each other, the probabilities of visits are asymptotic to similar
ones with one point removed from the list of visits. Here we content
ourselves to noting that these conditions amount to specifying the
asymptotic behavior of $\Ampl_{\omega}$ on a codimension one boundary
of its domain of definition: they concern the cases when either 
$|y_{i+1}^{\pm}-y_{i}^{\pm}|\to0$
or $|y_{1}^{\pm}-x|\to0$. For the full list of equations and boundary
conditions for this chordal $\SLE$ boundary visit question, we refer
to \cite{JJK-SLE_boundary_visits}.

To apply the correspondence of the present article to the problem of finding
these boundary visit probabilities of chordal $\SLE$, one considers the tensor product
representation
\begin{align*}
 & \Wd_{3}^{\tens R}\tens\Wd_{2}\tens\Wd_{3}^{\tens L}.
\end{align*}
Again the task is to judiciously choose vectors $v_{\omega}$ in it,
such that the multi-point boundary Green's function corresponding to
the visit order $\omega$ can be obtained in the form 
$\Ampl_{\omega} = \sF[v_{\omega}]$.
The highest weight vector condition $E.v_{\omega}=0$ guarantees, by
the (PDE) part of our main theorem, the desired second and third order partial differential
equations, and the Cartan generator eigenvalue equation 
$K.v_{\omega}=q\, v_{\omega}$
guarantees, by the (COV) part, the correct homogeneity degree in
addition to translation invariance. The most subtle requirement
is the boundary conditions when either $|y_{i+1}^{\pm}-y_{i}^{\pm}|\to0$
or $|y_{1}^{\pm}-x|\to0$. The (ASY) part is again suitable for
this purpose: the decomposition $\Wd_{3}\tens\Wd_{3}\isom\Wd_{1}\oplus\Wd_{3}\oplus\Wd_{5}$
applies to the case $|y_{i+1}^{\pm}-y_{i}^{\pm}|\to0$, and the decompositions
$\Wd_{3}\tens\Wd_{2}\isom\Wd_{2}\oplus\Wd_{4}$ and $\Wd_{2}\tens\Wd_{3}\isom\Wd_{2}\oplus\Wd_{4}$
apply respectively to the cases $|y_{1}^{+}-x|\to0$ and $|x-y_{1}^{-}|\to0$.
SLE boundary visits are treated in more detail 
in~\cite{JJK-SLE_boundary_visits},
and the requirements for the vectors $v_{\omega}$
are shown to have a unique solution in general in \cite[Section~5]{KP-pure_partition_functions_of_multiple_SLEs}.
Again, the main results of the
present article are thus instrumental for finding the explicit formulas
for these order-refined multi-point boundary visit probabilities of
the chordal $\SLE$.

\subsection{\label{sub: another example application}Application to monodromy invariant bulk correlation functions}

The first applications of the spin chain~- Coulomb gas correspondence described above
concern boundary correlation functions relevant for random conformally invariant curves.
We next briefly describe an application to bulk correlation functions of conformal field theory,
presented in detail in \cite{FP-monodromy_invariant_correlations}.

In statistical mechanics models at criticality, the scaling limits of both boundary and
bulk correlation functions are argued to be described by conformal field theory, and
in particular, to be M\"obius covariant and satisfy partial differential equations such as the
ones studied in this article. The solutions of the partial differential
equations are in general multivalued, whereas bulk correlation functions
in statistical mechanics models are usually manifestly single-valued.
In \cite{FP-monodromy_invariant_correlations},
Flores and Peltola show how to use the spin chain~- Coulomb gas correspondence 
of the present article
to construct such bulk correlation functions of conformal field theory.
The key idea is again to translate the task of finding the appropriate functions 
to the task of finding vectors with corresponding properties in a finite-dimensional
representation of the quantum group.

In the construction of \cite{FP-monodromy_invariant_correlations},
the bulk correlation function of $n$ spinless primary fields of conformal
dimensions $2 h_{1,d_j}$, with $j=1, \ldots, n$, is associated to a vector
in the tensor product representation 
$\big( \bigotimes_{j=1}^n \Wd_{d_j} \big)^{\tens 2}$.
The function associated to a simple tensor $v \tens \bar{v}$, with
$v,\bar{v} \in \bigotimes_{j=1}^n \Wd_{d_j}$, is
\begin{align}\label{eq: bulk construction}
\sF^{(x_0)}[v](z_1 , \ldots, z_n) \times \sF^{(x_0)}[\bar{v}](\bar{z}_1 , \ldots, \bar{z}_n) ,
\end{align}
where the two factors
are the analytic continuations of the functions
$\sF^{(x_0)}[v], \sF^{(x_0)}[\bar{v}] \colon \chamber_n^{(x_0)} \to \bC$
constructed and studied in the present article.
For a function of the form~\eqref{eq: bulk construction},
the partial differential equations and covariance required of bulk correlation functions 
are guaranteed by (PDE) and (COV) parts
of our main theorem, respectively,
when the components $v$ and $\bar{v}$ lie in the trivial subrepresentation of
$\bigotimes_{j=1}^n \Wd_{d_j}$.

In general, however, the function~\eqref{eq: bulk construction} is multivalued, and
its monodromy around a loop that winds $z_j$ positively around $z_{j+1}$
is given by the action of the R-matrix of the quantum group on the $j$:th
and $j+1$:st components of $v$ and a conjugate R-matrix on the $j$:th
and $j+1$:st components of $\bar{v}$.
Arbitrary monodromies are described by the representation of the pure braid 
group
on $n$ strands generated by such loops.
The function is single-valued precisely when the corresponding vector is
invariant under the pure braid group.
Using a generalization of a quantum Schur-Weyl duality~\cite{FP-quantum_Schur_Weyl}, Flores and Peltola show
the uniqueness (up to scalar multiples) of 
a vector $\sum_k v_k \tens \bar{v}_k$ which is both
braiding invariant and for which $v_k , \bar{v}_k$ lie in the trivial subrepresentation.
This gives the uniqueness of and an explicit expression for the single-valued
bulk correlation function.

The spin chain~- Coulomb gas correspondence developed in the present article thus
also underlies the construction and analysis in \cite{FP-monodromy_invariant_correlations}
of the monodromy invariant bulk correlation function of conformal field theory.

\subsection{Organization of this article}

In Section~\ref{sec: q stuff}, we introduce notation, fix conventions,
and prove auxiliary results about $q$-combinatorics, the quantum
group $\Uqsltwo$, its representations, and their tensor products.
Section~\ref{sec: various integral functions} contains definitions
and properties of functions that are used in defining and studying
the correspondence. This part does not use the quantum group in any
way --- only some $q$-combinatorial lemmas are used.
Section~\ref{sec: FW correspondence}
begins with the definition of the correspondence, and proceeds with the
proofs of the properties stated in the main theorems in the order that we
have found the most straightforward. It concludes with the precise
statement of the main result, Theorems~\ref{thm: SCCG correspondence non-hwv}
and \ref{thm: SCCG correspondence hwv}.
In Section~\ref{sec: further properties}, we treat two further
properties: a generalization of the asymptotics statement and
a hidden manifestation of the periodicity of the domain boundary.
From the main text, we postpone some necessary but routine
contour manipulation arguments to Appendix~\ref{app: contour manipulations},
and the lengthy proof of a lemma about differential operators to
Appendix~\ref{app: exact form lemma}.
Finally, in Appendix~\ref{app: alternative conventions}, 
we give formulations of our main results with alternative conventions,
which may occasionally appear more natural.

\subsection*{Acknowledgments}

We thank Michel Bauer, Denis Bernard, Dmitry Chelkak, Steven Flores,
Philippe Di Francesco, Azat Gainutdinov, Christian Hagendorf,
Cl\'ement Hongler, Konstantin Izyurov, Niko Jokela, Matti J\"arvinen,
Peter Kleban, Hubert Saleur, and Jacob Simmons for interesting discussions,
useful comments, and suggested improvements. We also thank the anonymous referees
for useful suggestions.

This work was supported by the Academy of Finland. E.P. was supported by
the Finnish National Doctoral Programme in Mathematics and its Applications
and Vilho, Yrj\"o and Kalle V\"ais\"al\"a Foundation. The work was carried out
while E.P. was affiliated with the University of Helsinki.

\bigskip{}

\section{\label{sec: q stuff}The quantum group}

The main purpose of this section is to fix notation and conventions
about the quantum group $\Uqsltwo$. We also include auxiliary results
of $q$-combinatorial flavor, which are needed later on in the article.

\subsection{\label{sub: q numbers and combinatorics}Q-numbers and some combinatorial formulas}

Let $q\in\bC\setminus\set 0$, and assume further that $q$ is not
a root of unity, i.e. $q^{m}\neq1$ for all $m\in\bZ\setminus\set 0$.
Define, for $m\in\bZ$ and for $n,k\in\bN$, $0\leq k\leq n$, the
$q$-integers as
\begin{align}
\qnum m=\; & \frac{q^{m}-q^{-m}}{q-q^{-1}} \label{eq: q num basic def}\\
=\; & q^{m-1}+q^{m-3}+\cdots+q^{3-m}+q^{1-m}, \label{eq: q num geometric series}
\end{align}
the $q$-factorials as
\begin{align}\label{eq: q factorial}
\qfact n=\; & \prod_{m=1}^{n}\qnum m,
\end{align}
and the $q$-binomial coefficients as
\begin{align}\label{eq: q binomial coefficient}
\qbin nk=\; & \frac{\qfact n}{\qfact k \qfact{n-k}}.
\end{align}

The following ``$q$-combinatorial formulas'' will be used in later
calculations: specifically,
part~(a) will be used in Lemma~\ref{lem: general integration by parts},
part~(b) in Lemma~\ref{lem: simplex in terms of hypercube},
part~(c) in Lemma~\ref{lem: two variable FW basis function},
and part~(d) in Lemma~\ref{lem: two point closed integration surface}.

\begin{lem}\label{lem: q-combinatorics}\
\begin{description}
\item [{(a)}] We have
\begin{align*}
\qnum{\ell}\qnum{d-\ell}
=\; & \frac{1}{q-q^{-1}}\sum_{u=0}^{\ell-1}(q^{d-1-2u}-q^{-d+1+2u}).
\end{align*}

\item [{(b)}] For a permutation $\sigma\in \symgrp_{n}$ of $\set{1,\ldots,n}$,
denote the set of inversions of $\sigma$ by
\begin{align*}
{\rm inv}(\sigma)=\; & \set{(i,j) \;\big|\; i<j\text{ and }\sigma(i)>\sigma(j)}.
\end{align*}
Then we have 
\begin{align*}
\sum_{\sigma\in \symgrp_{n}}q^{-2\times\#{\rm inv}(\sigma)}
=\; & q^{-{\binom{n}{2}}}\qfact n.
\end{align*}

\item [{(c)}] We have
\begin{align*}
\sum_{1\leq r_{1}<r_{2}<\cdots<r_{k}\leq n}q^{-2\sum_{j=1}^{k}(r_{j}-j)}
=\; & q^{-k(n-k)}\qbin nk.
\end{align*}

\item [{(d)}] We have, for any $\beta$ and $n\in\bN$
\begin{align*}
\sum_{m=0}^{n}\qbin nm(-1)^{m}q^{m\beta}
=\; & q^{\half n\beta}\prod_{s=0}^{n-1}(q^{\half(n-1-\beta)-s}-q^{\half(\beta+1-n)+s}).
\end{align*}

\end{description}
\end{lem}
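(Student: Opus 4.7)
All four identities are standard $q$-combinatorial facts; the main subtlety throughout is bookkeeping of powers of $q$, because this paper uses the symmetric convention $\qnum{m} = (q^m - q^{-m})/(q - q^{-1})$ rather than the standard $[m]_p = (p^m - 1)/(p-1)$. The two are related by $\qnum{m} = q^{-(m-1)} [m]_{q^2}$, and this single observation already foreshadows the prefactors $q^{-\binom{n}{2}}$ in (b), $q^{-k(n-k)}$ in (c), and $q^{n\beta/2}$ in (d).

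For (a), I would multiply both sides by $(q-q^{-1})^2$ and expand. The left-hand side becomes $(q^\ell - q^{-\ell})(q^{d-\ell} - q^{-(d-\ell)}) = q^d - q^{d-2\ell} - q^{2\ell - d} + q^{-d}$. On the right-hand side, one factor of $(q - q^{-1})$ is absorbed into each summand and the two sums telescope cleanly, e.g.\ $\sum_{u=0}^{\ell-1}(q^{d-2u} - q^{d-2-2u}) = q^d - q^{d-2\ell}$, and analogously for the sum with negative exponents. Comparing the four terms proves the identity.

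For (b), I would apply the classical Mahonian identity $\sum_{\sigma \in \symgrp_n} p^{\#\mathrm{inv}(\sigma)} = \prod_{k=1}^n (1 + p + \cdots + p^{k-1})$ with $p = q^{-2}$, then use $1 + q^{-2} + \cdots + q^{-2(k-1)} = q^{-(k-1)} \qnum{k}$ to collapse the product into $q^{-\binom{n}{2}} \qfact{n}$. For (c), the substitution $s_j = r_j - j$ puts the sum in bijection with partitions $\lambda = (s_k, \ldots, s_1)$ fitting in a $k \times (n-k)$ rectangle, weighted by $(q^{-2})^{|\lambda|}$. This generating function is by definition the Gaussian binomial $\binom{n}{k}_{q^{-2}}$, and the same conversion as in (b) applied to the three $q$-factorials, together with the elementary identity $\binom{k}{2} + \binom{n-k}{2} - \binom{n}{2} = -k(n-k)$, turns it into $q^{-k(n-k)} \qbin{n}{k}$.

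For (d), I would invoke the $q$-binomial theorem $\prod_{i=0}^{n-1}(1 - zp^i) = \sum_{m=0}^n \binom{n}{m}_p (-1)^m p^{\binom{m}{2}} z^m$ with $p = q^2$ and $z = q^{\beta - n + 1}$. Converting $\binom{n}{m}_{q^2} = q^{m(n-m)} \qbin{n}{m}$ and simplifying, the exponent of $q$ on each summand collapses to $m\beta$, which recovers the stated left-hand side. On the product side, factoring each $1 - q^{\beta - n + 1 + 2s}$ symmetrically as $q^{-(\beta - n + 1 + 2s)/2} \bigl( q^{(n-1-\beta)/2 - s} - q^{(\beta + 1 - n)/2 + s} \bigr)$ and collecting the common prefactors over $s = 0, \ldots, n-1$ yields precisely the $q^{n\beta/2}$ outside the product on the right-hand side of (d). The only real obstacle in any of the four parts is to carry the various symmetric-versus-standard $q$-prefactors consistently through these manipulations.
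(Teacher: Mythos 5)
Your proposal is correct in all four parts; I checked the telescoping in (a), the prefactor bookkeeping $\binom{n}{2}-\binom{k}{2}-\binom{n-k}{2}=k(n-k)$ in (b) and (c), and the specialization $p=q^{2}$, $z=q^{\beta-n+1}$ in (d), and everything matches. For (a) your computation is essentially the paper's (the paper expands $\qnum{\ell}$ as a geometric series and multiplies out; your telescoping is the same calculation read in the other direction). For (b), (c) and (d), however, you take a genuinely different route. The paper is deliberately self-contained: it \emph{proves} the Mahonian identity by the inversion-table bijection (grouping inversions of $\sigma$ by their smaller index and expanding $\prod_{k=1}^{n}\sum_{l=0}^{k-1}q^{-2l}$), it proves (c) by establishing the $q$-Pascal recursion $L_{k}^{(n)}=q^{-2(n-k)}L_{k-1}^{(n-1)}+L_{k}^{(n-1)}$ for both sides, and it proves (d) by induction on $n$ via $R^{(n+1)}(\beta)=(1-q^{n+\beta})R^{(n)}(\beta-1)$ together with a $q$-Pascal identity. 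You instead quote three classical facts — the Mahonian generating function, the interpretation of the Gaussian binomial $\binom{n}{k}_{p}$ as the generating function of partitions in a $k\times(n-k)$ box (via $s_{j}=r_{j}-j$), and the $q$-binomial theorem — and reduce everything to converting between the standard convention $[m]_{p}=(p^{m}-1)/(p-1)$ at $p=q^{\pm2}$ and the symmetric convention $\qnum{m}$. Your organizing observation $\qnum{m}=q^{-(m-1)}[m]_{q^{2}}$ is a nice unifying thread that explains all the prefactors at once; the cost is that the argument is no longer self-contained, which the paper explicitly tries to be. Either presentation is acceptable, but if the cited identities are not proved elsewhere in the text, the paper's recursive proofs of (c) and (d) are the more appropriate choice for this document.
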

\begin{proof}
For part (a), use the definition $\qnum{d-\ell}=\frac{q^{d-\ell}-q^{\ell-d}}{q-q^{-1}}$
and the finite geometric series $\qnum{\ell}=q^{\ell-1}+q^{\ell-3}+\cdots+q^{-\ell+1}$
to get the asserted formula
\begin{align*}
(q-q^{-1})\qnum{\ell}\qnum{d-\ell}
=\; & (q^{d-\ell}-q^{\ell-d})(q^{\ell-1}+q^{\ell-3}+\cdots+q^{-\ell+1})\\
=\; & (q^{d-1}+q^{d-3}+\cdots+q^{d-2\ell+1})-(q^{2\ell-1-d}+q^{2\ell-3-d}+\cdots+q^{-d+1}).
\end{align*}
For part (b), use the same finite geometric series to rewrite the
right hand side as
\begin{align*}
q^{-\binom{n}{2}}\qfact n
=\; & \prod_{k=1}^{n}\left(\sum_{l=0}^{k-1}q^{-2l}\right).
\end{align*}
The inversions $(i,j)\in{\rm inv}(\sigma)$ of $\sigma\in \symgrp_{n}$
can be grouped according to the smaller index $i$: we have
\begin{align*}
\#{\rm inv}(\sigma)
=\; & \sum_{i=1}^{n}\#{\rm inv}_{i}(\sigma),\qquad\text{where }\\
{\rm inv}_{i}(\sigma)=\; & \set{j \;\big|\; j>i\text{ and }\sigma(j)<\sigma(i)}.
\end{align*}
In fact, permutations $\sigma$ are in bijection with the sequences
$\big(\#{\rm inv}_{i}(\sigma)\big)_{i=1}^{n}$. Then, in the expansion
of the product $\prod_{k=1}^{n}(\sum_{l=0}^{k-1}q^{-2l})$ the choice
of the term in the $k$:th factor can be thought of as corresponding
to the choice of $\#{\rm inv}_{n+1-k}(\sigma)$. The product is thus
expanded as a sum over permutations $\sigma$, with coefficients $\prod_{i=1}^{n}q^{-2\#{\rm inv}_{i}(\sigma)}=q^{-2\#{\rm inv}(\sigma)}$.

The proofs of (c) and (d) are both based on $q$-Pascal triangles.
Let $L_{k}^{(n)}$ and $R_{k}^{(n)}$ denote the left hand side and
right hand side of assertion (c), respectively. Splitting the sum
defining $L_{k}^{(n)}$ according to whether $r_{k}=n$ or $r_{k}<n$,
we obtain a Pascal triangle type recursion
\begin{align*}
L_{k}^{(n)}=\; & q^{-2(n-k)}L_{k-1}^{(n-1)}+L_{k}^{(n-1)}.
\end{align*}
A straightforward calculation, using the definition~\eqref{eq: q num basic def}
of $q$-integers, shows that we also have the recursion
$R_{k}^{(n)}=q^{-2(n-k)}R_{k-1}^{(n-1)}+R_{k}^{(n-1)}$.
The equality $L_{k}^{(n)}=R_{k}^{(n)}$ follows from this recursion,
together with the initial observation $L_{k}^{(1)}=R_{k}^{(1)}$.

For the proof of (d), proceed by induction on $n$. For $n=0$ both
sides are equal to $1$. Let $R^{(n)}(\beta)$ denote the right hand
side, and note that we can write
\begin{align*}
R^{(n+1)}(\beta)=\; & (1-q^{n+\beta})\times R^{(n)}(\beta-1).
\end{align*}
Assuming the asserted formula for $R^{(n)}(\beta-1)$, we expand in
powers of $q^{\beta}$ as follows
\begin{align*}
R^{(n+1)}(\beta)
=\; & \sum_{m}(-1)^{m}q^{m\beta}\left(q^{-m}\qbin nm+q^{1+n-m}\qbin n{m-1}\right).
\end{align*}
It now suffices to apply the following simple identity
\begin{align*}
q^{-m}\qbin nm+q^{1+n-m}\qbin n{m-1}=\; & \qbin{n+1}m.
\end{align*}
\end{proof}

\subsection{\label{sub: the quantum group}The quantum group and its representations}

We now give a definition of the quantum group $\Uqsltwo$ by generators
and relations. We also concretely describe the irreducible representations
$\Wd_{d}$, and record needed results about the decompositions of their tensor
products.

\subsubsection{\label{subsub: definition of the quantum group}\textbf{Definition of the quantum group}}

The quantum group $\Uqsltwo$ is the associative unital algebra
over $\bC$ generated by $E,F,K,K^{-1}$ subject to the relations
\begin{align}
KK^{-1}=\; & 1=K^{-1}K,\qquad KE=q^{2}EK,\qquad KF=q^{-2}FK,
\label{eq: quantum group relations}\\
EF-FE=\; & \frac{1}{q-q^{-1}}\left(K-K^{-1}\right).\nonumber 
\end{align}
There is a unique Hopf algebra structure on $\Uqsltwo$ with the coproduct,
an algebra homomorphism
\begin{align*}
\Hcp\;\colon\; & \Uqsltwo\rightarrow\Uqsltwo\tens\Uqsltwo,
\end{align*}
given on the generators by the expressions
\begin{align}\label{eq: coproduct}
\Hcp(E) = E\tens K+1\tens E, \qquad \Hcp(K) =  K\tens K, \qquad
\Hcp(F) = F\tens1+K^{-1}\tens F. 
\end{align}
With the coproduct, we make the tensor product of two representations
$M'$ and $M''$ again a representation. The action of $\Uqsltwo$
on $M'\tens M''$ is defined so that if
\begin{align*}
\Hcp(X)=\; & \sum_{i}X_{i}'\tens X_{i}''\,\in\,\Uqsltwo\tens\Uqsltwo
\end{align*}
and $v'\in M'$, $v''\in M''$, then
\begin{align*}
X.(v'\tens v'')=\; & \sum_{i}(X_{i}'.v')\tens(X_{i}''.v'')\,\in\, M'\tens M''.
\end{align*}
Note that we generally cannot canonically identify $M'\tens M''$
with $M''\tens M'$ as representations, because the coproduct $\Hcp$
is not cocommutative. However, the coproduct is coassociative, that
is $(\id\tens\Hcp)\circ\Hcp=(\Hcp\tens\id)\circ\Hcp$, and therefore
the canonical identification $(M'\tens M'')\tens M'''\isom M'\tens(M''\tens M''')$
is an isomorphism of representations. More generally, we may talk
about multiple tensor products without specifying the positions of
parentheses. For calculations with $n$-fold tensor products, one needs
the $(n-1)$-fold coproduct 
$\Hcp^{(n)}\colon\Uqsltwo\rightarrow\Big(\Uqsltwo\Big)^{\tens n}$
\begin{align*}
\Hcp^{(n)}=\; & (\Hcp\tens\id^{\tens(n-2)})\circ(\Hcp\tens\id^{\tens(n-3)})\circ\cdots\circ(\Hcp\tens\id)\circ\Hcp.
\end{align*}

We record the following easily verified expressions
for the $(n-1)$-fold coproducts of the generators
for later use in e.g. Lemmas~\ref{lem: asymptotics for tensor basis in two consequtive} and~\ref{lem: isomorphism of hwv spaces},
Propositions~\ref{prop: full Mobius covariance} and~\ref{prop: general asymptotics with subrepresentations},
and Corollary~\ref{cor: integration by parts for hwv}.
\begin{lem}\label{lem: multiple coproducts}
We have
\begin{align*}
\Hcp^{(n)}(K)=\; & K^{\tens n}\\
\Hcp^{(n)}(E)=\; & \sum_{i=1}^{n}1^{\tens(n-i)}\tens E\tens K^{\tens(i-1)}\\
\Hcp^{(n)}(F)=\; & \sum_{i=1}^{n}(K^{-1})^{\tens(n-i)}\tens F\tens1^{\tens(i-1)}.
\end{align*}
\end{lem}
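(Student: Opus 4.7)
The plan is to prove all three identities by induction on $n$, using the recursive definition $\Hcp^{(n)} = (\Hcp \tens \id^{\tens(n-2)}) \circ \Hcp^{(n-1)}$, with base case $n=2$ where $\Hcp^{(2)} = \Hcp$ is supplied directly by the formulas \eqref{eq: coproduct}. Since the coproduct is an algebra homomorphism into a tensor square, the only input one needs in the inductive step is how $\Hcp$ acts on the three generators $K$, $E$, $F$ and on the unit $1$ (for which $\Hcp(1) = 1 \tens 1$).

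For the first identity, since $\Hcp(K) = K \tens K$, applying $\Hcp \tens \id^{\tens(n-2)}$ to the inductive hypothesis $\Hcp^{(n-1)}(K) = K^{\tens(n-1)}$ immediately yields $K^{\tens n}$. For the second identity, suppose $\Hcp^{(n-1)}(E) = \sum_{i=1}^{n-1} 1^{\tens(i-1)} \tens E \tens K^{\tens(n-1-i)}$ and apply $\Hcp \tens \id^{\tens(n-2)}$ term by term. The $i=1$ summand contributes $\Hcp(E) \tens K^{\tens(n-2)} = E \tens K^{\tens(n-1)} + 1 \tens E \tens K^{\tens(n-2)}$, which accounts for the $i=1$ and $i=2$ terms of $\Hcp^{(n)}(E)$. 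Each summand with $i \geq 2$ has $1$ in its first tensor slot, so it transforms into $1 \tens 1^{\tens(i-1)} \tens E \tens K^{\tens(n-1-i)} = 1^{\tens i} \tens E \tens K^{\tens(n-1-i)}$, which after reindexing $i' = i+1$ covers the remaining terms ($i' = 3, \ldots, n$) of the asserted sum. The $F$-identity is proved in the same manner, using $\Hcp(F) = F \tens 1 + K^{-1} \tens F$ on the $i=1$ summand and $\Hcp(K^{-1}) = K^{-1} \tens K^{-1}$ on the remaining summands.

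There is no real obstacle here: the only thing to be careful about is the bookkeeping in the inductive step, namely the reindexing that splits one summand of the $(n-1)$-fold coproduct into two summands of the $n$-fold coproduct while shifting the rest by one position. Coassociativity of $\Hcp$ ensures that the particular bracketing $\Hcp^{(n)} = (\Hcp \tens \id^{\tens(n-2)}) \circ \Hcp^{(n-1)}$ chosen for the induction yields the same answer as any other, so the formulas are canonical statements about the $n$-fold coproduct.
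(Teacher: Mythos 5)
Your induction is correct and complete: the base case $n=2$ is just \eqref{eq: coproduct}, and the inductive step correctly uses that $\Hcp$ applied to the leftmost slot sends the $i=1$ summand to two terms (via $\Hcp(E)=E\tens K+1\tens E$, resp.\ $\Hcp(F)=F\tens1+K^{-1}\tens F$) and each $i\geq2$ summand to a single shifted term (via $\Hcp(1)=1\tens1$, resp.\ $\Hcp(K^{-1})=K^{-1}\tens K^{-1}$). The paper states this lemma as ``easily verified'' and omits the proof, and your argument is exactly the standard verification it has in mind.
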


\subsubsection{\label{subsub: irreps of the quantum group}\textbf{Irreducible representations of the quantum group}}

We will use representations which can be thought of as $q$-deformations
of the irreducible representations of the semisimple Lie algebra $\sl_{2}$.
For the statement of the lemma below, recall the definition of $q$-integers
$\qnum m$ from Equation~\eqref{eq: q num basic def}.
\begin{lem}\label{lem: representations of quantum sl2}
For every positive integer
$d$, there is an irreducible representation $\Wd_{d}$ of $\Uqsltwo$
with a basis $\Wbas_{0}^{(d)},\Wbas_{1}^{(d)},\ldots,\Wbas_{d-1}^{(d)}$
and the action of the generators defined by
\begin{align*}
K.\Wbas_{l}^{(d)}=\; & q^{d-1-2l}\,\Wbas_{l}^{(d)}\\
F.\Wbas_{l}^{(d)}=\; & \begin{cases}
\Wbas_{l+1}^{(d)} & \text{if }l\neq d-1\\
0 & \text{if }l=d-1
\end{cases}\\
E.\Wbas_{l}^{(d)}=\; & \begin{cases}
\qnum l\qnum{d-l} \Wbas_{l-1}^{(d)} & \text{if }l\neq0\\
0 & \text{if }l=0
\end{cases}.
\end{align*}
Any $d$-dimensional irreducible representation of $\Uqsltwo$, where
the $K$-eigenvalues are integer powers of $q$, is isomorphic to
$\Wd_{d}$.\end{lem}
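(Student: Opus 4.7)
The plan is to prove the lemma in three steps: first construct $\Wd_d$ explicitly and verify it is a representation, then establish its irreducibility, and finally show uniqueness by exhibiting a highest weight vector in any candidate representation.

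For existence, I would take the $d$-dimensional vector space with the prescribed basis $\Wbas_0^{(d)},\ldots,\Wbas_{d-1}^{(d)}$, define the operators $E,F,K,K^{-1}$ by the formulas in the statement (with the boundary conventions $\Wbas_{-1}^{(d)}=0=\Wbas_{d}^{(d)}$), and check the defining relations \eqref{eq: quantum group relations}. The relations $KK^{-1}=K^{-1}K=1$, $KE=q^{2}EK$ and $KF=q^{-2}FK$ are immediate from the diagonal action of $K$. The nontrivial relation $EF-FE=\frac{K-K^{-1}}{q-q^{-1}}$ is verified on each $\Wbas_j^{(d)}$: a direct computation gives
\[
(EF-FE).\Wbas_j^{(d)} = \bigl(\qnum{j+1}\qnum{d-j-1}-\qnum{j}\qnum{d-j}\bigr)\Wbas_j^{(d)},
\]
and applying Lemma~\ref{lem: q-combinatorics}(a) twice (once with $\ell=j+1$ and once with $\ell=j$) makes the difference telescope to $\qnum{d-1-2j}$, which is exactly the eigenvalue of $\frac{K-K^{-1}}{q-q^{-1}}$ on $\Wbas_j^{(d)}$.

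For irreducibility, note that since $q$ is not a root of unity the $K$-eigenvalues $q^{d-1-2j}$, $j=0,\ldots,d-1$, are pairwise distinct, so any subrepresentation $W\subseteq\Wd_d$ is a sum of $K$-eigenlines, hence spanned by a subset of $\{\Wbas_j^{(d)}\}$. If $W\neq 0$, pick some $\Wbas_j^{(d)}\in W$; repeatedly applying $E$ lowers the index by one with nonzero coefficient $\qnum{j}\qnum{d-j}$ (nonzero because $q$ is not a root of unity and $1\le j, d-j\le d-1$), reaching $\Wbas_0^{(d)}$, and then repeated application of $F$ produces every $\Wbas_k^{(d)}$, so $W=\Wd_d$.

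For uniqueness, let $V$ be a $d$-dimensional irreducible representation whose $K$-eigenvalues lie in $\{q^{m}\mid m\in\bZ\}$. Since $E$ shifts $K$-eigenvalues by a factor of $q^{2}$ and $V$ is finite dimensional, some vector must be annihilated by $E$; choose a highest weight vector $v_0$ with $E.v_0=0$ and $K.v_0=q^{s}\,v_0$. Set $v_j:=F^{j}.v_0$, so that $K.v_j=q^{s-2j}v_j$. By induction on $j$, using $EF=FE+\frac{K-K^{-1}}{q-q^{-1}}$ and Lemma~\ref{lem: q-combinatorics}(a) in exactly the telescoping form used above (now with parameter $s+1$ in place of $d$), one obtains
\[
E.v_j = \qnum{j}\qnum{s-j+1}\,v_{j-1}.
\]
The nonzero $v_j$ lie in distinct $K$-eigenspaces hence are linearly independent; their span is stable under $E,F,K$ and nonzero, so by irreducibility it is all of $V$. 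Let $r$ be the largest index with $v_r\neq 0$, so $v_{r+1}=0$ and $\dim V=r+1=d$. Applying the formula to $v_{r+1}$ forces $\qnum{r+1}\qnum{s-r}v_r=0$, and since $\qnum{r+1}\neq 0$ we conclude $\qnum{s-r}=0$, i.e.\ $q^{2(s-r)}=1$, which gives $s=r=d-1$. The linear map $\Wbas_j^{(d)}\mapsto v_j$ then intertwines the actions of $E,F,K$, providing the desired isomorphism $\Wd_d\isom V$.

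The main obstacle is the inductive computation of $E.v_j$ in the uniqueness step (and its direct analogue in existence), since both rest entirely on the telescoping identity of Lemma~\ref{lem: q-combinatorics}(a); everything else is essentially bookkeeping about distinct $K$-eigenvalues that is enabled by the standing assumption that $q$ is not a root of unity.
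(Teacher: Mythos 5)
Your proposal is correct and follows essentially the same route as the paper: verify the relations directly on the basis (the only nontrivial one being $EF-FE$, which reduces to the telescoping identity of Lemma~\ref{lem: q-combinatorics}(a)), deduce irreducibility from the distinctness and non-vanishing of the $q$-integers, and prove uniqueness by producing a highest weight vector $v_{0}$ and showing that the span of $F^{j}.v_{0}$ realizes $\Wd_{d}$ --- the ``standard calculation'' the paper leaves implicit, which you carry out correctly.
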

\begin{proof}
It is easy to check that the formulas defining the action respect
the relations~\eqref{eq: quantum group relations}. Moreover, since
the $q$-integers $\qnum m$ are non-vanishing for $m\neq0$, $\Wd_{d}$
is clearly irreducible. If $V$ is an irreducible representation,
then $K$ is diagonalizable on $V$, because the sum of $K$-eigenspaces
is a subrepresentation. If in a finite dimensional irreducible representation
$V$ the $K$-eigenvalues are integer powers of $q$, then an eigenvector
$v_{0}$ of $K$ of eigenvalue $q^{m_{0}}$ with $m_{0}\in\bZ$ maximal
must be annihilated by $E$. By a standard calculation one then shows
that the linear span of the $m_{0}+1$ vectors $v_{0},F.v_{0},F^{2}.v_{0},\ldots,F^{m_{0}}.v_{0}$
forms a subrepresentation isomorphic to $\Wd_{m_{0}+1}$. The last
assertion follows from this.
\end{proof}

\subsubsection{\label{subsub: quantum Clebsch-Gordan}\textbf{Tensor products of the irreducible representations}}

Tensor products of the representations defined in 
Section~\ref{subsub: irreps of the quantum group}
decompose to direct sums of irreducible subrepresentations of the
same type. Concrete descriptions of such decompositions, as
given in the following $q$-analogue of the Clebsch-Gordan formulas,
will be needed especially in
Lemmas~\ref{lem: two point closed integration surface}~--~\ref{lem: asymptotics for tensor basis in two consequtive},
Propositions~\ref{prop: asymptotics with subrepresentations}~--~\ref{prop: 
closed integration surface}, and in
Section~\ref{sub: cyclic permutations}.
Recall that the action of $\Uqsltwo$ on tensor products is defined
using the coproduct~\eqref{eq: coproduct}, and recall also the definition
of $q$-factorials $\qfact n$ from Equation~\eqref{eq: q factorial}.
\begin{lem}\label{lem: tensor product representations of quantum sl2}
Consider the tensor product representation $\Wd_{d''}\tens\Wd_{d'}$.
For any 
\begin{align*}
m\in\; & \set{0,1,\ldots,\min(d',d'')-1},
\end{align*}
denote $d=d'+d''-1-2m$, and define
\begin{align}\label{eq: tensor product hwv coefficients}
T_{0;m}^{l',l''} = 
T_{0;m}^{l',l''}(d',d'')=\; & \delta_{l'+l'',m}\times(-1)^{l'}\frac{\qfact{d'-1-l'}
\,\qfact{d''-1-l''}}{\qfact{l'}\qfact{d'-1}\qfact{l''}\qfact{d''-1}}
\,\frac{q^{l'(d'-l')}}{(q-q^{-1})^{m}}
\end{align}
Then the vector
\begin{align}\label{eq: tensor product hwv}
\Tbas_{0}=\; & \Tbas_{0}^{(d;d',d'')}
=\sum_{l',l''}T_{0;m}^{l',l''} 
\times(\Wbas_{l''}\tens\Wbas_{l'})
\end{align}
satisfies $E.\Tbas_{0}=0$ and $K.\Tbas_{0}=q^{d-1}\,\Tbas_{0}$ (i.e.,
$\tau_{0}$ is a highest weight vector). The subrepresentation of
$\Wd_{d''}\tens\Wd_{d'}$ generated by $\Tbas_{0}$ is isomorphic
to $\Wd_{d}$.

The tensor product representation decomposes to a direct sum of irreducibles
\begin{align}\label{eq: decomposition of tensor product}
\Wd_{d''}\tens\Wd_{d'}\isom\; & \Wd_{d'+d''-1}\oplus\Wd_{d'+d''-3}\oplus\cdots\oplus\Wd_{|d'-d''|+3}\oplus\Wd_{|d'-d''|+1}.
\end{align}
\end{lem}
\begin{proof}
Because the coefficients $T_{0;m}^{l',l''}$ above are non-zero only
for $l'+l''=m$, we can write the vector $\Tbas_{0}$ as
$\Tbas_{0} = \sum_{k}T_{0;m}^{k,m-k}\times(\Wbas_{m-k}\tens\Wbas_{k})$.
Since $\Hcp(K)=K\tens K$, we have
\begin{align*}
K.(\Wbas_{m-k}\tens\Wbas_{k})=\; & (q^{d''-1-2(m-k)}\,
\Wbas_{m-k})\tens(q^{d'-1-2k}\Wbas_{k})=q^{d-1}(\Wbas_{m-k}\tens\Wbas_{k})
\end{align*}
with $d=d'+d''-1-2m$, and therefore obviously 
$K.\Tbas_{0}=q^{d-1}\,\Tbas_{0}$. Using $\Hcp(E)=E\tens K+1\tens E$, we get
\begin{align*}
E.(\Wbas_{m-k}\tens\Wbas_{k})=\; & \qnum{m-k}\qnum{d''-m+k}q^{d'-1-2k}\left(\Wbas_{m-k-1}\tens\Wbas_{k}\right)+\qnum k\qnum{d'-k}\left(\Wbas_{m-k}\tens\Wbas_{k-1}\right),
\end{align*}
and thus
\begin{align*}
E.\Tbas_{0}=\; & \sum_{k}\Bigg(\qnum{m-k}\qnum{d''-m+k}q^{d'-1-2k}T_{0;m}^{k,m-k} \\
& \qquad \qquad \qquad  + \qnum{k+1}\qnum{d'-k-1}T_{0;m}^{k+1,m-1-k}\Bigg)\times\left(\Wbas_{m-k-1}\tens\Wbas_{k}\right).
\end{align*}
The coefficients satisfy 
$T_{0;m}^{k+1,m-1-k} 
= -T_{0;m}^{k,m-k}\times
\frac{\qnum{m-k}\qnum{d''-m+k}}{\qnum{k+1}\qnum{d'-k-1}}\,q^{d'-1-2k}$,
and hence, we get $E.\Tbas_{0}=0$.

From Lemma~\ref{lem: representations of quantum sl2} and the properties
$E.\Tbas_{0}=0$ and $K.\Tbas_{0}=q^{d-1}\,\Tbas_{0}$ it is clear that
the vector $\Tbas_{0}$ generates a subrepresentation of $\Wd_{d''}\tens\Wd_{d'}$
isomorphic to $\Wd_{d}$. The dimension $d'd''$ of the representation
$\Wd_{d''}\tens\Wd_{d'}$ equals the sum of dimensions $d=d'+d''-1-2m$
over the allowed values of $m$. Thus, the entire representation is
a direct sum of these subrepresentations.
\end{proof}

\begin{rem}\label{rem: tensor product different bases}
\emph{
In view of Equation~\eqref{eq: decomposition of tensor product}, 
we may freely interpret
$\Wd_{d}$ as a subrepresentation of $\Wd_{d''}\tens\Wd_{d'}$,
with the embedding normalized so as to map the basis vectors $\Wbas_{l}^{(d)}$
of Lemma~\ref{lem: representations of quantum sl2} to the vectors
\begin{align*}
\Tbas_{l}^{(d;d',d'')}:=\; & F^{l}.\Tbas_{0}^{(d;d',d'')}
\end{align*}
with $\Tbas_{0}^{(d;d',d'')}$ given by 
the formulas~\eqref{eq: tensor product hwv}
and \eqref{eq: tensor product hwv coefficients}. We denote the coefficients
of these vectors in the tensor product basis by 
$T_{l;m}^{l',l''} = T_{l;m}^{l',l''}(d',d'')$, 
so that 
\begin{align}\label{eq: tensor product submodule basis}
\Tbas_{l}^{(d;d',d'')}=\; & \sum_{l',l''}T_{l;m}^{l',l''} \times (\Wbas_{l''}\tens\Wbas_{l'}).
\end{align}
The vectors $\Tbas_{l}^{(d;d',d'')}$ also form a basis of $\Wd_{d''}\tens\Wd_{d'}$.
}
\end{rem}
\bigskip{}

\section{\label{sec: various integral functions}Various forms of the integral functions}

In this section, we introduce the functions in terms of which the
spin chain - Coulomb gas correspondence is defined and studied. The
functions 
\begin{align*}
 & \SimplexInt_{m_{1},\ldots,m_{n}}^{(x_{0})},
 \; \CubeInt_{m_{1},\ldots,m_{n}}^{(x_{0})},
 \; \FWint_{l_{1},\ldots,l_{n}}^{(x_{0})},
 \; \AsyInt_{l_{1},\ldots,l_{j-1};l,m;l_{j+2},\ldots,l_{n}}^{(x_{0})}
\end{align*}
of $n$ variables $x_1 , \ldots , x_n$ will be indexed by
an anchor point $x_0 \in \bR$ and various $n$-tuples of non-negative
integers $m_j, l_j, l, m \in \bZ_{\geq 0}$.
All these functions will be defined by integrals of essentially the same multivalued
integrand --- the differences lie in the choice of the integration
surface and the choices of branch and rephasing of the integrand,
which are often easiest to indicate by figures. The definition of
the correspondence will use $\FWint_{l_{1},\ldots,l_{n}}^{(x_{0})}$
as basis functions, and the other functions are used for proving properties
of the correspondence. For this purpose, various properties of the
functions and relations among them are stated in this section.
Some proofs are postponed to Appendix~\ref{app: contour manipulations}.

A parameter $\kappa\in(0,\infty)\setminus\bQ$ is fixed throughout,
and the deformation parameter $q=e^{\ii\pi4/\kappa}$ is chosen. We
also fix the number $n\in\bN$ of variables, and real parameters $d_{1},\ldots,d_{n}$,
which later in Section~\ref{sec: FW correspondence} will be taken
to be dimensions of representations of type $\Wd_{d}$.

We use the shorthand notation $\boldsymbol{x}=(x_{1},\ldots,x_{n})$ for
the arguments of the functions. The domain of definition will be either
the chamber $\chamber_{n}$ or the restricted chamber $\chamber_{n}^{(x_{0})}$,
given by \eqref{eq: chamber} or \eqref{eq: restricted chamber},
respectively, so that we always assume the variables ordered according
to
\begin{align*}
 & x_{0}<x_{1}<\cdots<x_{n}.
\end{align*}

For fixed $\boldsymbol{x}$ and $x_{0}$, the value of the function will
be written as an integral of Dotsenko-Fateev type \cite{DF-multipoint_correlation_functions},
as in the Coulomb gas formalism of conformal field theory. The integrand
is a branch of the following multivalued function, a product of powers
of differences,
\begin{align}
f^{(\ell)}(\boldsymbol{x};\boldsymbol{w}) =
\; & f_{d_{1},\ldots,d_{n}}^{(\ell)}(x_{1},\ldots,x_{n};w_{1},\ldots,w_{\ell})\nonumber \\
=\; & \prod_{1\leq i<j\leq n}(x_{j}-x_{i})^{\frac{2}{\kappa}(d_{i}-1)(d_{j}-1)}\prod_{\substack{1\leq i\leq n\\
1\leq r\leq\ell
}
}(w_{r}-x_{i})^{-\frac{4}{\kappa}(d_{i}-1)}\prod_{1\leq r<s\leq\ell}(w_{s}-w_{r})^{\frac{8}{\kappa}},
\label{eq: integrand with generic phase}
\end{align}
and the auxiliary variables $w_{1},\ldots,w_{\ell}$ are to be integrated
over. More precisely, the integrand will be defined on some simply
connected subset of
\begin{align*}
\Wchamber^{(\ell)} = \; \Wchamber_{x_{1},\ldots,x_{n}}^{(\ell)}
:=\; & \set{(w_{1},\ldots,w_{\ell})\in\left(\bC\setminus\set{x_{1},\ldots,x_{n}}\right)^{\ell}\;\Big|\; w_{r}\neq w_{s}\text{ for all }r\neq s}.
\end{align*}

\begin{rem}\label{rem: defining a rephased branch of integrand}
\emph{
The logarithmic
differential of the multivalued function $f^{(\ell)}$ is the single-valued
one-form
\begin{align*}
\ud\left(\log\Big(f^{(\ell)}(\boldsymbol{x};\boldsymbol{w})\Big)\right)
=\; & \sum_{r=1}^{\ell}\left(\sum_{i=1}^{n}\frac{4(1-d_{i})/\kappa}{w_{r}-x_{i}}+\sum_{s\neq r}\frac{8/\kappa}{w_{r}-w_{s}}\right)\ud w_{r}.
\end{align*}
Thus, to define a branch of the integrand on a simply connected subset
of $\Wchamber^{(\ell)}$, it is sufficient to give its value at some
point $\boldsymbol{w}'$, and then set 
\begin{align*}
f_{{\rm branch}}(\boldsymbol{x};\boldsymbol{w})
=\; & f_{{\rm branch}}(\boldsymbol{x};\boldsymbol{w}')\times\exp\left(\int_{\boldsymbol{w'}}^{\boldsymbol{w}}\ud\left(\log\Big(f^{(\ell)}(\boldsymbol{x};\cdot)\Big)\right)\right),
\end{align*}
where the path of integration from $\boldsymbol{w'}$ to $\boldsymbol{w}$
stays in the simply connected subset.
}
\end{rem}

We will frequently partition the variables $w_{1},\ldots,w_{\ell}$
to $n$ subsets of sizes $m_{1},\ldots,m_{n}$, in which case we use
the notation
\begin{align}\label{eq: partition w variables}
I^{(i)}=\; & I_{m_{1},\ldots,m_{n}}^{(i)}=\set{r\in\bZ\;\Bigg|\;\sum_{j=1}^{i}m_{j}\geq r>\sum_{j=1}^{i-1}m_{j}}\qquad\qquad(i=1,\ldots,n)
\end{align}
for the partition of the indices.

\subsection{\label{sub: real integral functions}Real integral functions as integrals over a product of simplices}

The integrand~\eqref{eq: integrand with generic phase} has a constant
phase on the following simply connected real subset of $\Wchamber^{(\ell)}$,
a product of simplices of dimensions $m_{1},\ldots,m_{n}$ with $\sum_{i=1}^{n}m_{i}=\ell$,
\begin{align*}
\sR_{m_{1},\ldots,m_{n}}=\Big\{(w_{1},\ldots,w_{\ell})\in\bR^{\ell}\;\Big|\;\; & x_{0}<w_{1}<w_{2}<w_{3}<\cdots<w_{m_{1}}<x_{1},\\
 & x_{1}<w_{m_{1}+1}<\cdots<w_{m_{1}+m_{2}}<x_{2},\\
 & \qquad\qquad\vdots\\
 & x_{n-1}<w_{m_{1}+\cdots+m_{n-1}+1}<\cdots<w_{m_{1}+\cdots+m_{n}}<x_{n}\Big\}.
\end{align*}
We define the real-valued functions 
$\SimplexInt_{m_{1},\ldots,m_{n}}^{(x_{0})}\colon\chamber^{(x_{0})}\to\bR$
as integrals over this set 
\begin{align}\label{eq: real integral over product of simplices}
\SimplexInt_{m_{1},\ldots,m_{n}}^{(x_{0})}(\boldsymbol{x})
:=\; & \underset{\sR_{m_{1},\ldots,m_{n}}}{\int} \big| f^{(\ell)}(\boldsymbol{x};\boldsymbol{w}) \big| \;\ud w_{1}\cdots\ud w_{\ell}.
\end{align}

In applications, it is usually desirable to write the final results
in terms of these functions, because of their transparent definition
and real-valuedness.

The integrals $\SimplexInt_{m_{1},\ldots,m_{n}}^{(x_{0})}(\boldsymbol{x})$
are convergent for large enough $\kappa$ --- the precise condition is
\begin{align}\label{eq: large enough kappa}
\kappa>\; & 4\times\left(\max_{1\leq i\leq n}d_{i}-1\right),
\end{align}
and this will often be implicitly assumed. Nevertheless, our main
results are valid for all irrational positive $\kappa$ --- they are
obtained by meromorphic analytic continuation in $\kappa$. Indeed,
the analytic continuation of the real integrals 
$\SimplexInt_{m_{1},\ldots,m_{n}}^{(x_{0})}(\boldsymbol{x})$
can be done by regularizing the divergent integrals, as discussed
in \cite{JJK-SLE_boundary_visits}. If such a regularization is performed
by the method of counterterms, one can see, in principle explicitly,
that for $\re(\kappa)>0$ the only singularities are isolated poles
at some rational values of $\kappa$.
For concrete examples of the analytic continuation and poles,
$\SimplexInt_{m_{1}}^{(x_{0})}(x_1)$ and $\SimplexInt_{0,m_{2}}^{(x_{0})}(x_1,x_2)$
can be reduced to the Selberg integral given in 
Remark~\ref{rem: Selberg integral}.

\subsection{\label{sub: deformed hypercube integral functions}Integrals over a product of deformed hypercubes}

It is natural to extend the integrand above from the real submanifold
$\sR_{m_{1},\ldots,m_{n}}\subset\Wchamber^{(\ell)}$ to an open subset.
A convenient choice for intermediate manipulations is the simply connected
subset
\begin{align*}
\widetilde{\Wchamber}_{m_{1},\ldots,m_{n}}
:=\; & \Big\{\boldsymbol{w}\in\Wchamber^{(\ell)}\;\Big|\;\forall i\;\forall r\in I^{(i)}:\;\re(x_{i-1})<\re(w_{r})<\re(x_{i}),\\
 & \qquad\qquad\qquad\forall i\;\forall r,s\in I^{(i)},\; r<s:\; w_{s}-w_{r}\in\bC\setminus\ii\,\bR_{+}\Big\}.
\end{align*}
On $\widetilde{\Wchamber}_{m_{1},\ldots,m_{n}}$, we choose a branch
of the multivalued function $f^{(\ell)}(\boldsymbol{x};\cdot)$ of 
\eqref{eq: integrand with generic phase},
and rephase it so that it becomes real and positive on $\sR_{m_{1},\ldots,m_{n}}\subset\widetilde{\Wchamber}_{m_{1},\ldots,m_{n}}$.
This function
\begin{align*}
f_{m_{1},\ldots,m_{n}}^{\approx}(\boldsymbol{x};\cdot)
\;\colon\; & \widetilde{\Wchamber}_{m_{1},\ldots,m_{n}}\rightarrow\bC
\end{align*}
can be defined for example using 
Remark~\ref{rem: defining a rephased branch of integrand},
i.e., by fixing a point $\boldsymbol{w}'\in\sR_{m_{1},\ldots,m_{n}}$,
setting the value at that point equal to the absolute value of $f^{(\ell)}(\boldsymbol{x};\boldsymbol{w}')$,
and analytically continuing by integrating the single-valued logarithmic
differential.

In particular, we can write the real integral function $\SimplexInt_{m_{1},\ldots,m_{n}}^{(x_{0})}$
as an integral of $f^{\approx}(\boldsymbol{x};\cdot)$ 
\begin{align*}
\SimplexInt_{m_{1},\ldots,m_{n}}^{(x_{0})}(\boldsymbol{x})
=\; & \int_{\sR_{m_{1},\ldots,m_{n}}}f_{m_{1},\ldots,m_{n}}^{\approx}(\boldsymbol{x};w_{1},\ldots,w_{\ell})\;\ud w_{1}\cdots\ud w_{\ell} .
\end{align*}
It will instead be easier to express the basis functions of our correspondence 
in terms of the closely related function 
\begin{align}\label{eq: deformed hypercube integral}
\CubeInt_{m_{1},\ldots,m_{n}}^{(x_{0})}(\boldsymbol{x})
:=\; & \int_{\widetilde{\sR}_{m_{1},\ldots,m_{n}}}f_{m_{1},\ldots,m_{n}}^{\approx}(\boldsymbol{x};w_{1},\ldots,w_{\ell})\;\ud w_{1}\cdots\ud w_{\ell},
\end{align}
where the integration surface $\widetilde{\sR}_{m_{1},\ldots,m_{n}}\subset\widetilde{\Wchamber}_{m_{1},\ldots,m_{n}}$
is such that for any $r\in I^{(i)}$, the variable $w_{r}$ is integrated
from $x_{i-1}$ to $x_{i}$. In view of the definition of the simply
connected set $\widetilde{\Wchamber}_{m_{1},\ldots,m_{n}}$, this
unambiguously determines the homotopy type of the integration surface
$\widetilde{\sR}_{m_{1},\ldots,m_{n}}$, and consequently the function
$\CubeInt_{m_{1},\ldots,m_{n}}^{(x_{0})}(\boldsymbol{x})$. 
Figure~\ref{fig: tilde integration}
illustrates how the variables turn around each other in this integration,
and indicates the choice of a point where the integrand is rephased
to be positive.

\noindent 
\begin{figure}
\includegraphics[width=1\textwidth]{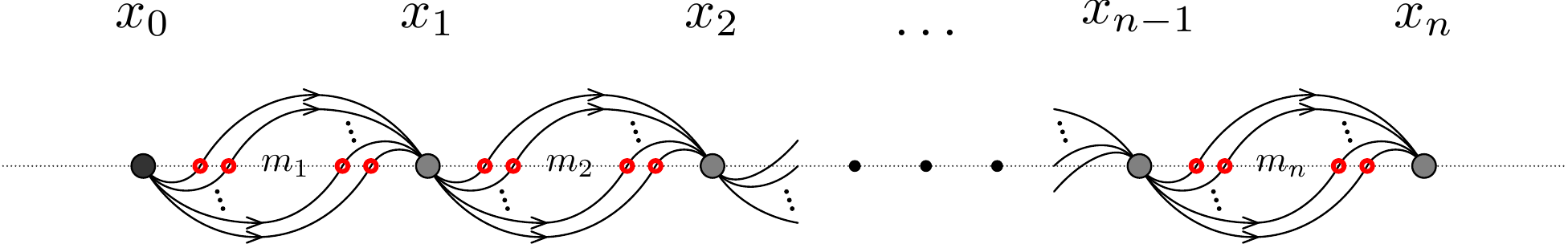}
\caption{\label{fig: tilde integration}
The integration surface $\widetilde{\sR}_{m_{1},\ldots,m_{n}}$
is a Cartesian product of deformed hypercubes of dimensions $m_{1},\ldots,m_{n}$.
The point $w_{1}<\cdots<w_{\ell}$, where the integrand 
$f_{m_{1},\ldots,m_{n}}^{\approx}(\boldsymbol{x};\cdot)$
is rephased to be positive, is marked by red circles. For $r,s\in I^{(i)}$,
$s<r$, the integration path of $w_{r}$ thus remains below the path
of $w_{s}$.}
\end{figure}

\begin{lem}\label{lem: simplex in terms of hypercube}
The function 
\begin{align*}
\CubeInt_{m_{1},\ldots,m_{n}}^{(x_{0})}
\;\colon\; & \chamber_{n}^{(x_{0})}\rightarrow\bC
\end{align*}
is related to the real integral 
function~\eqref{eq: real integral over product of simplices}
by
\begin{align*}
\CubeInt_{m_{1},\ldots,m_{n}}^{(x_{0})}(\boldsymbol{x})= & \left(\prod_{i=1}^{n}q^{-\binom{m_{i}}{2}}\qfact{m_{i}}\right)\times\SimplexInt_{m_{1},\ldots,m_{n}}^{(x_{0})}(\boldsymbol{x}),\qquad\text{for }\boldsymbol{x}\in\chamber_{n}^{(x_{0})}.
\end{align*}
\end{lem}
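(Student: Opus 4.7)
The plan is to decompose the integration over $\widetilde{\sR}_{m_{1},\ldots,m_{n}}$ into open sub-regions indexed by tuples of permutations $\boldsymbol{\pi}=(\pi_{1},\ldots,\pi_{n})\in\prod_{i}\symgrp_{m_{i}}$: the region associated to $\boldsymbol{\pi}$ is the one where, for each block $i$ and each $r,s\in I^{(i)}$ with $r<s$, one has $\re(w_{r})<\re(w_{s})$ precisely when $\pi_{i}(r)<\pi_{i}(s)$. These regions partition $\widetilde{\sR}_{m_{1},\ldots,m_{n}}$ up to a set of measure zero. Since the modulus $|f^{(\ell)}|$ of the integrand is symmetric under permutations of the $w_{r}$'s within each $I^{(i)}$, and equals $f_{m_{1},\ldots,m_{n}}^{\approx}$ on the identity region by the choice of rephasing, a simple relabeling change of variables shows that the integral of $|f^{(\ell)}|$ over each sub-region equals $\SimplexInt_{m_{1},\ldots,m_{n}}^{(x_{0})}(\mathbf{x})$.

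The main step, and the principal obstacle, is to compute the complex phase of the rephased branch $f_{m_{1},\ldots,m_{n}}^{\approx}$ on each sub-region. The factors $(x_{j}-x_{i})^{\frac{2}{\kappa}(d_{i}-1)(d_{j}-1)}$ and $(w_{r}-x_{i})^{-\frac{4}{\kappa}(d_{i}-1)}$ retain constant arguments throughout $\widetilde{\Wchamber}_{m_{1},\ldots,m_{n}}$, so only the factors $(w_{s}-w_{r})^{8/\kappa}$ with $r<s$ in the same block $I^{(i)}$ can contribute. For such a pair, the defining condition $w_{s}-w_{r}\in\bC\setminus\ii\bR_{+}$ prevents $w_{s}$ from being deformed directly above $w_{r}$, so any continuous path in $\widetilde{\Wchamber}_{m_{1},\ldots,m_{n}}$ from the identity region to one in which the real-axis order of $w_{r}$ and $w_{s}$ is swapped must rotate $w_{s}-w_{r}$ from $\bR_{+}$ to $\bR_{-}$ through the lower half-plane, shifting its argument by $-\pi$ and producing a factor $e^{-\ii\pi\cdot 8/\kappa}=q^{-2}$. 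Because $\widetilde{\Wchamber}_{m_{1},\ldots,m_{n}}$ is simply connected, the accumulated phase is path-independent; counting the swaps one sees that on the region indexed by $\boldsymbol{\pi}$ the total phase relative to the identity region is $\prod_{i=1}^{n}q^{-2\cdot\#{\rm inv}(\pi_{i})}$.

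Combining these two observations gives
\begin{align*}
\CubeInt_{m_{1},\ldots,m_{n}}^{(x_{0})}(\mathbf{x}) & = \sum_{\boldsymbol{\pi}}\prod_{i=1}^{n}q^{-2\cdot\#{\rm inv}(\pi_{i})}\,\SimplexInt_{m_{1},\ldots,m_{n}}^{(x_{0})}(\mathbf{x}) \\
& = \prod_{i=1}^{n}\Bigl(\sum_{\pi_{i}\in\symgrp_{m_{i}}}q^{-2\cdot\#{\rm inv}(\pi_{i})}\Bigr)\,\SimplexInt_{m_{1},\ldots,m_{n}}^{(x_{0})}(\mathbf{x}),
\end{align*}
and Lemma \ref{lem: q-combinatorics}(b), applied independently to each of the $n$ factors, yields $\prod_{i=1}^{n}q^{-{m_{i}\choose 2}}\qfact{m_{i}}$, completing the proof. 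The most delicate part of this plan is the phase bookkeeping in the second paragraph; verifying that the deformation constraint really forces the clockwise rotation (and not the counterclockwise one) is what fixes the sign of the exponent and thus produces $q^{-2}$ rather than $q^{+2}$.
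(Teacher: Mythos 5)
Your proposal follows essentially the same route as the paper's proof: decompose the deformed hypercubes according to the ordering of the screening variables within each block, observe that only the factors $(w_{s}-w_{r})^{8/\kappa}$ with $r,s$ in the same block change phase, count $q^{-2}$ per inversion (your sign analysis via the constraint $w_{s}-w_{r}\in\bC\setminus\ii\bR_{+}$ is correct and matches Figure \ref{fig: tilde integration}), and sum with Lemma \ref{lem: q-combinatorics}(b). The one step you assert rather than prove is that the integral over each sub-region ``equals $\SimplexInt_{m_{1},\ldots,m_{n}}^{(x_{0})}$ by a simple relabeling.'' Each of your sub-regions is still a piece of the \emph{deformed} surface: it contains the excursions off the real axis where one variable passes another (and where variables approach the $x_{i}$), and on those arcs $\ud w_{r}$ is genuinely complex, so the integral is not literally the real simplex integral after relabeling. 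One must shrink the deformation to size $\eps$ and check that the detour contributions vanish; the paper bounds them by $\OO(\eps^{1-\frac{4}{\kappa}\max_{i}(d_{i}-1)})$ and $\OO(\eps^{1+\frac{8}{\kappa}})$, which is precisely where the convergence condition \eqref{eq: large enough kappa} on $\kappa$ enters (with the general case then recovered by analytic continuation). Adding that limiting argument closes the gap; everything else in your write-up is sound.
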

The proof is a straightforward contour deformation argument,
decomposing the contour in \eqref{eq: deformed hypercube integral}
to $\prod_{i=1}^n (m_i!)$ pieces. We postpone the details of the proof to Appendix~\ref{app: contour manipulations}.

\subsection{\label{sub: basis functions as FW integrals}Basis functions as integrals over families of non-intersecting loops}

Let us now define the integrals $\FWint_{l_{1},\ldots,l_{n}}^{(x_{0})}$
which will serve as our basis functions in defining the
spin chain~- Coulomb gas correspondence. The integration surfaces are certain
families of non-intersecting loops used also in \cite{FW-topological_representation_of_Uqsl2}.
In contrast to the earlier integrals $\SimplexInt_{m_{1},\ldots,m_{n}}^{(x_{0})}$
and $\CubeInt_{m_{1},\ldots,m_{n}}^{(x_{0})}$, for the convergence
of these integrals it suffices that $\mbox{\ensuremath{\re}(\ensuremath{\kappa})}>0$.
In the choices of indices, we try to consistently use $l_j \in \bZ_{\geq 0}$ for numbers of loops,
and $m_j \in \bZ_{\geq 0}$ for multiplicities of paths.

Let
\begin{align*}
l_{1},\ldots,l_{n}\in\bZ_{\geq0},\qquad & \ell=\sum_{i=1}^{n}l_{i},
\end{align*}
and define the partition $\left(I_{l_{1},\ldots,l_{n}}^{(i)}\right)_{i=1}^{n}$
of the indices of the $w$-variables as in~\eqref{eq: partition w variables},
now with parts of sizes $l_{1},\ldots,l_{n}$.
The integration surface $\SurfFW_{l_{1},\ldots,l_{n}}$ --- a family
of non-intersecting loops illustrated in Figure~\ref{fig: bag integration}
--- and the associated integrand $f_{l_{1},\ldots,l_{n}}^{\Supset}$
are defined as follows:
\begin{itemize}
\item Each of the variables $w_{r}$, for $r\in I^{(i)}$, makes a simple
loop in $\bC\setminus\set{x_{1},\ldots,x_{n}}$ starting and ending
at the anchor point $x_{0}$, and encircling the point $x_{i}$ once
in the positive direction. The loop of $w_{r}$, for $r\in I^{(i)}$,
must never cross the lines $x_{j}+\ii\bR_{+}$ for $j\neq i$, nor
the lines $w_{s}+\ii\bR_{+}$ for $s\in I^{(j)}$ with $j<i$.
\item The $l_{i}$ loops around $x_{i}$ are nested in such a way that if
$r,s\in I^{(i)}$ and $r<s$, then the loop of $w_{s}$ encircles
the loop of $w_{r}$.
\item There is a point $\boldsymbol{w}'=(w_{1}',\ldots,w_{\ell}')\in\SurfFW_{l_{1},\ldots,l_{n}}$,
also illustrated in Figure~\ref{fig: bag integration}, such that
\begin{align*}
 & x_{1}<w_{1}'<w_{2}'<w_{3}'<\cdots<w_{l_{1}}'<x_{2},\\
 & x_{2}<w_{l_{1}+1}'<w_{l_{1}+2}'<\cdots<w_{l_{1}+l_{2}}'<x_{3},\\
 & \qquad\qquad\vdots\\
 & x_{n}<w_{l_{1}+\cdots+l_{n-1}+1}'<\cdots<w_{l_{1}+\cdots+l_{n}}'.
\end{align*}
As the integrand, we choose a branch and rephasing of the function
$f^{(\ell)}(\boldsymbol{x};\cdot)$ of~\eqref{eq: integrand with generic phase},
so as to make it positive at $\boldsymbol{w}'$. Such an integrand
\begin{align*}
f_{l_{1},\ldots,l_{n}}^{\Supset}(\boldsymbol{x};\cdot)
\;\colon\; & \SurfFW_{l_{1},\ldots,l_{n}}\rightarrow\bC
\end{align*}
can again be defined using the single-valued logarithmic differential
as explained in Remark~\ref{rem: defining a rephased branch of integrand}.
\end{itemize}
The integrals of $f^{\Supset}$ over $\SurfFW$,
\begin{align}\label{eq: FW basis integral}
\FWint_{l_{1},\ldots,l_{n}}^{(x_{0})}(\boldsymbol{x})
=\; & \int_{\SurfFW_{l_{1},\ldots,l_{n}}}f_{l_{1},\ldots,l_{n}}^{\Supset}(\boldsymbol{x};w_{1},\ldots,w_{\ell})\;\ud w_{1}\cdots\ud w_{\ell},
\end{align}
define functions
\begin{align*}
\FWint_{l_{1},\ldots,l_{n}}^{(x_{0})}
\;\colon\; & \chamber_{n}^{(x_{0})}\rightarrow\bC.
\end{align*}

\noindent 
\begin{figure}
\includegraphics[width=1\textwidth]{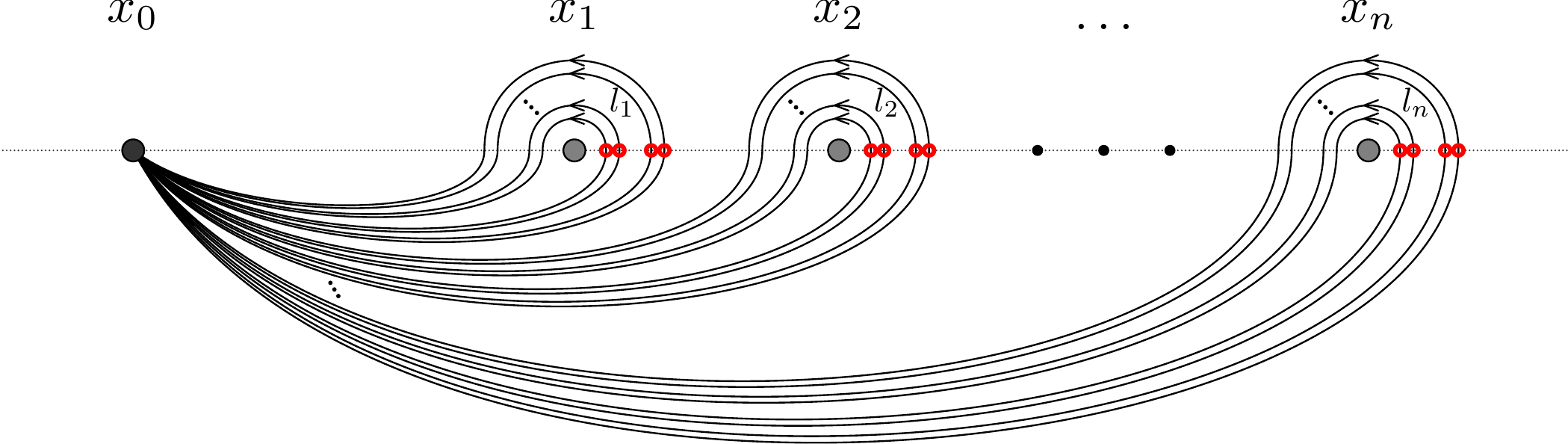}
\caption{\label{fig: bag integration}
The integration surface $\SurfFW_{l_{1},\ldots,l_{n}}$.
The point where the integrand $f_{l_{1},\ldots,l_{n}}^{\Supset}(\boldsymbol{x};\cdot)$
is rephased to be positive is marked by red circles.}
\end{figure}

The transformation rules of these functions under translation and
scaling are the following.
\begin{lem}\label{lem: translation and scaling of basis functions}
For any $\xi\in\bR$, we have
\begin{align*}
\FWint_{l_{1},\ldots,l_{n}}^{(x_{0}+\xi)}(x_{1}+\xi,\ldots,x_{n}+\xi)
=\; & \FWint_{l_{1},\ldots,l_{n}}^{(x_{0})}(x_{1},\ldots,x_{n}),
\end{align*}
and for any $\lambda>0$ we have
\begin{align*}
\FWint_{l_{1},\ldots,l_{n}}^{(\lambda x_{0})}(\lambda x_{1},\ldots,\lambda x_{n})
=\; & \lambda^{\Delta^{d_{1},\ldots,d_{n}}(\ell)}\times\FWint_{l_{1},\ldots,l_{n}}^{(x_{0})}(x_{1},\ldots,x_{n}),
\end{align*}
where $\ell=\sum_{i=1}^{n}l_{i}$ and
\begin{align*}
\Delta^{d_{1},\ldots,d_{n}}(\ell)
=\; & \frac{2}{\kappa}\sum_{i<j}(d_{i}-1)(d_{j}-1)-\frac{4}{\kappa}\ell\sum_{i}(d_{i}-1)+\frac{8}{\kappa}\frac{\ell(\ell-1)}{2}+\ell.
\end{align*}
\end{lem}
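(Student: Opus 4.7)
The plan is to establish both identities by the obvious changes of variables $w_r \mapsto w_r + \xi$ and $w_r \mapsto \lambda w_r$ respectively, and to verify in each case that (i) the integration surface transforms consistently, (ii) the multivalued factors in the integrand transform by the expected powers, and (iii) the branch specified by the rephasing point at $\mathbf{w}'$ is sent to the branch specified by the transformed rephasing point.

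For the translation statement, I would start from the definition of $\FWint_{l_{1},\ldots,l_{n}}^{(x_{0}+\xi)}(\mathbf{x}+\xi)$ and substitute $w_r \mapsto w_r + \xi$ on each loop. Since the integrand \eqref{eq: integrand with generic phase} depends only on the differences $x_j - x_i$, $w_r - x_i$, $w_s - w_r$, the integrand is literally pointwise unchanged. The integration surface $\SurfFW_{l_{1},\ldots,l_{n}}$ is characterized geometrically in terms of these differences together with the anchor point, so the surface associated to the translated arguments is exactly the $\xi$-translate of the original surface. The distinguished reference point $\mathbf{w}'$ is translated accordingly, and since the logarithmic differential used to propagate the phase is invariant under translation, the two branches coincide. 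The Jacobian of the substitution is $1$, so the two integrals agree.

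For the scaling statement, I would substitute $w_r = \lambda u_r$ in the integral defining $\FWint_{l_{1},\ldots,l_{n}}^{(\lambda x_{0})}(\lambda x_{1},\ldots,\lambda x_{n})$. Each factor in \eqref{eq: integrand with generic phase} produces a real positive scalar $\lambda^{\alpha}$ (because $\lambda > 0$) times the analogous factor in the unscaled variables: the factor $\prod_{i<j}(\lambda x_{j}-\lambda x_{i})^{\frac{2}{\kappa}(d_{i}-1)(d_{j}-1)}$ contributes $\lambda^{\frac{2}{\kappa}\sum_{i<j}(d_{i}-1)(d_{j}-1)}$, the factor $\prod_{i,r}(\lambda u_r - \lambda x_i)^{-\frac{4}{\kappa}(d_{i}-1)}$ contributes $\lambda^{-\frac{4}{\kappa}\ell \sum_i (d_i-1)}$, the factor $\prod_{r<s}(\lambda u_s - \lambda u_r)^{\frac{8}{\kappa}}$ contributes $\lambda^{\frac{8}{\kappa}\binom{\ell}{2}}$, and the volume form contributes $\lambda^{\ell}$. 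Adding these exponents yields exactly $\Delta^{d_{1},\ldots,d_{n}}(\ell)$. The surface $\SurfFW_{l_{1},\ldots,l_{n}}$ attached to $(\lambda x_1, \ldots, \lambda x_n)$ with anchor $\lambda x_0$ is by construction the $\lambda$-dilate of the one attached to $(x_1, \ldots, x_n)$ with anchor $x_0$, and the reference point $\mathbf{w}'$ dilates accordingly, so the change of variables identifies the two surfaces along with their reference points.

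The only subtlety, which I expect to be the mildest part of the argument rather than a genuine obstacle, is keeping track of the branch of the integrand under the change of variables. Because $\lambda > 0$, every factor $\lambda^{\alpha}$ that we extract is an unambiguous positive real number, so there is no monodromy issue; equivalently, multiplication by $\lambda$ preserves the arguments of all the differences $x_j - x_i$, $u_r - x_i$, $u_s - u_r$ encountered along the path from the reference point, so the single-valued logarithmic differential of Remark~\ref{rem: defining a rephased branch of integrand} pulls back to itself. This confirms that the branch $f_{l_{1},\ldots,l_{n}}^{\Supset}$ chosen for the scaled arguments becomes, after the substitution, precisely the branch used to define $\FWint_{l_{1},\ldots,l_{n}}^{(x_{0})}(\mathbf{x})$, up to the overall positive scalar $\lambda^{\Delta^{d_{1},\ldots,d_{n}}(\ell)}$ computed above.
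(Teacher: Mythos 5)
Your proposal is correct and follows essentially the same route as the paper: the change of variables $w_r\mapsto w_r+\xi$ (resp.\ $w_r\mapsto\lambda w_r$), the factor-by-factor accounting of the scaling exponents together with the Jacobian $\lambda^{\ell}$, and the observation that for $\lambda>0$ the extracted powers of $\lambda$ are positive so the branch and reference point are carried along consistently. The paper's proof is terser (it simply records the scaling of the rephased integrand $\fFW$), but the content is identical.
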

\begin{proof}
For the first statement, make the changes of variables $u_{r}=w_{r}+\xi$
in the integrals.
For the second, make the change of variables $u_{r} = \lambda w_{r}$,
and notice that the integrand has the scaling
\begin{align*}
\fFW_{l_{1},\ldots,l_{n}}(\lambda\boldsymbol{x};\lambda\boldsymbol{w})
=\; & \lambda^{\frac{2}{\kappa}\sum_{i<j}(d_{i}-1)(d_{j}-1)-\frac{4}{\kappa}\ell\sum_{i}(d_{i}-1)+\frac{8}{\kappa}\frac{\ell(\ell-1)}{2}}
\times\fFW_{l_{1},\ldots,l_{n}}(\boldsymbol{x};\boldsymbol{w}),
\end{align*}
and that the Jacobian of the change of variables in the $\ell$-dimensional
integral is $\lambda^{\ell}$.
\end{proof}
As another simple remark, we record the fact that
when $d_i=1$ for some $i$, so that the trivial representation $\Wd_1$ is a factor
in the tensor product $\bigotimes_j \Wd_{d_j}$,
then the function is actually
independent of the corresponding variable.
\begin{lem}\label{lem: trivial dependence of basis functions on singlet variables}
If $d_{i}=1$ for some $i\in\set{1,\ldots,n}$, then for any fixed values
of the other variables $x_{j}$, $j\neq i$, the function $\FWint_{l_{1},\ldots,l_{n}}^{(x_{0})}$
is constant as a function of the variable $x_{i}$.
\end{lem}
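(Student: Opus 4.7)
The plan is to exploit two immediate consequences of $d_i=1$: the multivalued integrand \eqref{eq: integrand with generic phase} does not depend on $x_i$ at all, and it has no branch point on the hyperplanes $w_r=x_i$. Indeed, when $d_i=1$, the exponent $\frac{2}{\kappa}(d_i-1)(d_j-1)$ of every factor $(x_j-x_i)$ vanishes, as does the exponent $-\frac{4}{\kappa}(d_i-1)$ of every factor $(w_r-x_i)$. Hence, viewing $f^{(\ell)}(\mathbf{x};\mathbf{w})$ as a multivalued function of $\mathbf{w}$, it is literally the same function for any admissible value of $x_i$, and it extends holomorphically across the hyperplanes $w_r=x_i$. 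The singular locus of the integrand, $\bigcup_{j:d_j\neq 1}\{w_r=x_j\}\cup\{w_r=w_s\}$, does not depend on $x_i$ either.

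Next, I would fix all other variables $x_j$ ($j\neq i$) and $x_0$, and let $x_i,x_i'\in(x_{i-1},x_{i+1})$ be two admissible values. The integration surface $\SurfFW_{l_1,\ldots,l_n}$ depends on $x_i$ in two ways: the loops indexed by $r\in I_{l_1,\ldots,l_n}^{(i)}$ (if $l_i>0$) are required to encircle $x_i$, and the loops for $r\in I_{l_1,\ldots,l_n}^{(i')}$ with $i'\neq i$ are required to avoid the vertical half-line $x_i+\ii\bR_+$. I would argue that $\SurfFW(\mathbf{x})$ and $\SurfFW(\mathbf{x}')$ are homologous in the complement of the singular locus above: the loops indexed by $r\in I^{(i)}$ can be transported continuously as $x_i$ varies along the real axis, while any crossing of $x_i+\ii\bR_+$ by loops around other points is through a region where the integrand is holomorphic, so only the loops' homotopy classes in the relevantly punctured plane matter, and those are manifestly preserved.

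Finally, the rephased branch $f^{\Supset}(\mathbf{x};\cdot)$ agrees with $f^{\Supset}(\mathbf{x}';\cdot)$ on a common neighborhood of the deformation, since by the observation above the rephasing convention of Remark \ref{rem: defining a rephased branch of integrand} extends consistently across the non-singular hyperplanes $w_r=x_i$. Because this branch defines a closed holomorphic $\ell$-form on the complement of the singular locus in $\Wchamber^{(\ell)}$, homotopy invariance of the integral yields $\FWint_{l_1,\ldots,l_n}^{(x_0)}(\mathbf{x})=\FWint_{l_1,\ldots,l_n}^{(x_0)}(\mathbf{x}')$, which proves the claim. The main technical care is in arranging the topological deformation of the integration surface so that nesting, anchoring at $x_0$, and non-intersection conditions are all maintained throughout; this is straightforward precisely because, with $d_i=1$, the point $x_i$ poses no obstruction to deformation.
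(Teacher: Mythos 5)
Your proof is correct and follows essentially the same route as the paper's: both rest on the observation that for $d_i=1$ every exponent involving $x_i$ vanishes, so the integrand neither depends on $x_i$ nor has a singularity at $w_r=x_i$, and hence the integration surfaces for two values of $x_i$ can be identified (the paper chooses one surface of the correct homotopy type for both values, you deform one into the other through the nonsingular locus). Your additional care about the rephasing and the homotopy invariance of the closed form just makes explicit what the paper's shorter argument leaves implicit.
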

\begin{proof}
For two values of the variable $x_{i}$, one can choose the integration
surface $\SurfFW_{l_{1},\ldots,l_{n}}$ to be the same, in such a
way as to obtain the correct homotopy type for both cases. If $d_{i}=1$,
the integrand $\fFW_{l_{1},\ldots,l_{n}}(\boldsymbol{x};\boldsymbol{w})$
is constant as a function of $x_{i}$, because all the exponents of
differences in which $x_{i}$ appears are proportional to $d_{i}-1$.
Therefore, for two values of the variable $x_{i}$, the values of
the function $\FWint_{l_{1},\ldots,l_{n}}^{(x_{0})}$ are in fact
given by the same integral.
\end{proof}

The basis functions $\FWint_{l_{1},\ldots,l_{n}}^{(x_{0})}$ can be
decomposed to linear combinations of the real integrals 
$\SimplexInt_{m_{1},\ldots,m_{n}}^{(x_{0})}$.
We begin with the simplest case of one point, $n=1$. Even this simple
case will have an important consequence, 
Lemma~\ref{lem: vanishing for too many bags}.
\begin{lem}\label{lem: explicit formula for one point FW integral}
Consider the case $n=1$, and denote $d_{1}=d \in \bZpos$, $l_{1}=\ell$, $x_{1}=x$. 
Then we have
\begin{align*}
\FWint_{\ell}^{(x_{0})}(x)
=\; & \left(\qfact{\ell}\prod_{m=1}^{\ell}(q^{d-m}-q^{m-d})\right)
\times\SimplexInt_{\ell}^{(x_{0})}(x).
\end{align*}
In particular, $\FWint_{\ell}^{(x_{0})}$ is identically zero if $\ell\geq d$.
\end{lem}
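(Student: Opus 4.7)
The strategy is to deform the nested-loop integration contour $\SurfFW_\ell$ onto real contours and track the $q$-phases that accumulate from the branches of the multivalued integrand.

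For $\ell=1$, I would use a direct contour deformation. The loop $\gamma_1$ (based at $x_0$, encircling $x$ counterclockwise, passing through the rephasing point $w_1'>x$) is deformed to the bow-tie consisting of the outbound path $x_0\to w_1'$ on the lower side of the real axis and the return path $w_1'\to x_0$ on the upper side. On the segment $(x,w_1')$ the two sides carry the same branch of the integrand, so these contributions cancel by opposite orientations. On $(x_0,x)$, analytic continuation from $w_1'$ forces $\arg(w_1-x)=-\pi$ on the lower side and $+\pi$ on the upper side, so with $a=-\tfrac{4}{\kappa}(d-1)$ the branch factors $e^{\mp i\pi a}=q^{\pm(d-1)}$ combine to give
\[
\FWint_1^{(x_0)}(x) \;=\; (e^{-i\pi a}-e^{i\pi a})\,\SimplexInt_1^{(x_0)}(x) \;=\; (q^{d-1}-q^{1-d})\,\SimplexInt_1^{(x_0)}(x),
\]
which verifies the $\ell=1$ case.

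For general $\ell$, I would first reduce to the already-controlled function $\CubeInt_\ell^{(x_0)}$. By Lemma~\ref{lem: simplex in terms of hypercube} (with $n=1$, $m_1=\ell$) we have $\CubeInt_\ell^{(x_0)}(x)=q^{-\binom{\ell}{2}}\,\qfact{\ell}\,\SimplexInt_\ell^{(x_0)}(x)$, so the claim is equivalent to $\FWint_\ell^{(x_0)}(x)=q^{\binom{\ell}{2}}\prod_{m=1}^\ell(q^{d-m}-q^{m-d})\,\CubeInt_\ell^{(x_0)}(x)$. The plan is then to deform each of the $\ell$ nested loops of $\SurfFW_\ell$ onto the corresponding hypercube path of $\widetilde{\sR}_\ell$, producing an explicit $q$-phase at every step. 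Concretely, writing each loop as $\gamma_r=L_r-U_r$, where $L_r$ (resp.\ $U_r$) is the forward path from $x_0$ to $R_r>x$ on the lower (resp.\ upper) side of the real axis, expanding $\prod_r(L_r-U_r)$ gives $2^\ell$ real iterated integrals, each carrying a product of phases $e^{\pm i\pi a}$ from the $(w_r-x)^a$ factors and $e^{\pm i\pi b}$ (with $b=8/\kappa$) from the $(w_s-w_r)^b$ factors, whose signs depend on the side choices and on the relative orderings of the real parts $\mathrm{Re}(w_r)$.

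The main obstacle is collecting these $2^\ell$ terms into the claimed closed form; the $q$-combinatorial identities of Lemma~\ref{lem: q-combinatorics} (particularly parts (b), (c), and (d), which consolidate sums of $q$-phases over orderings and side choices into product expressions) are the essential bookkeeping tool. After completing the summation, the $m$-th contracted loop contributes the factor $(q^{d-m}-q^{m-d})$---the shift by $m-1$ in the exponent reflecting the monodromy contributions of the $(w_s-w_r)^b$ factors involving the already-contracted inner loops---and the overall $q^{\binom{\ell}{2}}$ prefactor arises from the $q$-symmetrization over orderings of real parts, as in the proof of Lemma~\ref{lem: simplex in terms of hypercube}.

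Finally, the ``in particular'' assertion is immediate from the explicit formula: when $\ell\geq d$, the product $\prod_{m=1}^\ell(q^{d-m}-q^{m-d})$ contains the vanishing factor $q^{d-d}-q^{d-d}=0$ at $m=d$, and therefore $\FWint_\ell^{(x_0)}(x)\equiv 0$.
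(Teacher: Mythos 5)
Your overall route is the same as the paper's: contract the nested loops onto the real axis, track the $q$-phases, pass through $\CubeInt_{\ell}^{(x_{0})}$, and finish with Lemma \ref{lem: simplex in terms of hypercube}. Your $\ell=1$ computation is complete and correct, and the ``in particular'' deduction is fine. But for general $\ell$ you stop exactly where the work is: you expand $\prod_{r}(L_{r}-U_{r})$ into $2^{\ell}$ real iterated integrals and then declare that consolidating them is ``the main obstacle,'' to be handled by Lemma \ref{lem: q-combinatorics}(b),(c),(d). That consolidation is never carried out, and as stated it would be painful, because the $2^{\ell}$ terms a priori live on different relative orderings of the variables and carry ordering-dependent phases from the $(w_{s}-w_{r})^{8/\kappa}$ factors.

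The paper closes this gap without any such summation, and the key point is worth internalizing: one contracts the loops \emph{sequentially from the innermost outward}, and at the $r$:th step the two halves of $\sC_{r}$ can both be slid onto the \emph{same} deformed-hypercube contour in $\widetilde{\Wchamber}_{\ell}$ (the lower half directly, the upper half after carrying $w_{r}$ positively around $x$ and around the $r-1$ already-contracted inner variables). Since the remaining integrals coincide, each loop contributes a single scalar factor $(q^{d-1}-q^{1-d+2(r-1)})$ and the expansion factorizes term by term, giving $\prod_{r=1}^{\ell}(q^{d-1}-q^{1-d+2(r-1)})=q^{{\ell \choose 2}}\prod_{m=1}^{\ell}(q^{d-m}-q^{m-d})$. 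In particular, the $q^{{\ell \choose 2}}$ in the $\FWint$-versus-$\CubeInt$ relation comes from factoring these monodromy binomials, \emph{not} from the symmetrization over orderings of real parts as you assert --- that symmetrization (Lemma \ref{lem: q-combinatorics}(b)) enters only in the separate $\CubeInt$-to-$\SimplexInt$ step and contributes $q^{-{\ell \choose 2}}\qfact{\ell}$; parts (c) and (d) are not needed here at all. Your final sentence about the exponent shift ``reflecting the monodromy contributions of the already-contracted inner loops'' shows you have the right mechanism in mind, but the argument as written does not establish that the $2^{\ell}$ terms pair up onto a common contour, which is the step that makes the factorization --- and hence the lemma --- actually follow.
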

The proof is a straightforward contour manipulation argument, whose details
are left to Appendix~\ref{app: contour manipulations}.

In the next lemma, we write the basis functions
$\FWint^{(x_{0})}$ in terms of the integrals $\CubeInt^{(x_{0})}$, in the case $n=2$.
The integrals $\CubeInt^{(x_{0})}$, in turn, could be written in terms of the real
integrals $\SimplexInt^{(x_{0})}$
by Lemma~\ref{lem: simplex in terms of hypercube}. We use this result
several times in the course of proving the main theorems:
as such in Lemma~\ref{lem: two point closed integration surface}, and in small
variations in Lemma~\ref{lem: asymptotics for hwv in two consequtive} and
Proposition~\ref{prop: closed integration surface}.
\begin{lem}\label{lem: two variable FW basis function}
Let $d_{1},d_{2}\in\bZpos$
and $0\leq l_{1}<d_{1}$, $0\leq l_{2}<d_{2}$. Then we have 
\begin{align*}
\FWint_{l_{1},l_{2}}^{(x_{0})}(x_{1},x_{2})
=\; & q^{\binom{l_{1}}{2}+\binom{l_{2}}{2}}(q-q^{-1})^{l_{1}+l_{2}}\frac{\qfact{d_{1}-1}\qfact{d_{2}-1}}{\qfact{d_{1}-l_{1}-1}\qfact{d_{2}-l_{2}-1}} \\
& \qquad \times \sum_{m=0}^{l_{2}}q^{m(m-l_{2}+d_{1}-1)}\qbin{l_{2}}m\CubeInt_{l_{1}+m,l_{2}-m}^{(x_{0})}(x_{1},x_{2}).
\end{align*}
\end{lem}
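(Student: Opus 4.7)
My plan is to mimic the argument of Lemma \ref{lem: explicit formula for one point FW integral}, deforming each nested loop of the integration surface $\SurfFW_{l_{1},l_{2}}$ onto the real axis one at a time and extracting the phase factors from the multivalued integrand. The overall constant in the claimed identity,
\begin{align*}
q^{{l_{1} \choose 2}+{l_{2} \choose 2}}(q-q^{-1})^{l_{1}+l_{2}}\frac{\qfact{d_{1}-1}\qfact{d_{2}-1}}{\qfact{d_{1}-l_{1}-1}\qfact{d_{2}-l_{2}-1}},
\end{align*}
factors as a product of two copies of the one-variable prefactor from Lemma \ref{lem: explicit formula for one point FW integral}, rewritten in terms of $\CubeInt$ via Lemma \ref{lem: simplex in terms of hypercube}, one for the $l_{1}$ inner loops around $x_{1}$ and one for the $l_{2}$ outer loops around $x_{2}$. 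This strongly suggests processing the inner and outer nests separately. The summation $\sum_{m}\qbin{l_{2}}{m}\,q^{m(m-l_{2}+d_{1}-1)}\CubeInt_{l_{1}+m,l_{2}-m}^{(x_{0})}$ then records, for each outer loop, the two possible destinations after deformation --- either the segment $(x_{1},x_{2})$ or the segment $(x_{0},x_{1})$ --- with $m$ counting how many outer loops are sent across $x_{1}$.

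I would first deform the $l_{1}$ innermost loops around $x_{1}$ precisely as in the proof of Lemma \ref{lem: explicit formula for one point FW integral}. Since these loops do not encircle $x_{2}$, the factors $(w-x_{2})^{-4(d_{2}-1)/\kappa}$ stay positive throughout the deformation, so the phase bookkeeping is identical to the one-variable case: each inner loop collapses, after discarding a negligible small arc near $x_{1}$, onto two line segments in $[x_{0},x_{1}]$ whose integrands differ by the half-turn rephasings $q^{d_{1}-1}$ and $q^{1-d_{1}+2(r-1)}$. Collecting these factors over all $l_1$ inner loops produces $q^{{l_{1}\choose 2}}(q-q^{-1})^{l_{1}}\frac{\qfact{d_{1}-1}}{\qfact{d_{1}-l_{1}-1}}$ times an integral of the rephased integrand over a product of deformed hypercubes in $[x_{0},x_{1}]^{l_{1}}$.

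Next I would deform the $l_{2}$ outer loops around $x_{2}$ one at a time, from innermost to outermost. Each such loop can be pushed onto the real axis as an upper line from $x_{0}$ to $x_{2}$ minus a lower one, the small arc around $x_{2}$ again being negligible. Splitting $[x_{0},x_{2}]=[x_{0},x_{1}]\cup[x_{1},x_{2}]$, each outer loop thus yields a piece localized on $(x_{1},x_{2})$ and a piece localized on $(x_{0},x_{1})$. On $(x_{1},x_{2})$ only $(w-x_{2})^{-4(d_{2}-1)/\kappa}$ is rephased between the upper and lower sides, reproducing the familiar one-variable $x_{2}$-factors of the form $q^{d_{2}-1}-q^{1-d_{2}+2(r-1)}$; on $(x_{0},x_{1})$ both $(w-x_{1})$ and $(w-x_{2})$ rotate under the half-turn, so the phase difference carries an additional $x_{1}$-crossing factor $q^{d_{1}-1}$ versus $q^{1-d_{1}}$. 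Expanding the resulting $2^{l_{2}}$-term product over choices of ``right'' versus ``left'' for each outer loop and regrouping by the number $m$ of outer loops sent to $(x_{0},x_{1})$, I expect to identify each resulting piece with the integrand $\fCube$ on $\widetilde{\sR}_{l_{1}+m,l_{2}-m}$, matching the phase convention at the reference point of $\CubeInt_{l_{1}+m,l_{2}-m}^{(x_{0})}$.

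The main obstacle is the phase bookkeeping in this last step: tracking the $(w_{r}-w_{s})^{8/\kappa}$ interaction factors as each outer variable is pulled past the other outer variables during the deformation. The key combinatorial input should be Lemma \ref{lem: q-combinatorics}(c), which evaluates $\sum_{1\le r_1<\cdots<r_m\le l_2} q^{-2\sum_{j}(r_{j}-j)}=q^{-m(l_{2}-m)}\qbin{l_{2}}{m}$ as a sum over the $\binom{l_{2}}{m}$ possible positions of the outer loops sent left inside the original nested ordering. Combining this with the $m$-fold $x_{1}$-crossing phase $q^{m(d_{1}-1)}$ should produce exactly $q^{m(m-l_{2}+d_{1}-1)}\qbin{l_{2}}{m}$, while the remaining right-side factors assemble into the second one-variable prefactor $q^{{l_{2}\choose 2}}(q-q^{-1})^{l_{2}}\frac{\qfact{d_{2}-1}}{\qfact{d_{2}-l_{2}-1}}$, matching the claim.
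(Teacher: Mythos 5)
Your proposal follows essentially the same route as the paper's proof: collapse the nested loops onto paths anchored at $x_0$ exactly as in Lemma \ref{lem: explicit formula for one point FW integral} (the paper packages the intermediate result as an auxiliary function $\DecoInt^{(x_0)}_{l_1,l_2}$), then split the $l_2$ long paths at $x_1$, track the $x_1$-crossing and $w$-passing phases for the left-going variables, and evaluate the resulting sum over positions with Lemma \ref{lem: q-combinatorics}(c). One small caution: since the outer loops may not cross $x_1+\ii\,\bR_{+}$, both legs of such a loop pass $x_1$ on the same side, so a variable relocated onto $(x_0,x_1)$ acquires the single factor $q^{d_1-1}$ from half a negative turn around $x_1$ rather than a difference ``$q^{d_1-1}$ versus $q^{1-d_1}$''; with that correction your bookkeeping reproduces the stated formula.
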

The proof is postponed to Appendix~\ref{app: contour manipulations}.

In the case of general $n$, the basis functions $\FWint^{(x_{0})}$
can still be written, in principle explicitly, in terms of the real
integrals $\SimplexInt^{(x_{0})}$. The proof of our main results,
however, do not rely on this explicit formula.
\begin{lem}\label{lem: general FW basis function}
We have
\begin{align*}
\FWint_{l_{1},\ldots,l_{n}}^{(x_{0})}(\boldsymbol{x})
=\; & \sum_{m_{1},\ldots,m_{n}}C_{l_{1},\ldots,l_{n}}^{m_{1},\ldots,m_{n}}\times\CubeInt_{m_{1},\ldots,m_{n}}^{(x_{0})}(\boldsymbol{x}),
\end{align*}
with some coefficients $C_{l_{1},\ldots,l_{n}}^{m_{1},\ldots,m_{n}}$,
which are zero unless $\sum_i l_{i}=\sum_i m_{i}$ and $\sum_{i=1}^{j}l_{i}\leq\sum_{i=1}^{j}m_{i}$
for all $j = 1 , \ldots, n$.
\end{lem}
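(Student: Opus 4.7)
The plan is to proceed by iterated contour deformation, generalizing the arguments already carried out in Lemmas \ref{lem: explicit formula for one point FW integral} and \ref{lem: two variable FW basis function}. The key observation is that each loop of $\SurfFW_{l_1,\ldots,l_n}$ starting and ending at $x_0$ can be cut into pieces living in single intervals of the partition $(x_0,x_1), (x_1,x_2), \ldots, (x_{n-1},x_n)$, at the cost of phase factors from the multivalued integrand and a negligible circular-arc contribution (as in the $n=1$ proof, provided $\kappa$ is large enough; the general case then follows by meromorphic continuation in $\kappa$).

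I would perform the decomposition loop by loop, starting from the innermost loop and working outward, and within each group $I^{(i)}$ starting with the loop around $x_i$ that is nearest to $x_i$. A loop encircling $x_i$ and anchored at $x_0$ is first split into a path from $x_0$ to $x_i$ together with a return path from $x_i$ to $x_0$, exactly as in Figure~\ref{sfig: deforming the innermost loop}. The combined path can then be broken at each of the intermediate marked points $x_1,\ldots,x_{i-1}$ into sub-paths lying in a single interval $(x_{j-1},x_j)$ with $j \le i$. Each variable $w_r$ with $r \in I^{(i)}$ thereby gets assigned to one interval indexed by some $j \in \{1,\ldots,i\}$. Let $m_j$ denote the resulting number of variables landing in $(x_{j-1},x_j)$; then $\sum_j m_j = \sum_i l_i = \ell$, and since variables from loops around $x_i$ can only be placed in intervals with index $\le i$, the variables from loops around $x_1,\ldots,x_j$ contribute only to the first $j$ intervals, giving the support constraint $\sum_{i=1}^{j} l_i \le \sum_{i=1}^{j} m_i$ stated in the lemma.

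After the decomposition, each term in the resulting sum is an integral over a product of deformed hypercubes with multiplicities $(m_1,\ldots,m_n)$, with an integrand that agrees with $f^{\approx}_{m_1,\ldots,m_n}(\mathbf{x};\cdot)$ up to a constant phase (determined by the rephasing rule of Remark \ref{rem: defining a rephased branch of integrand}). Folding that constant into the coefficient $C_{l_1,\ldots,l_n}^{m_1,\ldots,m_n}$, each such term is of the form $C \cdot \CubeInt_{m_1,\ldots,m_n}^{(x_0)}(\mathbf{x})$, yielding the claimed expansion.

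The main technical obstacle is combinatorial: ensuring that the homotopy classes of the pieces actually patch together into the integration surfaces defining $\CubeInt_{m_1,\ldots,m_n}^{(x_0)}$, and that the order in which we cut the nested loops is compatible with the rule that, within each group $I^{(i)}$, the path of $w_{r'}$ remains below the path of $w_r$ for $r<r'$ (the same bookkeeping that appears in the $n=2$ case via the auxiliary surface of Figure \ref{fig: decomposition auxiliary integral}). Since the lemma only asserts the existence of the expansion and the support constraints on the coefficients, the precise values of $C_{l_1,\ldots,l_n}^{m_1,\ldots,m_n}$ --- which would require full phase accounting as in Lemma \ref{lem: two variable FW basis function} --- need not be computed.
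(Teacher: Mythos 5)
Your proposal follows essentially the same route as the paper's own (sketched) proof: both generalize the contour-cutting of Lemmas \ref{lem: explicit formula for one point FW integral} and \ref{lem: two variable FW basis function} by splitting each loop anchored at $x_{0}$ around $x_{i}$ into sub-paths lying in the intervals $(x_{j-1},x_{j})$ with $j\leq i$, which immediately yields both support constraints, with the phases absorbed into the coefficients. The paper additionally records an explicit multinomial formula for the coefficients, but as you correctly note, the statement of the lemma does not require it.
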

The proof is sketched in Appendix~\ref{app: contour manipulations}.

The following important particular case already follows from the simple
calculations performed in the proof of 
Lemma~\ref{lem: explicit formula for one point FW integral}.
\begin{lem}\label{lem: vanishing for too many bags}
Whenever $l_{i}\geq d_{i}$
for some $i=1,\ldots,n$, we have
\begin{align*}
\FWint_{l_{1},\ldots,l_{n}}^{(x_{0})}(x_{1},\ldots,x_{n})\equiv\; & 0.
\end{align*}
\end{lem}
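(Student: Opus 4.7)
The plan is to adapt the loop-decomposition argument of Lemma \ref{lem: explicit formula for one point FW integral} to isolate the contribution of the $l_i$ nested loops around the point $x_i$ for which $l_i \geq d_i$. Since the vanishing in the one-point case came from a prefactor $\prod_{r=1}^{\ell}(q^{d-r} - q^{r-d})$ which develops a zero when $r = d$, I expect the same factor to appear for each cluster of loops around $x_i$, and hence drive the whole integral to zero whenever $l_i \geq d_i$ for some $i$.

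Concretely, I would leave the loops around all points $x_j$ with $j \neq i$ untouched, and process only the $l_i$ loops around $x_i$, proceeding from the innermost outward. Each such loop is split into a path from $x_0$ to $x_i$ going below the already-fixed pieces, a small circular arc around $x_i$ (which is negligible as $\eps \searrow 0$ under the convergence assumption \eqref{eq: large enough kappa}, with the general case following by meromorphic continuation in $\kappa$), and a path from $x_i$ back to $x_0$ going above. The branches of the integrand on the two resulting paths differ by the phase picked up when taking the corresponding variable $w_r$ half a turn around $x_i$. For the $s$-th loop around $x_i$ processed this way, the half-turn of $w_r$ around $x_i$ contributes the factor $(q^{d_i - 1} - q^{1 - d_i + 2(s-1)}) = q^{s-1}(q^{d_i - s} - q^{s - d_i})$, exactly as in the one-point case.

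Crucially, the half-turn of $w_r$ around $x_i$ does not run into the loops centered at any other $x_j$, $j \neq i$: the loops around $x_j$ for $j < i$ sit strictly below the paths created by the decomposition, and the loops around $x_j$ for $j > i$ enclose everything in a uniform way that contributes the same finite phase factor to both the "below" and "above" pieces and therefore factors out of the difference. Thus the full integral factorizes as
\begin{align*}
\FWint^{(x_0)}_{l_1, \ldots, l_n}(\mathbf{x}) = \; & q^{\binom{l_i}{2}} \prod_{s=1}^{l_i}\bigl(q^{d_i - s} - q^{s - d_i}\bigr) \times (\text{remaining integral}),
\end{align*}
where the remaining integral collects all contributions from the loops around the other points, the decomposed pieces between $x_0$ and $x_i$, and the uniform phase factors that survived. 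Since $l_i \geq d_i$, the product on the right contains the factor at $s = d_i$, which is $q^0 - q^0 = 0$, and the conclusion follows.

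The main technical obstacle is bookkeeping: one must verify that in the presence of the other loops the phase differences between the "below" and "above" pieces truly only come from turning $w_r$ around $x_i$ and around the already-decomposed inner $w$-variables of the same cluster $I^{(i)}$, and not from the outer $w$-variables or the loops around other $x_j$'s. This is a direct adaptation of the phase computation in the proofs of Lemmas \ref{lem: explicit formula for one point FW integral} and \ref{lem: two variable FW basis function}, and one could alternatively invoke the explicit formula from Lemma \ref{lem: general FW basis function}, whose overall prefactor $\prod_i q^{\binom{l_i}{2}} \prod_{t=1}^{l_i}(q^{d_i - t} - q^{t - d_i})$ visibly vanishes when $l_i \geq d_i$.
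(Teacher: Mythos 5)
Your proposal is correct and follows essentially the same route as the paper: rearrange the $l_i$ loops around $x_i$ as in the one-point case to extract the prefactor containing $\prod_{m=1}^{l_i}(q^{d_i-m}-q^{m-d_i})$, which vanishes at $m=d_i$, and extend to all $\kappa$ by analyticity. The paper's own proof is exactly this (and is even terser about the phase bookkeeping than you are), so no further comment is needed.
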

The short proof is given in Appendix~\ref{app: contour manipulations}.

\subsection{\label{sub: asymptotic integral functions}Mixed integral functions for asymptotics}

To extract the asymptotic behavior of our functions as $|x_{j+1}-x_{j}|\to0$,
we rewrite them in yet another way. We still keep $d_{1},\ldots,d_{n}$
fixed, and for any
\begin{align*}
 & l_{1},\ldots,l_{j-1},l_{j+2},\ldots,l_{n}\in\bZ_{\geq0},\qquad l\in\bZ_{\geq0},\qquad m\in\bZ_{\geq0}
\end{align*}
we define an integration surface
\begin{align*}
 & \SurfAsy_{l_{1},\ldots,l_{j-1};l,m;l_{j+2},\ldots,l_{n}},
\end{align*}
rephased integrand
\begin{align*}
 & \fAsy_{l_{1},\ldots,l_{j-1};l,m;l_{j+2},\ldots,l_{n}}(\boldsymbol{x};\boldsymbol{w}),
\end{align*}
and an integral function
\begin{align*}
 & \AsyInt_{l_{1},\ldots,l_{j-1};l,m;l_{j+2},\ldots,l_{n}}^{(x_{0})}(\boldsymbol{x})
\end{align*}
as follows. The surface $\SurfAsy$ is a mixture between a family
of non-intersecting loops (see Section~\ref{sub: basis functions as FW integrals})
and a deformed hypercube (see Section~\ref{sub: deformed hypercube integral functions})
as illustrated in Figure~\ref{fig: asymptotics integration}.
Trusting that the figure is sufficiently similar to the earlier ones,
we content ourselves to give the following slightly informal descriptions:
\begin{itemize}
\item There are $m$ variables integrated from $x_{j}$ to $x_{j+1}$, and
they turn around each other like in the deformed hypercube integrals
of Section~\ref{sub: deformed hypercube integral functions}.
\item For $i\notin\set{j,j+1}$ there are $l_{i}$ loops around $x_{i}$
starting and ending at the anchor point $x_{0}$, and they are nested
and turning around each other like in the earlier families of non-intersecting
loops of Section~\ref{sub: basis functions as FW integrals}.
\item There are $l$ loops around the entire paths of the deformed hypercube
integrals from $x_{j}$ to $x_{j+1}$, and they are nested and turning
around each other like in the families of non-intersecting loops,
as if the entire deformed hypercube would be a single point.
\item The rephasing and branch choice of the integrand is such that $\fAsy$
is positive at a point $\boldsymbol{w'}$ illustrated by the red circles
in Figure~\ref{fig: asymptotics integration}.
\end{itemize}
\noindent 
\begin{figure}
\includegraphics[width=1.05\textwidth]{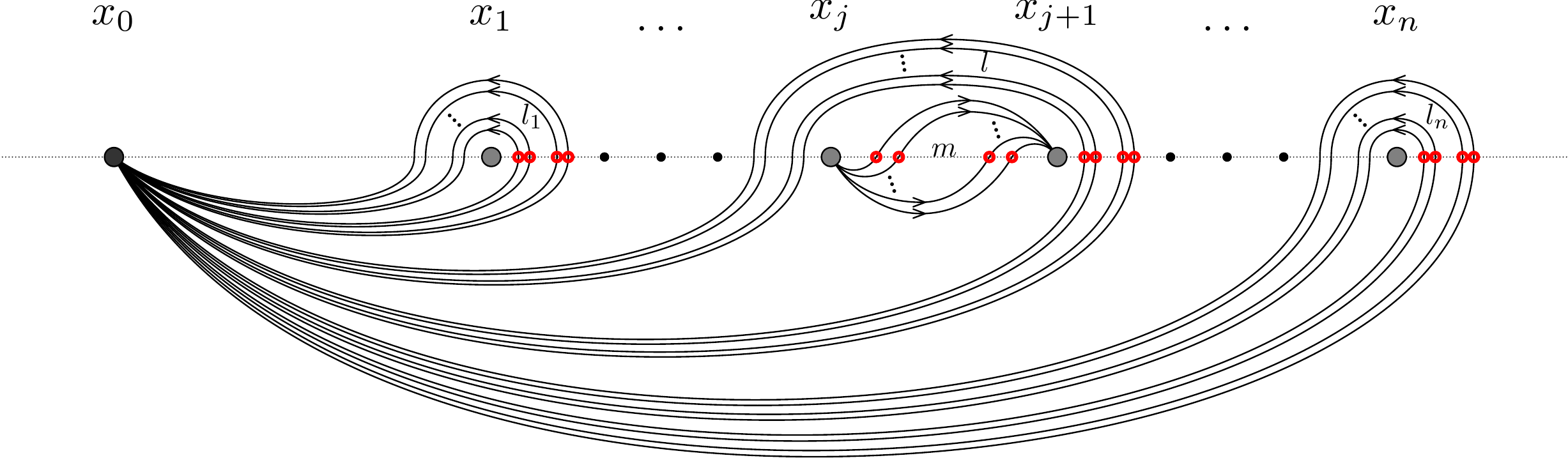}
\caption{\label{fig: asymptotics integration}
The integration surface $\SurfAsy_{l_{1},\ldots;l,m;\ldots,l_{n}}$
is used to construct functions that have an explicit asymptotic behavior
as $|x_{j+1}-x_{j}|\to0$. The point where the integrand $\fAsy{}_{l_{1},\ldots;l,m;\ldots,l_{n}}(\boldsymbol{x};\cdot)$
is rephased to be positive is marked by red circles.}
\end{figure}

The integral function $\AsyInt^{(x_{0})}$ is then defined as
\begin{align*}
\AsyInt_{l_{1},\ldots,l_{j-1};l,m;l_{j+2},\ldots,l_{n}}^{(x_{0})}(\boldsymbol{x})
:=\; & q^{\binom{m}{2}}\frac{1}{\qfact m}
\times\int_{\SurfAsy_{l_{1},\ldots,;l,m;,\ldots,l_{n}}}
\fAsy_{l_{1},\ldots,;l,m;,\ldots,l_{n}}(\boldsymbol{x};\boldsymbol{w})\,\ud\boldsymbol{w}.
\end{align*}
These, just like the integrals $\CubeInt$, are convergent for large
enough $\kappa$, namely $\kappa>4\times(\max(d_{j},d_{j+1})-1)$.
They can also be analytically continued in $\kappa$. The prefactor
is included to make the integrals more closely related to the integrals
over a real simplex --- compare with Lemma~\ref{lem: simplex in terms of hypercube}.

In order to state the results about the asymptotics, we use the exponents
\begin{align}\label{eq: asymptotics exponent}
\Delta_{d}^{d',d''}
:=\; & \frac{2(1+d^{2}-(d')^{2}-(d'')^{2})+\kappa(d'+d''-d-1)}{2\kappa},
\end{align}
and define the multiplicative constants $B_{d}^{d',d''}$ for $\kappa>4\times\big(\max\set{d',d''}-1\big)$
by the convergent integral over an $m$-dimensional simplex, with
$m=\frac{1}{2}\left(d'+d''-1-d\right)$,
\begin{align}\label{eq: generalized beta constants}
B_{d}^{d',d''} := \; & 
\int_{0}^{1}\ud w_{1}\int_{w_{1}}^{1}\ud w_{2}\cdots\int_{w_{m-1}}^{1}\ud w_{m}\;
\prod_{r=1}^{m}w_{r}^{-\frac{4}{\kappa}(d'-1)}
\prod_{r=1}^{m}(1-w_{r})^{-\frac{4}{\kappa}(d''-1)}
\prod_{1\leq r<s\leq m}(w_{s}-w_{r})^{\frac{8}{\kappa}}.
\end{align}
For general values of $\kappa$, the multiplicative constants $B_{d}^{d',d''}$
are defined by the analytic continuation of this generalized beta-function.

\begin{rem}\label{rem: Selberg integral}
\emph{
The multiplicative constant~\eqref{eq: generalized beta constants}
is trivial in the case $d=d'+d''-1$ (that is $m=0$): we then have $B_{d}^{d',d''} = 1$.}
\emph{In the case $d=d'+d''-3$ (that is $m=1$), the multiplicative
constants~\eqref{eq: generalized beta constants} are ordinary beta-functions.}
\emph{In general, \eqref{eq: generalized beta constants} is a Selberg integral, 
which can also
be evaluated in terms of gamma-functions
\cite{Selberg-integral,DF-four_point_correlation_functions,Forrester-log_gases_and_random_matrices}:
\begin{align*}
B_{d}^{d',d''} = \; & \frac{1}{m!} \; \prod_{p=1}^m
    \frac{\Gamma\big( 1 - \frac{4}{\kappa}(d'-p)\big) \; \Gamma\big( 1 - \frac{4}{\kappa}(d''-p)\big) \; \Gamma\big( 1 + \frac{4}{\kappa}p\big)}
        {\Gamma\big( 1 + \frac{4}{\kappa}\big) \; \Gamma\big( 2 - \frac{4}{\kappa}(d'+d''-m-p)\big)}.
\end{align*}
This expression gives explicitly the analytic continuation in $\kappa$, and it in particular
shows that poles and zeroes may only occur in $B_{d}^{d',d''}$ when
$\kappa$ tends to particular rational values.
}
\end{rem}

\begin{rem}
\emph{
In terms of the Kac labeled conformal weights~\eqref{eq: Kac labeled conformal 
weights},
one may express the overall homogeneity degree of
$\FWint_{l_{1},\ldots,l_{n}}^{(x_{0})}$,
given in Lemma~\ref{lem: translation and scaling of basis functions}, as
\begin{align*}
\Delta^{d_{1},\ldots,d_{n}}(\ell) = \Delta^{d_{1},\ldots,d_{n}}_d
=\; & h_{1,d}-\sum_{i=1}^{n}h_{1,d_{i}},
\end{align*}
where $d = \sum_{i=1}^{n}d_{i}-n+1-2\ell$ can be understood as one
of the dimensions of the irreducible subrepresentations of the tensor
product $\bigotimes_{i}\Wd_{d_{i}}$. We furthermore note that the
exponents~\eqref{eq: asymptotics exponent} are of this form, 
\begin{align*}
\Delta^{d_{1},d_{2}}(\ell)
=\; & \Delta_{d}^{d_{1},d_{2}}\qquad\text{ with }\qquad d=d_{1}+d_{2}-1-2\ell.
\end{align*}
}
\end{rem}

We now show an asymptotics property, which is particularly simple
for the functions $\AsyInt^{(x_{0})}$. This will be used later, in
Proposition~\ref{prop: asymptotics with subrepresentations}, to establish
the general asymptotics statement.
\begin{lem}\label{lem: asymptotic basis functions}
As $x_{j}$ and $x_{j+1}$
tend to a common limit $\xi$ (with $x_{j-1}<\xi<x_{j+2}$), we have
\begin{align*}
\frac{\AsyInt_{l_{1},\ldots,l_{j-1};l,m;l_{j+2},\ldots,l_{n}}^{(x_{0})}(x_{1},\ldots,x_{n})}{(x_{j+1}-x_{j})^{\Delta}} \;\longrightarrow\;\; & 
B\times\FWint_{l_{1},\ldots,l_{j-1},l,l_{j+2},\ldots,l_{n}}^{(x_{0})}(x_{1},\ldots,x_{j-1},\xi,x_{j+2},\ldots,x_{n}),
\end{align*}
where $d=d_{j}+d_{j+1}-1-2m$, the exponent $\Delta$ and the constant
$B$ are given by
\begin{align*}
 & \Delta=\Delta_{d}^{d_{j},d_{j+1}}\qquad\text{and}\qquad B=B_{d}^{d_{j},d_{j+1}},
\end{align*}
and $\FWint^{(x_{0})}$ on the right hand side is the basis function
of $n-1$ points associated to the choice of dimensions $d_{1},\ldots,d_{j-1},d,d_{j+2},\ldots,d_{n}$.
\end{lem}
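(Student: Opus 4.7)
The plan is to rescale the $m$ integration variables that live between $x_{j}$ and $x_{j+1}$ to a fixed interval, extract the resulting homogeneity factor in $t := x_{j+1} - x_{j}$, and identify the limit integral.

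First I would perform the change of variables $w_{r} = x_{j} + t\,u_{r}$ for the $m$ variables $w_{r}$ in the deformed hypercube portion of $\SurfAsy_{l_{1},\ldots;l,m;\ldots,l_{n}}$, while leaving all loop variables untouched. The Jacobian produces a factor $t^{m}$, and all factors of the multivalued integrand in \eqref{eq: integrand with generic phase} that involve only $x_{j}$, $x_{j+1}$ and the rescaled variables $u_{r}$ pull out the power
\[
t^{\frac{2}{\kappa}(d_{j}-1)(d_{j+1}-1)-\frac{4m}{\kappa}(d_{j}-1)-\frac{4m}{\kappa}(d_{j+1}-1)+\frac{4m(m-1)}{\kappa}}.
\]
Adding the Jacobian, a direct expansion of $d = d_{j}+d_{j+1}-1-2m$ shows this total power equals exactly $\Delta_{d}^{d_{j},d_{j+1}}$ as defined in \eqref{eq: asymptotics exponent}. (The algebraic check uses only that the numerator $1 + d^{2} - d_{j}^{2} - d_{j+1}^{2}$ expands via $d^{2} = (d_{j}+d_{j+1}-1)^{2} - 4m(d_{j}+d_{j+1}-1) + 4m^{2}$.)

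Next I would separate the resulting integral into an inner integral over the $u_{r}$ and an outer integral over the loop variables. For fixed positions of the loop variables the inner integral, taken over the $m$-dimensional deformed hypercube $\widetilde{\sR}_{m}$ and equipped with the prefactor $q^{\binom{m}{2}}/\qfact{m}$, is, by Lemma~\ref{lem: simplex in terms of hypercube} applied to the single-point case ($n=1$), exactly the integral over the real simplex $0 < u_{1} < \cdots < u_{m} < 1$ of the positive rephased integrand. As $t \to 0$ this inner integral is independent of the loop variables and of $t$, and it coincides with the defining integral \eqref{eq: generalized beta constants} of $B_{d}^{d_{j},d_{j+1}}$ (first for $\kappa$ large enough that the integrals converge absolutely, then by analytic continuation in $\kappa$).

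For the outer integral, I would observe that for every loop variable $w$ integrated along an $l$-loop, the combination of factors involving the $m$ rescaled variables and $x_{j},x_{j+1}$,
\[
(w-x_{j})^{-\frac{4}{\kappa}(d_{j}-1)}\,(w-x_{j+1})^{-\frac{4}{\kappa}(d_{j+1}-1)}\,\prod_{r=1}^{m}(w-w_{r})^{\frac{8}{\kappa}},
\]
converges uniformly on compact subsets away from $\xi$ to $(w-\xi)^{-\frac{4}{\kappa}(d-1)}$, because $d-1 = d_{j}+d_{j+1}-2-2m$. Dominated convergence (for $\kappa$ large enough to ensure convergence, with subsequent analytic continuation, as already invoked in Section~\ref{sub: real integral functions}) then identifies the limit outer integral, together with the loops around the other points $x_{i}$, with the basis function $\FWint^{(x_{0})}_{l_{1},\ldots,l_{j-1},l,l_{j+2},\ldots,l_{n}}$ evaluated at $(x_{1},\ldots,x_{j-1},\xi,x_{j+2},\ldots,x_{n})$ for the reduced choice of dimensions $d_{1},\ldots,d_{j-1},d,d_{j+2},\ldots,d_{n}$.

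The step I expect to be the main obstacle is the verification that branches match up correctly under this degeneration. The surfaces $\SurfAsy$ and $\SurfFW$ carry rephased integrands $\fAsy$ and $\fFW$ normalized to be positive at the respective points $\mathbf{w}'$ indicated in Figures~\ref{fig: asymptotics integration} and~\ref{fig: bag integration}, so one has to track that the rescaled positive point for $\fAsy$ maps, in the limit $t \to 0$, to the positive point for $\fFW$ on the reduced surface --- equivalently, that the $l$ loops of $\SurfAsy$, which encircle the entire $m$-variable deformed hypercube between $x_{j}$ and $x_{j+1}$ as if it were one point, turn in the limit into $l$ loops encircling $\xi$ with the correct rephasing. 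Using the single-valued logarithmic differential of Remark~\ref{rem: defining a rephased branch of integrand}, this matching can be verified by continuously deforming the reference point $\mathbf{w}'$ of $\fAsy$ within the common domain of definition, and the contribution of each $m$-variable on the positive simplex branch to the phase of each $l$-loop variable is trivial, so no additional phase factor is produced.
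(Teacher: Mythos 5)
Your proposal is correct and follows essentially the same route as the paper's proof: rescale (or equivalently, directly take the limit of) the $m$ deformed-hypercube variables to identify their contribution as $(x_{j+1}-x_{j})^{\Delta}\times B_{d}^{d_{j},d_{j+1}}$ times the reduced integrand, apply dominated convergence to the remaining loop variables which stay bounded away from $\xi$, and extend from large $\kappa$ by analyticity. The paper states this more tersely, while you additionally spell out the exponent bookkeeping and the branch-matching check, both of which are consistent with the definitions in Sections~\ref{sub: deformed hypercube integral functions} and~\ref{sub: asymptotic integral functions}.
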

\begin{proof}
As a warm-up, note that by a simple scaling 
(compare with Lemma~\ref{lem: translation and scaling of basis functions}),
the integrals $\CubeInt_{0,m}^{(x_{0})}(x_{1},x_{2})$ over deformed
hypercubes have the behavior
\begin{align*}
q^{\binom{m}{2}}\frac{1}{\qfact m}\times\CubeInt_{0,m}^{(x_{0})}(x_{1},x_{2})=\SimplexInt_{0,m}^{(x_{0})}(x_{1},x_{2})
=\; & (x_{2}-x_{1})^{\Delta_{d}^{d_{1},d_{2}}}\times B_{d}^{d_{1},d_{2}},
\end{align*}
for $\kappa$ large enough so that both sides are given by convergent
integrals.
In the general case, the same idea is used with the dominated convergence theorem
applied to a subset of the integration variables.

Consider the factors in the integrand
$\fAsy_{l_{1},\ldots,;l,m;,\ldots,l_{n}}(\boldsymbol{x};\boldsymbol{w})$
that involve $x_{j}$ or $x_{j+1}$, or any of the $m$ variables
integrated over the deformed hypercube in $\SurfAsy_{l_{1},\ldots,;l,m;,\ldots,l_{n}}$.
The integral over these $m$ variables, divided by $|x_{j+1}-x_j|^{\Delta}$,
tends to the integrand of the function
$\FWint_{l_{1},\ldots,l_{j-1},l,l_{j+2},\ldots,l_{n}}^{(x_{0})}(x_{1},\ldots,x_{j-1},\xi,x_{j+2},\ldots,x_{n})$,
times $B$.
Notice that the other integration contours remain bounded away
from these points, so dominated convergence theorem gives the asserted limit.

To conclude, note that both sides of the asserted formula are analytic in $\kappa$.
\end{proof}

\bigskip{}

\section{\label{sec: FW correspondence}The spin chain~- Coulomb gas correspondence}

In this section, we define the spin chain~- Coulomb gas correspondence,
and show how the representation theoretical properties are translated
to properties of the functions. 
As indicated in the introduction, the underlying idea in our construction
is Euler integral solutions to partial differential equations.
Informally, the effect of the quantum group is to act on the functions
of Section~\ref{sec: various integral functions} by modifying the
integration surfaces and branch choices of the integrand. 

The basic ingredients of Euler integral solutions
are the closedness of the integration surface, which is the content of
Sections~\ref{sub: general case of closed integration surface} and~\ref{sub: integration by parts formula},
and the exactness of the form where a differential operator acts on the
integrand, which is covered in Section~\ref{sub: differential equations}.
The asymptotics properties, which are crucial in applications for handling boundary
conditions, are analyzed in Section~\ref{sub: asymptotics with subrepresentations}.
M\"obius covariance is treated in Section~\ref{sub: Special conformal transformations}.
In summary, by a succession of small steps, which
combine formulas for the representations of the quantum group given
in Section~\ref{sec: q stuff} with properties of the functions established
in Section~\ref{sec: various integral functions}, in the end of this
section we will have proven the main result, whose precise formulation
is given in Theorems~\ref{thm: SCCG correspondence non-hwv} and 
\ref{thm: SCCG correspondence hwv} in Section~\ref{sub: the main correspondence result}.

From Section~\ref{sec: q stuff}, we use in particular the $d$-dimensional irreducible
representation $\Wd_{d}$ of the quantum group $\Uqsltwo$, defined
in Lemma~\ref{lem: representations of quantum sl2}. 
Its basis $(\Wbas_{l}^{(d)})_{l=0,1,\ldots,d-1}$,
introduced in the same lemma, is also used below. For simplicity of
notation, we often omit the superscript reference to the dimension
$d$.

\subsection{\label{sub: definition of the correspondence}Definition of the correspondence}

Fix $d_{1},\ldots,d_{n}\in\bZpos$. Denote by $\Wd_{d_{i}}$ the
$d_{i}$-dimensional irreducible representation of $\Uqsltwo$ defined
in Section~\ref{subsub: irreps of the quantum group}. Consider the
tensor product representation
\begin{align}\label{eq: order of tensorands}
\bigotimes_{i=1}^{n}\Wd_{d_{i}}
=\; & \Wd_{d_{n}}\tens\Wd_{d_{n-1}}\tens\cdots\tens\Wd_{d_{2}}\tens\Wd_{d_{1}}.
\end{align}
The order of tensorands will always be as shown on the right hand
side, but for brevity we usually use the notation on the left hand
side, and the above order is implicitly understood.

To construct the correspondence, we define mappings
\begin{align}\label{eq: the SCCG mapping}
\sF_{d_{1},\ldots,d_{n}}^{(x_{0})} \;\colon\; & 
\bigotimes_{i=1}^{n}\Wd_{d_{i}} \;\rightarrow\; \sC^{\infty}(\chamber_{n}^{(x_{0})})
\end{align}
from the tensor products~\eqref{eq: order of tensorands} to smooth
functions on the restricted chambers~\eqref{eq: restricted chamber}.
To simplify the notation, when the dimensions $d_{1},\ldots,d_{n}$
are clear from the context, we usually omit the subscripts and write
simply $\sF^{(x_{0})}=\sF_{d_{1},\ldots,d_{n}}^{(x_{0})}$. In the
representation $\Wd_{d_{i}}$, let $\Wbas_{l_{i}}$ denote the basis
vector obtained by applying the generator $F$ repeatedly $l_{i}$
times to the highest weight vector $\Wbas_{0}$, as in 
Section~\ref{subsub: irreps of the quantum group}.
The mapping~\eqref{eq: the SCCG mapping} is defined by setting the
images of the tensor product basis vectors to be the functions defined
in Section~\ref{sub: basis functions as FW integrals}, 
\begin{align}\label{eq: SCCG basis vector images}
\left(\sF^{(x_{0})}[\Wbas_{l_{n}}\tens\cdots\tens\Wbas_{l_{1}}]\right)(x_{1},\ldots,x_{n})
=\; & \FWint_{l_{1},\ldots,l_{n}}^{(x_{0})}(x_{1},\ldots,x_{n}),
\end{align}
and extending linearly.

\subsection{\label{sub: two point closed integration contour}Auxiliary formula for two points}

The following result will be used both for the proofs of asymptotics
in Section~\ref{sub: asymptotics with subrepresentations}, 
and for the anchor point independence, 
Proposition~\ref{prop: closed integration surface}.
In fact, this result is literally a special case of the latter.
\begin{lem}\label{lem: two point closed integration surface}
Let $\Tbas_{0}^{(d;d_{1},d_{2})} \in \Wd_{d_{2}}\tens\Wd_{d_{1}}$ be 
as in~\eqref{eq: tensor product hwv},
with $d=d_{1}+d_{2}-1-2m$. Then we have
\begin{align}\label{eq: explicit two point hwv function}
\left(\sF^{(x_{0})}[\Tbas_{0}^{(d;d_{1},d_{2})}]\right)(x_{1},x_{2})
=\; & q^{\binom{m}{2}}\frac{1}{\qfact m}\times\CubeInt_{0,m}^{(x_{0})}(x_{1},x_{2})
=\; \SimplexInt_{0,m}^{(x_{0})}(x_{1},x_{2}).
\end{align}
More generally, whenever $v\in\Wd_{d_{2}}\tens\Wd_{d_{1}}$
is such that $E.v=0$, then we have 
\begin{align*}
\left(\sF^{(x_{0})}[v]\right)(x_{1},x_{2})
=\; & \sum_{m}c_{m}\times\CubeInt_{0,m}^{(x_{0})}(x_{1},x_{2})
\end{align*}
for some coefficients $c_{m}\in\bC$ and, in particular, 
$\left(\sF^{(x_{0})}[v]\right)(x_{1},x_{2})$
is then independent of $x_{0}$.
\end{lem}
\begin{proof}
Equation~\eqref{eq: tensor product hwv} gives
\begin{align*}
v=\; & \Tbas_{0}^{(d;d_{1},d_{2})}
= \sum_{k=0}^{m}T_{0;m}^{k,m-k}\times(\Wbas_{m-k}\tens\Wbas_{k}),
\end{align*}
where 
$T_{0;m}^{k,m-k}$ are given by~\eqref{eq: tensor product hwv coefficients}.
Correspondingly, we have
\begin{align*}
\left(\sF^{(x_{0})}[\Tbas_{0}^{(d;d_{1},d_{2})}]\right)(x_{1},x_{2})
=\; & \sum_{k=0}^{m}T_{0;m}^{k,m-k}\times\FWint_{k,m-k}^{(x_{0})}(x_{1},x_{2}) .
\end{align*}
We use Lemma~\ref{lem: two variable FW basis function} to rewrite
$\FWint_{k,m-k}^{(x_{0})}$ as a linear combination of $\CubeInt_{k+t,m-k-t}^{(x_{0})}$,
for $t=0,1,\ldots,m-k$. After straightforward simplifications and
a change of the summation index $k$ to $u=k+t$, we obtain
\begin{align*}
\sF^{(x_{0})}[\Tbas_{0}^{(d;d_{1},d_{2})}
]=\; & \sum_{u=0}^{m}\CubeInt_{u,m-u}^{(x_{0})}\times
\Bigg((-1)^{u}q^{\binom{m}{2}+u(d_{1}-m)}\frac{1}{\qfact{m-u}}\sum_{t=0}^{u}(-1)^{-t}q^{t(u-1)}\frac{1}{\qfact t\qfact{u-t}}\Bigg).
\end{align*}
The above sum over $t$ simplifies by Lemma~\ref{lem: q-combinatorics}(d) to
\begin{align*}
\frac{1}{\qfact u}\sum_{t=0}^{u}(-1)^{t}q^{t(u-1)}\qbin ut
=\; & \frac{1}{\qfact u} \,q^{\half u(u-1)}\prod_{s=0}^{u-1}(q^{-s}-q^{s})
=\begin{cases}
1 & \text{ if }u=0\\
0 & \text{ if }u>0
\end{cases}.
\end{align*}
Thus finally, for $d=d_{1}+d_{2}-1-2m$, we obtain the expression
\begin{align*}
\left(\sF^{(x_{0})}[\Tbas_{0}^{(d;d_{1},d_{2})}]\right)(x_{1},x_{2})
=\; & q^{\binom{m}{2}}\frac{1}{\qfact m}\times\CubeInt_{0,m}^{(x_{0})}(x_{1},x_{2}).
\end{align*}
This and Lemma~\ref{lem: simplex in terms of hypercube} give the first asserted
formula~\eqref{eq: explicit two point hwv function}. 
By Lemma~\ref{lem: tensor product representations of quantum sl2}
and Remark~\ref{rem: tensor product different bases}, the vectors
$v\in\Wd_{d_{2}}\tens\Wd_{d_{1}}$ that satisfy $E.v=0$ are linear
combinations of the vectors $\Tbas_{0}^{(d;d_{1},d_{2})}$, so
the second statement follows.
\end{proof}

\subsection{\label{sub: asymptotics with subrepresentations}Asymptotics via projections to subrepresentations}
The detailed understanding of the asymptotics of functions is
fundamentally important in applications, since it pertains to the boundary conditions
of solutions. Our method describes the asymptotics conceptually with the underlying
quantum group in terms of projections to subrepresentations.

The asymptotics are easiest for the functions $\AsyInt_{l_{1},\ldots,l_{j-1};0,m;l_{j+2},\ldots,l_{n}}^{(x_{0})}$
of Section~\ref{sub: asymptotic integral functions}. Here it is therefore
desirable to write the function $\sF[v](\boldsymbol{x})$ in terms of
these functions. We begin with the following particular case.

\begin{lem}
\label{lem: asymptotics for hwv in two consequtive}
Let $\Tbas_{0}^{(d;d_{j},d_{j+1})}\in\Wd_{d_{j+1}}\tens\Wd_{d_{j}}$
be as in Lemma~\ref{lem: tensor product representations of quantum sl2},
with $d=d_{j}+d_{j+1}-1-2m$, and suppose that
\begin{align*}
v=\; & 
\Wbas_{l_{n}}\tens\cdots\tens\Wbas_{l_{j+2}}\tens\Tbas_{0}^{(d;d_{j},d_{j+1})}\tens\Wbas_{l_{j-1}}\tens\cdots\tens\Wbas_{l_{1}}.
\end{align*}
Then we have 
\begin{align*}
\sF^{(x_{0})}[v](\boldsymbol{x})
=\; & \AsyInt_{l_{1},\ldots,l_{j-1};0,m;l_{j+2},\ldots,l_{n}}^{(x_{0})}(\boldsymbol{x}).
\end{align*}
\end{lem}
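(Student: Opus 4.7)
The plan is to reduce this statement to the two-point case already handled in Lemma \ref{lem: two point closed integration surface}, using the fact that the special highest weight vector $\Tbas_0^{(d;d_j,d_{j+1})}$ occupies only the $j$-th and $(j+1)$-st tensor slots, while the rest of the vector factors as a simple tensor product basis vector. The loops in $\SurfFW_{l_1,\ldots,k,m-k,\ldots,l_n}$ that encircle $x_i$ for $i\notin\{j,j+1\}$ coincide with those of $\SurfAsy_{l_1,\ldots;0,m;\ldots,l_n}$, so the entire argument should be local to the two middle slots.

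Concretely, I would first expand
\[
\Tbas_0^{(d;d_j,d_{j+1})} = \sum_{k=0}^{m} T_{0;m}^{k,m-k}(d_j,d_{j+1}) \, \Wbas_{m-k}\tens \Wbas_k,
\]
so that by linearity
\[
\sF^{(x_0)}[v] = \sum_{k=0}^{m} T_{0;m}^{k,m-k}(d_j,d_{j+1}) \, \FWint^{(x_0)}_{l_1,\ldots,l_{j-1},k,m-k,l_{j+2},\ldots,l_n}.
\]
Next, I would repeat the contour-decomposition technique from Lemmas \ref{lem: explicit formula for one point FW integral} and \ref{lem: two variable FW basis function}, but applied only to the $k$ loops around $x_j$ and the $m-k$ loops around $x_{j+1}$: cut each of them at $x_0$ into two strands, then split the loops around $x_{j+1}$ according to whether they end at $x_j$ or continue from $x_j$ to $x_{j+1}$. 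This is exactly the decomposition that led in Lemma \ref{lem: two variable FW basis function} to the sum $\sum_u q^{(d_j-1+u-(m-k))u}\qbin{m-k}{u}\CubeInt^{(x_0)}_{k+u,m-k-u}$ in the two-point case, and the same manipulation is available here because the innermost loops (indices $i<j$) and outermost loops (indices $i>j+1$) of $\SurfFW$ do not obstruct the rearrangement inside the annulus between $x_{j-1}$ and $x_{j+2}$.

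After substituting the coefficients $T_{0;m}^{k,m-k}$ and regrouping by the total number $u$ of variables eventually placed between $x_j$ and $x_{j+1}$, the inner sum over $k$ (equivalently $t = k+u-$const in the notation of Lemma \ref{lem: two point closed integration surface}) is exactly the alternating $q$-binomial sum
\[
\frac{1}{\qfact{u}} \sum_{t=0}^{u} (-1)^{t}\, q^{t(u-1)}\, \qbin{u}{t},
\]
which by Lemma \ref{lem: q-combinatorics}(d) collapses to $1$ when $u=m$ and to $0$ otherwise. The single surviving term is $q^{\binom{m}{2}}/\qfact{m}$ times an integral whose integration surface consists of the original loops around the $x_i$ for $i\neq j,j+1$ together with $m$ variables on a deformed hypercube running from $x_j$ to $x_{j+1}$; this is by definition $\SurfAsy_{l_1,\ldots;0,m;\ldots,l_n}$, and the prefactor matches the normalization in the definition of $\AsyInt^{(x_0)}$.

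The main obstacle I anticipate is careful bookkeeping of the phase factors acquired when the rephased branch of $\fFW$ is transported to the rephased branch of $\fAsy$. In particular, one must check that the phases picked up from the factors $(w_r-x_i)^{-4(d_i-1)/\kappa}$ with $i\notin\{j,j+1\}$ precisely cancel between the two strands of each decomposed loop, so that only the phases involving $x_j$, $x_{j+1}$ and the other $w$-variables in the $j,j+1$ slots remain. This follows from the nesting rules of the contours and the choice of reference point where each integrand is declared positive, but it is the place where the proof genuinely uses the definition of $\AsyInt$ with the extra loops around the hypercube counted as $0$.
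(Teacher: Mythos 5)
Your proposal is correct and follows essentially the same route as the paper: expand $\Tbas_{0}^{(d;d_{j},d_{j+1})}$ in the tensor product basis, perform the two-variable loop decomposition of Lemma \ref{lem: two variable FW basis function} locally on the loops around $x_{j}$ and $x_{j+1}$ (the other loops not obstructing), and collapse the resulting double sum with the $q$-binomial identity of Lemma \ref{lem: q-combinatorics}(d), exactly as in Lemma \ref{lem: two point closed integration surface}. The only slip is notational: with the inner sum written as $\frac{1}{\qfact{u}}\sum_{t=0}^{u}(-1)^{t}q^{t(u-1)}\qbin{u}{t}$ the surviving index is $u=0$ rather than $u=m$ --- in the paper's convention $u$ counts the variables anchored at $x_{0}$ --- but this is the same configuration you describe, namely all $m$ variables on the deformed hypercube between $x_{j}$ and $x_{j+1}$ with prefactor $q^{{m \choose 2}}/\qfact{m}$.
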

\begin{proof}
Notice first that for $n=2$ we have $\AsyInt_{;0,m;}^{(x_{0})}(x_{1},x_{2})=q^{\binom{m}{2}}\frac{1}{\qfact m}\times\CubeInt_{0,m}^{(x_{0})}(x_{1},x_{2})$,
so the assertion is the same as 
Equation~\eqref{eq: explicit two point hwv function}.
The general case is similar. Indeed, the vector $v\in\bigotimes_{i=1}^{n}\Wd_{d_{i}}$
can be expanded in the standard tensor product basis as 
\begin{align*}
v=\; & \sum_{k=0}^{m}T_{0;m}^{k,m-k}(d_{j},d_{j+1})\times
\big(\Wbas_{l_{n}}^{(d_{n})}\tens\cdots\tens\Wbas_{l_{j+2}}^{(d_{j+2})}\tens\Wbas_{m-k}^{(d_{j+1})}\tens\Wbas_{k}^{(d_{j})}\tens\Wbas_{l_{j-1}}^{(d_{j-1})}\tens\cdots\tens\Wbas_{l_{1}}^{(d_{1})}\big),
\end{align*}
according to Lemma~\ref{lem: tensor product representations of quantum sl2},
Equations~\eqref{eq: tensor product hwv} and \eqref{eq: tensor product hwv coefficients}.
Correspondingly, we obtain
\begin{align*}
\sF^{(x_{0})}[v](\boldsymbol{x})
=\; & \sum_{k=0}^{m}T_{0;m}^{k,m-k}(d_{j},d_{j+1})\times
\FWint_{l_{1},\ldots,l_{j-1},k,m-k,l_{j+2},\ldots,l_{n}}^{(x_{0})}(\boldsymbol{x}).
\end{align*}
The terms in this sum are all integrals of the same integrand (up
to phase factors). As for the integration contours, the number of
loops around any $x_{i}$ is fixed, with the exceptions of $i=j$
and $i=j+1$. The loops around $x_{j}$ and $x_{j+1}$ can be combined
by essentially the same calculations that were done in the proofs of
Lemmas~\ref{lem: two variable FW basis function}
and \ref{lem: two point closed integration surface}, with the asserted
result.
\end{proof}
With the above particular case established, we now proceed to the
following generalization.
\begin{lem}\label{lem: asymptotics for tensor basis in two consequtive}
Denote $\tau_\ell = F^\ell.\tau_0$. If
\begin{align*}
v=\; & 
\Wbas_{l_{n}}\tens\cdots\tens\Wbas_{l_{j+2}}\tens\Tbas_{l}^{(d;d_{j},d_{j+1})}\tens\Wbas_{l_{j-1}}\tens\cdots\tens\Wbas_{l_{1}},
\end{align*}
then (with $d=d_{j}+d_{j+1}-1-2m$)
\begin{align*}
\sF^{(x_{0})}[v](\boldsymbol{x})
=\; & \AsyInt_{l_{1},\ldots,l_{j-1};l,m;l_{j+2},\ldots,l_{n}}^{(x_{0})}(\boldsymbol{x}).
\end{align*}
\end{lem}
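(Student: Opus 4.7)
The plan is to proceed by induction on $l$. The base case $l = 0$ is precisely Lemma \ref{lem: asymptotics for hwv in two consequtive}. For the inductive step, by Remark \ref{rem: tensor product different bases} we have $\Tbas_{l+1}^{(d;d_j,d_{j+1})} = F.\Tbas_l^{(d;d_j,d_{j+1})}$, where $F$ acts on $\Wd_{d_{j+1}} \tens \Wd_{d_j}$ via the coproduct $\Hcp(F) = F \tens 1 + K^{-1} \tens F$. Expanding $\Tbas_l = \sum_{a,b} T_{l;m}^{a,b}(d_j,d_{j+1}) \, \Wbas_b^{(d_{j+1})} \tens \Wbas_a^{(d_j)}$ and applying $F$ via this coproduct yields the recursion
\[
T_{l+1;m}^{a,b} \;=\; T_{l;m}^{a,b-1} \;+\; q^{-(d_{j+1}-1-2b)} \, T_{l;m}^{a-1,b} .
\]
Consequently, $\sF^{(x_0)}[v]$ with $v = v^{(l+1)}$ can be written as an explicit linear combination of basis functions $\FWint^{(x_0)}_{l_1,\ldots,l_{j-1},a,b,l_{j+2},\ldots,l_n}$ with shifted indices $a,b$ at positions $j$ and $j+1$, and with the coefficients above.

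On the integration side, the surface $\SurfAsy_{l_1,\ldots;l+1,m;\ldots,l_n}$ differs from $\SurfAsy_{l_1,\ldots;l,m;\ldots,l_n}$ by a single additional outermost loop encircling the deformed hypercube together with its $l$ existing nested loops. I would cut this new loop open by the contour-deformation technique used in the proofs of Lemmas \ref{lem: explicit formula for one point FW integral} and \ref{lem: two variable FW basis function}, deforming it into a pair of paths between $x_0$ and the hypercube region. Carefully tracking the rephasings accumulated when the new variable half-turns around $x_j$, $x_{j+1}$, the $l$ pre-existing nested loops, and the $m$ hypercube variables, one should obtain a representation of $\AsyInt^{(x_0)}_{l_1,\ldots;l+1,m;\ldots,l_n}$ as precisely the same linear combination of $\FWint^{(x_0)}$ functions that arose on the vector side. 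Combining this with the induction hypothesis $\sF^{(x_0)}[v^{(l)}] = \AsyInt^{(x_0)}_{l_1,\ldots;l,m;\ldots,l_n}$ then closes the induction.

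The hard part will be the precise matching of $q$-phase factors. The coefficient $q^{-(d_{j+1}-1-2b)}$ in the recursion arises from the $K^{-1}$ eigenvalue on $\Wbas_b^{(d_{j+1})}$ in the coproduct of $F$, and must be reproduced on the geometric side as the phase gained by the new outermost loop as it half-turns around $x_{j+1}$ and the appropriate subcollection of existing contours. This kind of bookkeeping is essentially a refinement of the arguments in Lemmas \ref{lem: two variable FW basis function} and \ref{lem: two point closed integration surface}, now accounting for $l$ pre-existing nested loops around the hypercube, and should resolve using the $q$-combinatorial identities of Lemma \ref{lem: q-combinatorics}, in particular parts (c) and (d).
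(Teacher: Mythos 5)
Your proposal follows essentially the same route as the paper: induction on $l$ with Lemma \ref{lem: asymptotics for hwv in two consequtive} as the base case, the coproduct formula $\Hcp(F)=F\tens1+K^{-1}\tens F$ giving the two-term recursion for the coefficients $T_{l;m}^{l_{j},l_{j+1}}$, and a matching recursion on the integral side obtained by decomposing the one extra outermost contour of $\SurfAsy_{\ldots;l,m;\ldots}$ relative to $\SurfAsy_{\ldots;l-1,m;\ldots}$. One small correction to your sketch: the extra loop should be decomposed not into a pair of paths but into two loops anchored at $x_{0}$ that surround the $x_{j}$-cluster and the $x_{j+1}$-cluster separately (so that each piece again yields an $\FWint^{(x_{0})}$-type contour with $l_{j}$ or $l_{j+1}$ incremented), after which comparing phase factors reproduces exactly the two terms of the coefficient recursion.
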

\begin{proof}
By definition of the correspondence, we have 
\begin{align*}
\sF^{(x_{0})}[v](\boldsymbol{x})
=\; & \sum_{l_{j},l_{j+1}}T_{l;m}^{l_{j},l_{j+1}}(d_{j},d_{j+1})\times
\FWint_{l_{1},\ldots,l_{n}}^{(x_{0})}(\boldsymbol{x}),
\end{align*}
with the coefficients $T_{l;m}^{l_{j},l_{j+1}}(d_{j},d_{j+1})$ given
by Equation~\eqref{eq: tensor product submodule basis} for $\Tbas_{l}^{(d;d_{j},d_{j+1})}=F^{l}.\Tbas_{0}^{(d;d_{j},d_{j+1})}$.
Therefore, our goal is to rewrite, for any $l\geq0$, 
\begin{align}\label{eq: asymptotic function in terms of basis functions}
\AsyInt_{l_{1},\ldots,l_{j-1};l,m;l_{j+2},\ldots,l_{n}}^{(x_{0})}(\boldsymbol{x})
=\; & \sum_{l_{j},l_{j+1}}T_{l;m}^{l_{j},l_{j+1}}(d_{j},d_{j+1})\times\FWint_{l_{1},\ldots,l_{n}}^{(x_{0})}(\boldsymbol{x}),
\end{align}
with the same coefficients. To achieve this, we will proceed by recursion
on $l$. The base case $l=0$ was the content of 
Lemma~\ref{lem: asymptotics for hwv in two consequtive}.
Using the formula~\eqref{eq: coproduct} for the coproduct $\Hcp(F)$,
we see that the coefficients~\eqref{eq: tensor product submodule basis}
satisfy the recursion
\begin{align*}
T_{l;m}^{l_{j},l_{j+1}}
=\; & q^{1-d_{j+1}+2l_{j+1}}\,T_{l-1;m}^{l_{j}-1,l_{j+1}}
+T_{l-1;m}^{l_{j},l_{j+1}-1}.
\end{align*}
Hence, it suffices to show the same recursion for the coefficients
appearing in Equation~\eqref{eq: asymptotic function in terms of basis functions}.
For that, note that the difference between the integrations defining
$\AsyInt_{\ldots;l-1,m;\ldots}^{(x_{0})}(\boldsymbol{x})$ and $\AsyInt_{\ldots;l,m;\ldots}^{(x_{0})}(\boldsymbol{x})$
is that the latter has one extra integration variable, integrated
along a loop that surrounds both $x_{j}$ and $x_{j+1}$ and their
associated $w$-variables. 
Assume~\eqref{eq: asymptotic function in terms of basis functions}
for the former, and decompose the extra loop of the latter to two
pieces, loops anchored at $x_{0}$ that surround $x_{j}$ and $x_{j+1}$
separately. The desired recursion follows by comparing the phase factors
after the decomposition.
\end{proof}
Fix an index $j\in\set{1,\ldots,n-1}$. Recall 
Lemma~\ref{lem: tensor product representations of quantum sl2},
especially the decomposition~\eqref{eq: decomposition of tensor product}
of a tensor product to subrepresentations $\Wd_{d_{j+1}}\tens\Wd_{d_{j}}\isom\bigoplus_{d}\Wd_{d}$.
For any $d$ appearing in this sum, the linear map
\begin{align}
& \iota_{j,j+1}^{(d)} \;\colon\;
\Big(\bigotimes_{i=j+2}^{n}\Wd_{d_{i}}\Big)\tens\Wd_{d}\tens\Big(\bigotimes_{i=1}^{j-1}\Wd_{d_{i}}\Big)
    \;\to\; \bigotimes_{i=1}^{n}\Wd_{d_{i}}
    \label{eq: embedding of a subrepresentation}\\
\nonumber 
& \iota_{j,j+1}^{(d)} \left(\Wbas_{l_{n}}\tens\cdots\tens\Wbas_{l_{j+2}}\tens\Wbas_{l}\tens\Wbas_{l_{j-1}}\tens\cdots\tens\Wbas_{l_{1}}\right) \\
\nonumber 
& \qquad \qquad \qquad \qquad \qquad \qquad = \Wbas_{l_{n}}\tens\cdots\tens\Wbas_{l_{j+2}}\tens\Tbas_{l}^{(d;d_{j},d_{j+1})}\tens\Wbas_{l_{j-1}}\tens\cdots\tens\Wbas_{l_{1}}
\end{align}
is an embedding that respects the action of $\Uqsltwo$. Hence we
may interpret the shorter tensor product as a subrepresentation,
\begin{align}\label{eq: subrepresentation identification}
\Big(\bigotimes_{i=j+2}^{n}\Wd_{d_{i}}\Big)\tens\Wd_{d}\tens\Big(\bigotimes_{i=1}^{j-1}\Wd_{d_{i}}\Big)\,\subset\,\; & \bigotimes_{i=1}^{n}\Wd_{d_{i}}.
\end{align}
We denote the projection to this subrepresentation by
\begin{align*}
\pi_{j,j+1}^{(d)} \;\colon\; & \bigotimes_{i=1}^{n}\Wd_{d_{i}}\to\bigotimes_{i=1}^{n}\Wd_{d_{i}}
\end{align*}
and we denote by
\begin{align*}
\hat{\pi}_{j,j+1}^{(d)} \;\colon\; & \bigotimes_{i=1}^{n}\Wd_{d_{i}} \;\to\; 
\Big(\bigotimes_{i=j+2}^{n}\Wd_{d_{i}}\Big)\tens\Wd_{d}\tens\Big(\bigotimes_{i=1}^{j-1}\Wd_{d_{i}}\Big)
\end{align*}
the projection combined with 
the identification~\eqref{eq: subrepresentation identification},
so that $\pi_{j,j+1}^{(d)}=\iota_{j,j+1}^{(d)}\circ\hat{\pi}_{j,j+1}^{(d)}$.
A vector $v\in\bigotimes_{i=1}^{n}\Wd_{d_{i}}$ lies in this subrepresentation
if and only if $\pi_{j,j+1}^{(d)}(v)=v$, and in this case we typically
denote $\hat{v}=\hat{\pi}_{j,j+1}^{(d)}(v)$.

We are now ready to write down the asymptotics of the functions as
two consecutive variables tend to a common limit. By the following
proposition, the asymptotics are determined by the above decompositions
to subrepresentations.
\begin{prop}\label{prop: asymptotics with subrepresentations}
If $v\in\Wd_{d_{n}}\tens\cdots\tens\Wd_{d_{1}}$
satisfies $\pi_{j,j+1}^{(d)}(v)=v$, and we denote
\begin{align*}
\hat{v}=\; & 
\hat{\pi}_{j,j+1}^{(d)}(v) \,\in\,
\Big(\bigotimes_{i=j+2}^{n}\Wd_{d_{i}}\Big)\tens\Wd_{d}\tens\Big(\bigotimes_{i=1}^{j-1}\Wd_{d_{i}}\Big),
\end{align*}
then we have the asymptotics
\begin{align*}
\; \lim_{x_{j},x_{j+1}\to\xi} 
\Big( & (x_{j+1}-x_{j})^{-\Delta_{d}^{d_{j},d_{j+1}}}\times
\sF_{d_{1},\ldots,d_{n}}^{(x_{0})}[v](x_{1},\ldots,x_{n})\Big)\\
=\; & B_{d}^{d_{j},d_{j+1}}\times\sF_{d_{1},\ldots,d_{j-1},d,d_{j+2},\ldots,d_{n}}^{(x_{0})}[\hat{v}](x_{1},\ldots,x_{j-1},\xi,x_{j+2},\ldots,x_{n}),
\end{align*}
where the exponent $\Delta_{d}^{d_{j},d_{j+1}}$ is given 
by~\eqref{eq: asymptotics exponent}
and the multiplicative constant $B_{d}^{d_{j},d_{j+1}}$ 
by~\eqref{eq: generalized beta constants}.
\end{prop}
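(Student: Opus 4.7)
The proof is essentially an assembly of the two technical lemmas already proved: Lemma \ref{lem: asymptotics for tensor basis in two consequtive} identifies what the correspondence does on vectors involving the embedded basis $\Tbas_l^{(d;d_j,d_{j+1})}$, and Lemma \ref{lem: asymptotic basis functions} gives the asymptotics of the mixed integral functions $\AsyInt^{(x_0)}$. The plan is to use linearity of $\sF^{(x_0)}$ to reduce to the standard basis of the smaller tensor product $\Big(\bigotimes_{i=j+2}^n \Wd_{d_i}\Big) \tens \Wd_d \tens \Big(\bigotimes_{i=1}^{j-1} \Wd_{d_i}\Big)$, apply the two lemmas factor by factor, and reassemble.

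More concretely, I would first expand $\hat v$ in the standard tensor product basis,
\begin{align*}
\hat v = \sum_{(l_i)} a_{l_1,\ldots,l_{j-1},l,l_{j+2},\ldots,l_n} \; \Wbas_{l_n}\tens\cdots\tens\Wbas_{l_{j+2}}\tens\Wbas_l\tens\Wbas_{l_{j-1}}\tens\cdots\tens\Wbas_{l_1},
\end{align*}
with complex coefficients $a_{\cdots}$. Applying the embedding $\iota_{j,j+1}^{(d)}$ defined in \eqref{eq: embedding of a subrepresentation} componentwise and using $v = \iota^{(d)}_{j,j+1}(\hat v)$ (which holds by hypothesis $\pi_{j,j+1}^{(d)}(v)=v$), we obtain
\begin{align*}
v = \sum_{(l_i)} a_{\cdots} \; \Wbas_{l_n}\tens\cdots\tens\Wbas_{l_{j+2}}\tens\Tbas_l^{(d;d_j,d_{j+1})}\tens\Wbas_{l_{j-1}}\tens\cdots\tens\Wbas_{l_1}.
\end{align*}
By linearity of $\sF^{(x_0)}$ and Lemma \ref{lem: asymptotics for tensor basis in two consequtive}, this gives
\begin{align*}
\sF^{(x_0)}_{d_1,\ldots,d_n}[v](\mathbf{x}) = \sum_{(l_i)} a_{\cdots} \; \AsyInt^{(x_0)}_{l_1,\ldots,l_{j-1};\,l,m;\,l_{j+2},\ldots,l_n}(\mathbf{x}),
\end{align*}
where $m = \tfrac{1}{2}(d_j+d_{j+1}-1-d)$.

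Next, I would invoke Lemma \ref{lem: asymptotic basis functions} term by term: as $x_j, x_{j+1} \to \xi$, each $\AsyInt^{(x_0)}$ divided by $(x_{j+1}-x_j)^{\Delta_d^{d_j,d_{j+1}}}$ converges to $B_d^{d_j,d_{j+1}} \, \FWint^{(x_0)}_{l_1,\ldots,l_{j-1},l,l_{j+2},\ldots,l_n}$ evaluated at the contracted $(n-1)$-tuple with $\xi$ in the $j$-th slot. Since the sum over $(l_i)$ is finite, we can interchange the limit with the summation. Recognizing that the resulting linear combination is exactly the definition \eqref{eq: SCCG basis vector images} of $\sF^{(x_0)}_{d_1,\ldots,d_{j-1},d,d_{j+2},\ldots,d_n}[\hat v]$ at the contracted point finishes the proof.

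There is no real obstacle here: all substantive work was done in the preceding two lemmas. The only points requiring care are the bookkeeping of indices (the $l$-th basis vector in the embedded $\Wd_d$ corresponds to the $l$-th basis vector in the standalone $\Wd_d$ in the smaller tensor product, which matches the index alignment between the two lemmas) and the verification that the dimensional parameter $d = d_j+d_{j+1}-1-2m$ occurring in Lemma \ref{lem: asymptotic basis functions} agrees with the label of the subrepresentation chosen via $\pi_{j,j+1}^{(d)}$; both are immediate from Lemma \ref{lem: tensor product representations of quantum sl2} and Remark \ref{rem: tensor product different bases}.
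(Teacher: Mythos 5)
Your proposal is correct and follows essentially the same route as the paper's own proof: expand $v$ in terms of the vectors $\Wbas_{l_{n}}\tens\cdots\tens\Tbas_{l}^{(d;d_{j},d_{j+1})}\tens\cdots\tens\Wbas_{l_{1}}$ and combine Lemma \ref{lem: asymptotics for tensor basis in two consequtive} with Lemma \ref{lem: asymptotic basis functions} term by term. The paper states this in two lines; your version merely spells out the bookkeeping.
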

\begin{proof}
The vector $v$ can be expressed as a linear combination of  $\Wbas_{l_{n}}\tens\cdots\tens\Wbas_{l_{j+2}}\tens\Tbas_{l}^{(d;d_{j},d_{j+1})}\tens\Wbas_{l_{j-1}}\tens\cdots\tens\Wbas_{l_{1}}$.
For these vectors, the assertion follows by combining 
Lemmas~\ref{lem: asymptotics for tensor basis in two consequtive}
and \ref{lem: asymptotic basis functions}.
\end{proof}

\subsection{\label{sub: general case of closed integration surface}Anchor point independence}

Next we show that for highest weight vectors, the corresponding functions
become well-defined on the chamber $\chamber_{n}$ of~\eqref{eq: chamber}.
This is the simplest manifestation of the closedness of the corresponding integration surface.
\begin{prop}\label{prop: closed integration surface}
If $v\in\Wd_{d_{n}}\tens\cdots\tens\Wd_{d_{1}}$ is such that $E.v=0$, then we have 
\begin{align*}
\left(\sF^{(x_{0})}[v]\right)(\boldsymbol{x})
=\; & \sum_{m_{2},m_{3},\ldots,m_{n}}c_{m_{2},\ldots,m_{n}}\times
\CubeInt_{0,m_{2},\ldots,m_{n}}^{(x_{0})}(\boldsymbol{x})
\end{align*}
for some coefficients $c_{m_{2},\ldots,m_{n}}\in\bC$. In particular
$\left(\sF^{(x_{0})}[v]\right)(\boldsymbol{x})$ is independent of $x_{0}$.
\end{prop}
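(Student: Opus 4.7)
The plan is to proceed by induction on $n$, using Lemma \ref{lem: two point closed integration surface} as the base case $n = 2$ (the case $n = 1$ being trivial, since the only highest weight vector of $\Wd_{d_1}$ up to scaling is $\Wbas_0$, for which $\FWint_0^{(x_0)} = 1 = \CubeInt_0^{(x_0)}$).

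For the inductive step ($n \geq 3$), decompose the two rightmost tensor factors via Lemma \ref{lem: tensor product representations of quantum sl2}: $\Wd_{d_2} \tens \Wd_{d_1} \isom \bigoplus_{d} \Wd_d$. This induces a decomposition $v = \sum_d v_d$ with $v_d$ lying in the subrepresentation $\Wd_{d_n} \tens \cdots \tens \Wd_{d_3} \tens \Wd_d$. Since $E$ respects the decomposition, $E.v_d = 0$, and by linearity it suffices to treat each $v_d$. Writing $v_d$ in the basis from Remark \ref{rem: tensor product different bases} and applying Lemma \ref{lem: asymptotics for tensor basis in two consequtive} with $j = 1$,
\begin{align*}
\sF^{(x_0)}[v_d](\mathbf{x}) = \; & \sum_{l, l_3, \ldots, l_n} c_{l, l_3, \ldots, l_n} \, \AsyInt_{;l, m; l_3, \ldots, l_n}^{(x_0)}(\mathbf{x}),
\end{align*}
where $m = (d_1 + d_2 - 1 - d)/2$. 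Under the embedding $\iota_{1,2}^{(d)}$ of \eqref{eq: embedding of a subrepresentation}, $v_d$ corresponds to $\hat v_d \in \Wd_{d_n} \tens \cdots \tens \Wd_{d_3} \tens \Wd_d$ with $E.\hat v_d = 0$, so the inductive hypothesis applies to $\hat v_d$ in the reduced tensor product of $n - 1$ irreducibles of dimensions $(d, d_3, \ldots, d_n)$.

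The key step is then to verify that the sum of $\AsyInt_{;l, m; l_3, \ldots, l_n}^{(x_0)}$ above collapses, via the mechanism of the inductive hypothesis, into a linear combination of $\CubeInt_{0, m, m_3, \ldots, m_n}^{(x_0)}(\mathbf{x})$. Geometrically, the $m$ deformed-hypercube variables between $x_1$ and $x_2$ play the role of an "effective point of dimension $d$" at the location of the blob, while the outer structure (the $l$ loops around the blob together with the $l_i$ loops around $x_i$ for $i \geq 3$) mimics the loops of $\FWint_{l, l_3, \ldots, l_n}^{(x_0)}$ for the reduced tensor product. The condition $E.\hat v_d = 0$ then triggers the inductive cancellation of all terms with variables in the interval $(x_0, x_1)$, leaving only $\CubeInt$'s with $m_1 = 0$, as desired.

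The main obstacle is in making this blob-and-outer-loops reduction rigorous: the integrand does not literally factor into independent blob and outer parts, because the blob variables couple to the outer loop variables through the factors $(w_s - w_r)^{8/\kappa}$ in \eqref{eq: integrand with generic phase}, and depend on $x_1$, $x_2$ separately rather than through a single effective point. One way to finish is to integrate the outer loop variables first while treating the blob variables as parameters, and systematically track the phase factors produced as contours are deformed and variables are permuted, continuing the style of manipulations used in the proofs of Lemmas \ref{lem: explicit formula for one point FW integral}, \ref{lem: two variable FW basis function}, and \ref{lem: two point closed integration surface}. An alternative, more structural route would be to interpret the surfaces $\SurfFW$, $\SurfCube$, and $\SurfAsy$ as cycles in a suitable relative homology theory (in the spirit of \cite{FW-topological_representation_of_Uqsl2}) and observe directly that the cycle associated to a highest weight vector is absolute; this would bypass the combinatorics but at the cost of developing additional machinery.
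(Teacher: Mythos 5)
Your proposal is correct and follows essentially the same route as the paper: induction on $n$ with Lemma \ref{lem: two point closed integration surface} as the base case, decomposition of $\Wd_{d_{2}}\tens\Wd_{d_{1}}$ into irreducibles, rewriting via Lemma \ref{lem: asymptotics for tensor basis in two consequtive}, and then invoking the inductive hypothesis together with a contour-rearrangement argument in the style of Lemmas \ref{lem: general FW basis function} and \ref{lem: two variable FW basis function}. The ``blob as effective point'' step whose rigor you flag as the main obstacle is exactly the step the paper also treats only by reference to those earlier rearrangement procedures, so your assessment of where the remaining work lies matches the paper's own presentation.
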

\begin{proof}
As a warm up, observe that in the case $n=1$ the statement is immediate:
by Lemma~\ref{lem: explicit formula for one point FW integral} the
basis vectors $\Wbas_{l_{1}}\in\Wd_{d_{1}}$ are mapped to $\sF^{(x_{0})}[\Wbas_{l_{1}}]\propto\CubeInt_{l_{1}}^{(x_{0})}$,
and we have $E.\Wbas_{l_{1}}=0$ only if $l_{1}=0$. For the case
$n=2$ the statement was already shown in Lemma~\ref{lem: two point closed integration surface}.

We proceed by induction on $n$. Let $n>2$ and write $v$ in the
basis $\Wbas_{l_{n}}\tens\cdots\tens\Wbas_{l_{3}}\tens\Tbas_{l}^{(d)}$,
with $l,l_{3},\ldots,l_{n}$ non-negative integers and $d=d_{1}+d_{2}-2m$
for some $m$ as in Lemma~\ref{lem: tensor product representations of quantum sl2},
\begin{align*}
v=\; & 
\sum_{l,d,l_{3},\ldots,l_{n}}b_{l,d;l_{3},\ldots,l_{n}}\times
\Wbas_{l_{n}}\tens\cdots\tens\Wbas_{l_{3}}\tens\Tbas_{l}^{(d)}.
\end{align*}
Separate the parts corresponding to fixed values of $d$, and denote
(no summation over $d$ here)
\begin{align*}
v^{(d)}=\; & 
\sum_{l,l_{3},\ldots,l_{n}}b_{l,d;l_{3},\ldots,l_{n}}
\times\Wbas_{l_{n}}\tens\cdots\tens\Wbas_{l_{3}}\tens\Tbas_{l}^{(d)}
\;\in\;\bigotimes_{i=1}^{n}\Wd_{d_{i}}\qquad\text{and}\\
\pi_{1,2}^{(d)}(v^{(d)}) = \hat{v}^{(d)}
=\; & \sum_{l,l_{3},\ldots,l_{n}}b_{l,d;l_{3},\ldots,l_{n}}\times\Wbas_{l_{n}}\tens\cdots\tens\Wbas_{l_{3}}\tens\Wbas_{l}\;\in\;\Big(\bigotimes_{i=3}^{n}\Wd_{d_{i}}\Big)\tens\Wd_{d}.
\end{align*}
Now $\hat{v}^{(d)}$ also satisfies $E.\hat{v}^{(d)}=0$, and the
tensor product only has $n-1$ tensorands. By the induction hypothesis,
\begin{align*}
\left(\sF^{(x_{0})}[\hat{v}^{(d)}]\right)(\xi,x_{3},\ldots,x_{n})
=\; & \sum_{m_{3},\ldots,m_{n}}\hat{c}_{m_{3},\ldots,m_{n}}^{(d)}\times
\CubeInt_{0,m_{3},\ldots,m_{n}}^{(x_{0})}(\xi,x_{3},\ldots,x_{n}),
\end{align*}
where the implicit dimension parameters are $d,d_{3},d_{4},\ldots,d_{n}$
now. In view of Lemma~\ref{lem: asymptotics for tensor basis in two consequtive}
and the rearrangement procedure of integration contours
similar to the proofs of Lemmas~\ref{lem: general FW basis function} and
\ref{lem: two variable FW basis function}, one sees that also
\begin{align*}
\left(\sF^{(x_{0})}[v^{(d)}]\right)(x_{1},x_{2},x_{3},\ldots,x_{n})
=\; & \sum_{m_{3},\ldots,m_{n}}c_{m_{3},\ldots,m_{n}}^{(d)}\times
\CubeInt_{0,m_{2},\ldots,m_{n}}^{(x_{0})}(x_{1},x_{2},x_{3},\ldots,x_{n}).
\end{align*}
The conclusion is finally obtained by summing over $d$, since $v=\sum_{d}v^{(d)}$.
\end{proof}

\begin{rem}\label{rem: anchor point independence}
\emph{
When $\kappa$ is large
enough, \eqref{eq: large enough kappa}, the integrals $\CubeInt_{m_{1},\ldots,m_{n}}^{(x_{0})}$
are convergent. When $E.v=0$, the above proposition then shows
that $\sF^{(x_{0})}[v](x_{1},\ldots,x_{n})$ is independent of
$x_{0}$. By analyticity in $\kappa$, this independence of $x_{0}$
holds for all $\kappa$, and we get a well-defined function $\sF[v]\colon\chamber_{n}\to\bC$ by
\begin{align*}
\sF[v](x_{1},\ldots,x_{n})
=\; & \sF^{(x_{0})}[v](x_{1},\ldots,x_{n})\qquad\text{ for any }\qquad x_{0}<x_{1}.
\end{align*}
}
\end{rem}

\subsection{\label{sub: integration by parts formula}Integration by parts formula}

According to the definition given in 
Section~\ref{sub: definition of the correspondence},
the vectors $v\in\Wd_{d_{n}}\tens\cdots\tens\Wd_{d_{1}}$ determine
linear combinations of integration surfaces. With a suitable interpretation,
there is a homology theory for these, in which the boundary operator
corresponds to the action of the quantum group generator $E$, see
\cite{FW-topological_representation_of_Uqsl2}. We will only make
use of a version of Stokes formula, i.e., integration by parts, which
we state and prove next.
\begin{lem}\label{lem: general integration by parts}
Let $l_{1},\ldots,l_{n}\in\bZ_{\geq0}$
and $\ell=\sum_{j}l_{j}$. Suppose that $g(w_{\ell};\boldsymbol{x};w_{1},\ldots,w_{\ell-1})$
is a holomorphic function of the $\ell$ variables $w_{1},\ldots,w_{\ell}$
defined on $\Wchamber^{(\ell)}$, which is symmetric in the $\ell-1$
variables $w_{1},\ldots,w_{\ell-1}$. Then we have
\begin{align}
 & \int_{\SurfFW_{l_{1},\ldots,l_{n}}}\sum_{r=1}^{\ell}\pder{w_{r}}\Big(g(w_{r};\boldsymbol{x};w_{1},\ldots,w_{r-1},w_{r+1},\ldots,w_{\ell})\, 
 f_{l_{1},\ldots,l_{n}}^{\Supset}(\boldsymbol{x};\boldsymbol{w})\Big)
 \,\ud w_{1}\cdots\ud w_{\ell}
 \label{eq: formula to be integrated by parts}\\
=\; & \sum_{j=1}^{n}\Bigg\{(q^{-1}-q)\qnum{l_{j}}\qnum{d_{j}-l_{j}}q^{\sum_{i<j}(d_{i}-1-2l_{i})}\nonumber \\
 & \qquad\quad\times\,\int_{\SurfFW_{l_{1},\ldots,l_{j}-1,\ldots,l_{n}}}\Big(\gamma(\boldsymbol{x};w_{1},\ldots,w_{\ell-1})\, f_{l_{1},\ldots,l_{j}-1,\ldots,l_{n}}^{\Supset}(\boldsymbol{x};w_{1},\ldots,w_{\ell-1})\Big)
 \,\ud w_{1}\cdots\ud w_{\ell-1}\Bigg\},\nonumber 
\end{align}
where
\begin{align*}
\gamma(\boldsymbol{x};w_{1},\ldots,w_{\ell-1})
=\; & \prod_{i=1}^{n}|x_{0}-x_{i}|^{-\frac{4}{\kappa}(d_{i}-1)}\prod_{s\neq r}|x_{0}-w_{r}|^{\frac{8}{\kappa}}\; g(x_{0};\boldsymbol{x};w_{1},\ldots,w_{\ell-1}).
\end{align*}
\end{lem}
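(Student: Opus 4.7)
The strategy is to apply Stokes' theorem (integration by parts) separately in each variable $w_r$. The form $\partial_{w_r}\big(g(w_r;\mathbf{x};\{w_s\}_{s\neq r})\,f_{l_1,\ldots,l_n}^{\Supset}\big)\,\ud w_r$ is exact in $w_r$, so its integral along the loop of $w_r$ (with the other $w_s$'s held on their loops) reduces to the boundary contribution $G_r|_{w_r = x_0^+} - G_r|_{w_r = x_0^-}$, where $G_r = g \cdot f^{\Supset}$ and $x_0^{\pm}$ denote the end and the start of the $r$-th loop (the same geometric point, but on different branches of the multivalued integrand). The factor $g$ is single-valued, so the difference comes entirely from the branch jump of $f^{\Supset}$ accumulated over the loop.

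\textbf{Phase bookkeeping.} Fix $r \in I^{(j)}$, being the $k$-th element of $I^{(j)}$, with $k \in \{1,\ldots,l_j\}$. Let $\mathcal{P}_r^{(\pm)}$ denote the phase of $f_{l_1,\ldots,l_n}^{\Supset}$ at $w_r = x_0^{\pm}$ relative to $|f^{\Supset}|$ with $w_r = x_0$. The nested loop structure shows that the total monodromy along the loop of $w_r$ (encircling $x_j$ and the $k-1$ inner loops $w_s$, $s\in I^{(j)}$, $s<r$) is $\mathcal{P}_r = \mathcal{P}_r^{(+)}/\mathcal{P}_r^{(-)} = q^{-2(d_j-1)+4(k-1)}$. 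To compute $\mathcal{P}_r^{(-)}$, I track the argument change of each factor of the integrand as $w_r$ moves along the half-loop from the reference point $w_r'\in(x_j,x_{j+1})$ back to $x_0^-$ through the lower half-plane (in the sense of Remark~\ref{rem: defining a rephased branch of integrand}). Each factor $(w_r - x_i)^{-\frac{4}{\kappa}(d_i-1)}$ with $i \leq j$ goes from positive real to negative real via the lower half-plane, contributing $q^{d_i-1}$; for $i > j$ the path stays in one half-plane and contributes $1$. Each factor $(w_r - w_s)^{8/\kappa}$ with $s$ at an ``inner'' branch point (that is, $s\in I^{(j)}$ with $s<r$, or $s\in I^{(i)}$ with $i<j$) similarly contributes $q^{-2}$; outer ones contribute $1$. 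Combining and using $\sum_{i<j} l_i + (k-1)$ inner $w$-points,
\begin{align*}
\mathcal{P}_r^{(-)}=\;&q^{(d_j-1)-2(k-1)+\sum_{i<j}(d_i-1-2l_i)},\qquad
\mathcal{P}_r^{(+)}=q^{-(d_j-1)+2(k-1)+\sum_{i<j}(d_i-1-2l_i)}.
\end{align*}

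\textbf{Summing with $q$-combinatorics.} The difference factorizes as $\mathcal{P}_r^{(+)} - \mathcal{P}_r^{(-)} = q^{\sum_{i<j}(d_i-1-2l_i)}\,(q^{2k-d_j-1} - q^{d_j+1-2k})$. Summing over $k = 1,\ldots,l_j$ and substituting $u = k-1$,
\begin{align*}
\sum_{k=1}^{l_j}\big(q^{2k-d_j-1} - q^{d_j+1-2k}\big) =\; & -(q-q^{-1})\sum_{u=0}^{l_j-1}\qnum{d_j-1-2u} \\
=\; & (q^{-1}-q)\,\qnum{l_j}\qnum{d_j-l_j},
\end{align*}
where the last step is Lemma \ref{lem: q-combinatorics}(a). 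This reproduces exactly the prefactor claimed for the $j$-th summand. The residual absolute-value content of $G_r|_{w_r=x_0^-}$, namely $|f^\Supset|_{w_r = x_0}|\,g(x_0;\mathbf{x};\{w_s\}_{s\neq r})$, splits as $\gamma(\mathbf{x};\{w_s\}_{s\neq r})$ times the restricted integrand $f_{l_1,\ldots,l_j-1,\ldots,l_n}^{\Supset}(\mathbf{x};\{w_s\}_{s\neq r})$; the remaining $\ell-1$ variables are integrated over $\SurfFW_{l_1,\ldots,l_j-1,\ldots,l_n}$, since removing the $r$-th loop from the nested family leaves exactly this surface (up to a harmless relabeling justified by the symmetry of $g$).

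\textbf{Main obstacle.} The delicate part is the phase accounting: one must verify carefully that factors $(w_r-x_i)^{-4(d_i-1)/\kappa}$ with $i>j$ and $(w_s-w_r)^{8/\kappa}$ with $s$ at outer branch points contribute no net phase along the specific half-loop chosen, and that the residual rephased integrand after extracting $\gamma$ is precisely $f^{\Supset}$ for the smaller surface. This is clean because the reference point defines a single-valued branch on a simply connected neighborhood of $\SurfFW$, and the winding of each factor can be analyzed independently. The base case $n=1$, $l_1=1$ reproduces the monodromy calculation already used in the proof of Lemma~\ref{lem: explicit formula for one point FW integral}, and the general case is a systematic elaboration of the same bookkeeping.
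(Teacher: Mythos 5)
Your proposal is correct and follows essentially the same route as the paper: integrate by parts in each $w_{r}$ separately, observe that the two boundary terms at the anchor point differ only by the phase of $f^{\Supset}$ accumulated along the loop, compute those phases from the nesting structure (your $\mathcal{P}_{r}^{(+)}-\mathcal{P}_{r}^{(-)}$ with $u=k-1$ is exactly the paper's factor $q^{\sum_{i<j}(d_{i}-1-2l_{i})}(q^{1-d_{j}+2u}-q^{-1+d_{j}-2u})$), and sum over the loops around $x_{j}$ via Lemma \ref{lem: q-combinatorics}(a). Your phase bookkeeping is in fact more explicit than the paper's, which simply states the resulting prefactor.
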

\begin{proof}
Let us first perform an integration by parts in a single term 
in~\eqref{eq: formula to be integrated by parts}.
Fix $r$, and let $x_{j}$ be the point encircled by the loop of $w_{r}$
in $\SurfFW_{l_{1},\ldots,l_{n}}$, and denote by $u$ the number
of $s<r$ such that $w_{s}$ also encircles the same point $x_{j}$,
that is, $u=r-1-\sum_{i<j}l_{i}$. We then perform integration by
parts in the integral over $w_{r}$, and notice that the boundary
terms from the beginning and  end points of the loop only differ
by a phase (formally $w_{r}=x_{0}$ both at the beginning and the
end, but on different sheets of a Riemann surface).
After this integration by parts, the contribution of the $r$:th term becomes
\begin{align*}
 & q^{\sum_{i<j}(d_{i}-1-2l_{i})}(q^{1-d_{j}+2u}-q^{-1+d_{j}-2u})\\
 & \qquad\qquad\times\int_{\SurfFW_{l_{1},\ldots,l_{j}-1,\ldots,l_{n}}}\Big(\gamma(\boldsymbol{x};w_{1},\ldots,w_{\ell-1})\, f_{l_{1},\ldots,l_{j}-1,\ldots,l_{n}}^{\Supset}(\boldsymbol{x};w_{1},\ldots,w_{\ell-1})\Big)
 \,\ud w_{1}\cdots\ud w_{\ell-1},
\end{align*}
where we relabeled the other integration variables and used the assumption
of symmetric dependence of $g$ on them (and a similar property of
$f$). We collect the terms corresponding to the same $j$, and use
Lemma~\ref{lem: q-combinatorics}(a) in the form
\begin{align*}
\sum_{u=0}^{l_{j}-1}(q^{1-d_{j}+2u}-q^{-1+d_{j}-2u})
=\; & (q^{-1}-q)\,\qnum{l_{j}}\qnum{d_{j}-l_{j}}
\end{align*}
to simplify the sum of these terms. This concludes the proof.
\end{proof}

Besides the anchor point independence of Proposition~\ref{prop: closed integration surface},
the closedness of the integration surface for highest weight vectors
is used in combination with Stokes' formula via the following corollary.
\begin{cor}\label{cor: integration by parts for hwv}
Let $\ell\in\bZ_{\geq0}$.
Suppose that $g(w_{\ell};\boldsymbol{x};w_{1},\ldots,w_{\ell-1})$ is
a holomorphic function of the $\ell$ variables $w_{1},\ldots,w_{\ell}$
defined on $\Wchamber^{(\ell)}$, which is symmetric in the $\ell-1$
variables $w_{1},\ldots,w_{\ell-1}$. If
\begin{align*}
v= & \sum_{\substack{l_{1},\ldots,l_{n}\geq0\\
l_{1}+\cdots+l_{n}=\ell}}t_{l_{1},\ldots,l_{n}}\,\Wbas_{l_{n}}\tens\cdots\tens\Wbas_{l_{1}}
\end{align*}
satisfies $E.v=0$, then
\begin{align*}
\sum_{l_{1},\ldots,l_{n}}t_{l_{1},\ldots,l_{n}}\int_{\SurfFW_{l_{1},\ldots,l_{n}}}\sum_{r=1}^{\ell}\pder{w_{r}}\Big(g(w_{r};\boldsymbol{x};w_{1},\ldots,w_{r-1},w_{r+1},\ldots,w_{\ell})\, f_{l_{1},\ldots,l_{n}}^{\Supset}(\boldsymbol{x};\boldsymbol{w})\Big)\,\ud w_{1}\cdots\ud w_{\ell}=\; & 0.
\end{align*}
 \end{cor}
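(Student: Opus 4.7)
The plan is to recognize that when we combine the integration-by-parts formula of Lemma \ref{lem: general integration by parts} with the hypothesis $E.v = 0$ in the way that the coproduct formula of Lemma \ref{lem: multiple coproducts} dictates, the coefficients assembled in front of each reduced integral $\int_{\SurfFW_{l_1',\ldots,l_n'}} \gamma \, f^{\Supset}_{l_1',\ldots,l_n'}$ (with $\sum_i l_i' = \ell - 1$) are precisely the coefficients of $E.v$ expanded in the tensor-product basis, up to an overall factor of $(q^{-1}-q)$.

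More concretely, the first step is to apply Lemma \ref{lem: general integration by parts} termwise, multiply by $t_{l_1,\ldots,l_n}$, and swap the (finite) sums. The factor $\gamma(\mathbf{x}; w_1, \ldots, w_{\ell-1})$ does not depend on the index $j$ or on $(l_1,\ldots,l_n)$, so it can be pulled outside. What remains in front of each integral $\int \gamma\, f^{\Supset}_{l_1',\ldots,l_n'}\, \mathrm{d}\mathbf{w}$ is the scalar
\begin{align*}
(q^{-1}-q) \sum_{j=1}^n t_{l_1',\ldots,l_j'+1,\ldots,l_n'}\, \qnum{l_j'+1}\qnum{d_j - l_j' - 1}\, q^{\sum_{i<j}(d_i - 1 - 2l_i')},
\end{align*}
where we have reindexed by $l_j = l_j' + 1$, $l_i = l_i'$ for $i \neq j$ (and the $i < j$ condition makes replacing $l_i$ by $l_i'$ harmless).

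Next, I would compute the coefficient of the basis vector $\Wbas_{l_n'} \tens \cdots \tens \Wbas_{l_1'}$ in $E.v$ using Lemma \ref{lem: multiple coproducts}. Since $\Hcp^{(n)}(E) = \sum_{j=1}^n 1^{\tens(j-1)} \tens E \tens K^{\tens(n-j)}$ (with tensorands in the order of \eqref{eq: order of tensorands}), and since $E.\Wbas_{l_j} = \qnum{l_j}\qnum{d_j - l_j}\Wbas_{l_j-1}$ and $K.\Wbas_{l_i} = q^{d_i - 1 - 2l_i} \Wbas_{l_i}$, we get exactly
\begin{align*}
\sum_{j=1}^n t_{l_1',\ldots,l_j'+1,\ldots,l_n'}\, \qnum{l_j'+1}\qnum{d_j - l_j' - 1}\, q^{\sum_{i<j}(d_i - 1 - 2l_i')}.
\end{align*}
The hypothesis $E.v = 0$ forces this scalar to vanish for every $(l_1', \ldots, l_n')$, so each coefficient in front of $\int \gamma\, f^{\Supset}_{l_1',\ldots,l_n'}$ is zero, and the whole sum in the corollary vanishes.

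The only really delicate point is the bookkeeping between the $q$-phase factors $q^{\sum_{i<j}(d_i - 1 - 2l_i)}$ produced by integration by parts (which arise because integrating the derivative of $g\cdot f^{\Supset}$ over the $r$-th loop picks up half-turns around all inner anchored loops and all points $x_i$ with $i<j$) and those arising from the Cartan factors $K^{\tens(n-j)}$ in the coproduct of $E$. This matching is precisely what identifies the quantum-group generator $E$ with the boundary operator on these integration surfaces, as in the homology picture of \cite{FW-topological_representation_of_Uqsl2}. The terms with $l_j' + 1 = 0$ or $l_j' + 1 > d_j$ cause no issue because $\qnum{0} = 0$ and, when $l_j = d_j$, the contour $\SurfFW_{l_1,\ldots,l_n}$ yields a vanishing integral (Lemma \ref{lem: vanishing for too many bags}), so such terms may be included or excluded freely.
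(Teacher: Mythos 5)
Your proof is correct and follows essentially the same route as the paper: apply Lemma~\ref{lem: general integration by parts} termwise, then identify the scalar standing in front of each reduced integral $\int_{\SurfFW_{l_{1}',\ldots,l_{n}'}}\gamma\, f^{\Supset}_{l_{1}',\ldots,l_{n}'}$ with the coefficient of the basis vector $\Wbas_{l_{n}'}\tens\cdots\tens\Wbas_{l_{1}'}$ in $E.v$ computed from the coproduct of Lemma~\ref{lem: multiple coproducts}, so that $E.v=0$ kills every term. Your grouping by the reduced multi-index $(l_{1}',\ldots,l_{n}')$ with the sum over $j$ inside is in fact the precise reading of the vanishing condition that the paper states somewhat more loosely, and your handling of the edge cases $l_{j}=0$ and $l_{j}\geq d_{j}$ via $\qnum{0}=0$ is fine.
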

\begin{proof}
On the left hand side of the asserted formula, we use the previous
lemma in each term, and write the left hand side as
\begin{align}\label{eq: formula integrated by parts}
 & (q^{-1}-q)\,\sum_{j=1}^n\sum_{l_{1},\ldots,l_{n}}\qnum{l_{j}}\qnum{d_{j}-l_{j}}q^{\sum_{i<j}(d_{i}-1-2l_{i})}
 \,t_{l_{1},\ldots,l_{n}}
 \times\left(\int_{\SurfFW_{l_{1},\ldots,l_{j}-1,\ldots,l_{n}}}\gamma\, f_{l_{1},\ldots,l_{j}-1,\ldots,l_{n}}^{\Supset}\right).
\end{align}
On the other hand, using Lemma~\ref{lem: multiple coproducts} for the iterated coproduct
$\Hcp^{(n)}(E)$, we can expand $E.v$ in the tensor product basis as
\begin{align*}
\sum_{j=1}^n \sum_{l_{1},\ldots,l_{n}} \qnum{l_{j}}\qnum{d_{j}-l_{j}}q^{\sum_{i<j}(d_{i}-1-2l_{i})} \,t_{l_{1},\ldots,l_{n}}
    \times \Wbas_{l_n} \tens \cdots \tens \Wbas_{l_j -1} \tens \cdots \tens \Wbas_{l_1} .
\end{align*}
Apart from the factor $(q^{-1}-q)$,
the coefficient of $\Wbas_{l_n} \tens \cdots \tens \Wbas_{l_j -1} \tens \cdots \tens \Wbas_{l_1}$
is the same as the coefficient of the integral
over $\SurfFW_{l_{1},\ldots,l_{j}-1,\ldots,l_{n}}$ in~\eqref{eq: formula integrated by parts}.
Therefore, by the assumption $E.v=0$,
the expression~\eqref{eq: formula integrated by parts} vanishes.
\end{proof}

\subsection{\label{sub: differential equations}Partial differential equations}

Now we turn to partial differential equations satisfied by the functions
associated to highest weight vectors.

Define, for any $j=1,\ldots,n$ and $p\in\bZ$, the first order partial
differential operators
\begin{align*}
\sL_{-p}^{(j)} 
:= \; & -\sum_{i\neq j}\left((x_{i}-x_{j})^{1-p}\pder{x_{i}}+(1-p)\, h_{1,d_{i}}\,(x_{i}-x_{j})^{-p}\right),
\end{align*}
where $h_{1,d}$ is given by Equation~\eqref{eq: Kac labeled conformal weights}.
Define also, for any $j=1,\ldots,n$, a partial differential operator
of order $d_{j}$ by the Benoit~\& Saint-Aubin formula \cite{BSA-degenerate_CFTs_and_explicit_expressions}
\begin{align*}
\sD_{d_{j}}^{(j)} 
:= \; & \sum_{k=1}^{d_{j}}\sum_{\substack{p_{1},\ldots,p_{k}\geq1\\
p_{1}+\cdots+p_{k}=d_{j}}}
\frac{(-4/\kappa)^{d_{j}-k}\,(d_{j}-1)!^{2}}{\prod_{u=1}^{k-1}(\sum_{i=1}^{u}p_{i})(\sum_{i=u+1}^{k}p_{i})}\times\sL_{-p_{1}}^{(j)}\cdots\sL_{-p_{k}}^{(j)}.
\end{align*}

The following special case without any integrations will be used to verify
the PDEs in the general case.
\begin{lem}\label{lem: BSA diff op for vertex operator corr fn}
Let $d_1, \ldots, d_n$ be real numbers. The function
\begin{align*}
f^{(0)}(x_{1},\ldots,x_{n})
=\; & \prod_{1\leq i<j\leq n}(x_{j}-x_{i})^{\frac{2}{\kappa}(d_{i}-1)(d_{j}-1)}
\end{align*}
satisfies the partial differential equation
$\sD_{d_{j}}^{(j)}f^{(0)}  =  0$, for all $j=1,\ldots,n$ such that $d_j$ is a 
positive integer.
\end{lem}
We postpone the proof to Appendix~\ref{app: exact form lemma}.

\begin{rem}
\emph{%
The function $f^{(0)}$ is a very simple product of powers of differences,
but it is nevertheless not entirely trivial to verify 
that the Benoit~\& Saint-Aubin differential operators $\sD^{(j)}_{d_j}$ annihilate it in general.
The proof that we present in Appendix~\ref{app: exact form lemma}
requires only performing the explicit calculation in the case of $d_j=2$,
and then using recursively a fusion argument of the type considered in~\cite[Theorem~15]{Dubedat-fusion}.
There is one stage in this argument, \cite[Lemma~1]{Dubedat-fusion},
which appeals to the structure of Verma modules for the Virasoro algebra, but
only in a simple case when $\kappa \notin \bQ$.
}

\emph{%
As an alternative proof for
Lemma~\ref{lem: BSA diff op for vertex operator corr fn},
which is shorter but appeals to more involved constructions,
one could use properties of vertex operators and the BRST charge, 
see e.g.~\cite{Felder-BRST_approach}.
}
\end{rem}

The interpretation of the following corollary is that the
Benoit~\& Saint-Aubin differential operators acting on our integrand produce exact forms. 
\begin{cor}\label{cor: the BSA operator gives total derivative}
The function $f^{(\ell)}$ satisfies, for any $j=1,\ldots,n$,
\begin{align*}
\Big(\sD_{d_{j}}^{(j)}f^{(\ell)}\Big)(\boldsymbol{x};\boldsymbol{w})
=\; & \sum_{r=1}^{\ell}\pder{w_{r}}\left(g(w_{r};\boldsymbol{x};w_{1},\ldots,w_{r-1},w_{r+1},\ldots,w_{\ell})
\times f^{(\ell)}(\boldsymbol{x};\boldsymbol{w})\right),
\end{align*}
where $g$ is a rational function which is symmetric in the last $\ell-1$
variables, and whose only poles are where some of its arguments coincide.
\end{cor}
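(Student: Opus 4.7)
The plan is to reduce the claim to the previous lemma by enlarging the configuration, treating the $\ell$ screening variables as additional ``points'' at the level of the integrand. First I would observe that if each $w_r$ is assigned the formal dimension $d_w := -1$, then $h_{1,d_w} = 1$ (the screening condition), and the exponent conventions of $f^{(\ell)}$ match precisely those of an $f^{(0)}$-type function of $n+\ell$ variables: indeed $\frac{2}{\kappa}(d_i-1)(d_w-1) = -\frac{4}{\kappa}(d_i-1)$ and $\frac{2}{\kappa}(d_w-1)^2 = \frac{8}{\kappa}$. Therefore Lemma~\ref{lem: BSA diff op for vertex operator corr fn} applied to this extended configuration yields $\tilde{\sD}_{d_j}^{(j)} f^{(\ell)} = 0$, where the extended BSA operator is built from
\begin{align*}
\tilde{\sL}_{-n}^{(j)} \; = \; & \sL_{-n}^{(j)} + \Delta_{n}, \qquad \Delta_{n} \; = \; -\sum_{r=1}^{\ell}\left[(w_{r}-x_{j})^{1-n}\pder{w_{r}} + (1-n)(w_{r}-x_{j})^{-n}\right].
\end{align*}

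The key observation, a direct product-rule computation using $h_{1,d_w}=1$, is that $\Delta_{n}$ is a total $w$-derivative operator: for any smooth $h$,
\begin{align*}
\Delta_{n} h \; = \; & -\sum_{r=1}^{\ell}\pder{w_{r}}\!\left[(w_{r}-x_{j})^{1-n}\,h\right].
\end{align*}
I would then introduce the subspace $T^{\partial}$ of functions expressible as $\sum_{r}\pder{w_{r}}(R_{r}\,f^{(\ell)})$ with $R_{r}$ rational in $(\mathbf{x},\mathbf{w})$ with poles only at coinciding arguments, and verify two preservation properties. First, $\sL_{-n}^{(j)}(T^{\partial}) \subset T^{\partial}$, because $\sL_{-n}^{(j)}$ contains only $x$-derivatives and hence commutes with each $\pder{w_{r}}$; the coefficient $\sL_{-n}^{(j)}(R_{r}f^{(\ell)})$ is again of the form (rational)$\cdot f^{(\ell)}$ by the chain rule applied to the rational logarithmic derivative of $f^{(\ell)}$. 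Second, $\tilde{\sL}_{-n}^{(j)}(T^{\partial}) \subset T^{\partial}$, which follows immediately since $\Delta_{n}$ sends any function into $T^{\partial}$.

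With these in hand, a short induction on the length $k$ of operator monomials gives
\begin{align*}
\big(\tilde{\sL}_{-n_{1}}^{(j)}\cdots\tilde{\sL}_{-n_{k}}^{(j)} - \sL_{-n_{1}}^{(j)}\cdots\sL_{-n_{k}}^{(j)}\big)\,f^{(\ell)} \; \in \; & T^{\partial},
\end{align*}
by telescoping the difference into one contribution $\tilde{\sL}_{-n_{1}}^{(j)}\!\left[(\tilde{\sL}_{-n_{2}}^{(j)}\cdots - \sL_{-n_{2}}^{(j)}\cdots)f^{(\ell)}\right]$ (in $T^{\partial}$ by the inductive hypothesis and the preservation under $\tilde{\sL}_{-n_{1}}^{(j)}$) and a remainder $\Delta_{n_{1}}\!\left[\sL_{-n_{2}}^{(j)}\cdots\sL_{-n_{k}}^{(j)} f^{(\ell)}\right]$ (in $T^{\partial}$ automatically). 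Taking the BSA linear combination yields $(\sD_{d_{j}}^{(j)} - \tilde{\sD}_{d_{j}}^{(j)}) f^{(\ell)} \in T^{\partial}$, and combining with $\tilde{\sD}_{d_{j}}^{(j)} f^{(\ell)} = 0$ gives $\sD_{d_{j}}^{(j)} f^{(\ell)} \in T^{\partial}$, which is the asserted identity. To arrange the symmetry of $g$ in its last $\ell-1$ arguments, I would average the obtained coefficients $R_{r}$ over permutations of $w_{1},\ldots,w_{\ell}$; this is valid because $\sD_{d_{j}}^{(j)}f^{(\ell)}/f^{(\ell)}$ is manifestly symmetric in the screening variables, its $w$-dependence entering only through the symmetric logarithmic derivatives of $f^{(\ell)}$.

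The step that most needs care is the preservation of $T^{\partial}$ under $\tilde{\sL}_{-n}^{(j)}$: one might at first worry that the $w$-derivative piece of $\tilde{\sL}_{-n}^{(j)}$ fails to commute with $\pder{w_{r}}$ and thereby distorts the total-derivative form. The resolution, and the conceptual heart of the argument, is to decompose $\tilde{\sL} = \sL + \Delta$ and exploit that $\Delta$ already lands in $T^{\partial}$ on any input, which is exactly the content of the screening field having conformal dimension one.
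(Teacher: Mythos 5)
Your proposal is correct and follows essentially the same route as the paper: both treat the screening variables as extra points of weight $h_{1,-1}=1$, apply Lemma~\ref{lem: BSA diff op for vertex operator corr fn} to the extended configuration, and exploit that the extra piece of each extended $\sL$-operator is a total $w$-derivative (your $\Delta_{n}$ is exactly the paper's $-\sum_{r}D_{w_{r}}M_{(w_{r}-x_{j})^{1+p}}$). Your telescoping induction on the operator monomials is just a repackaging of the paper's binomial expansion over subsets $A\subset\set{1,\ldots,k}$ followed by pulling $D_{w_{r}}$ past the $x$-only operators, so the two arguments are the same in substance.
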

\begin{proof}
Denote $\hat{n}=n+\ell$ and 
\begin{align*}
\hat{\boldsymbol{x}} =\; & (x_{1},\ldots,x_{n},w_{1},\ldots,w_{\ell})
\end{align*}
and define the function, a product of powers of differences of variables,
\begin{align*}
\hat{f}^{(0)}(\hat{\boldsymbol{x}}) =\; & f^{(\ell)}(\boldsymbol{x};\boldsymbol{w}).
\end{align*}
The conformal weights $h$ associated to the variables $w_{r}$ are
$h=h_{1,-1}=1$. We will apply 
Lemma~\ref{lem: BSA diff op for vertex operator corr fn}
to the function $\hat{f}^{(0)}$ of $\hat{n}$ variables. We keep
the notation $\sL_{p}^{(j)}$ and $\sD_{d_{j}}^{(j)}$ for the differential
operators in the $n$ variables $x_{1},\ldots,x_{n}$, and use the
notation $\hat{\sL}_{p}^{(j)}$ and $\hat{\sD}_{d_{j}}^{(j)}$ for
the differential operators in $\hat{n}=n+\ell$ variables, that are
appropriate for the application of the previous lemma. Explicitly,
we have
\begin{align*}
\hat{\sL}_{p}^{(j)} =\; & \sL_{p}^{(j)}-\sum_{r=1}^{\ell}D_{w_{r}}M_{(w_{r}-x_{j})^{1+p}},
\end{align*}
where $D_{w_{r}}$ is the differential operator $\pder{w_{r}}$, and
$M_{(w_{r}-x_{j})^{1+p}}$ is the multiplication operator by the function
$(w_{r}-x_{j})^{1+p}$, and we used the fact that, since $h_{1,-1}=1$,
\begin{align*}
D_{w_{r}}M_{(w_{r}-x_{j})^{1+p}}
=\; & (w_{r}-x_{j})^{1+p}\pder{w_{r}}+(1+p)h_{1,-1}(w_{r}-x_{j})^{p}.
\end{align*}
The operator $\hat{\sD}_{d_{j}}^{(j)}$ is a linear combination of
terms, which we expand by the binomial formula
\begin{align*}
\hat{\sL}_{-n_{1}}^{(j)}\cdots\hat{\sL}_{-n_{k}}^{(j)}
=\; & \sum_{A\subset\set{1,\ldots,k}}O_{1}^{A}\cdots O_{k}^{A},\qquad\qquad
\text{where } \quad 
O_{a}^{A} = \begin{cases}
\sL_{-n_{a}}^{(j)} & \text{if }a\notin A\\
\sum_{r}D_{w_{r}}M_{(w_{r}-x_{j})^{1-n_{a}}} & \text{if }a\in A
\end{cases}.
\end{align*}
The conclusion of Lemma~\ref{lem: BSA diff op for vertex operator corr fn}
reads 
\begin{align*}
0=\; & \Big(\hat{\sD}_{d_{j}}^{(j)}\hat{f}^{(0)}\Big)(\hat{\boldsymbol{x}})=\Big(\hat{\sD}_{d_{j}}^{(j)}f^{(\ell)}\Big)(\boldsymbol{x};\boldsymbol{w}).
\end{align*}
Expanding by the binomial formula, we observe that the terms with
$A=\emptyset$ give precisely the LHS of the assertion, namely $\sD_{d_{j}}^{(j)}f^{(\ell)}$.
When $A\neq\emptyset$, choose the minimal $a\in A$, and write the
term in the form
\begin{align*}
 & \sum_{r}D_{w_{r}}\sL_{-n_{1}}^{(j)}\cdots\sL_{-n_{a-1}}^{(j)}M_{(w_{r}-x_{j})^{1-n_{a}}}O_{a+1}^{A}\cdots O_{k}^{A}\; f^{(\ell)}
\end{align*}
by noticing that $\sL_{-n_{1}}^{(j)}\cdots\sL_{-n_{a-1}}^{(j)}$ does
not contain $w_{r}$ and can be moved inside the differentiation $D_{w_{r}}$.
These remaining terms put together constitute the RHS of the assertion
\begin{align*}
 & \sum_{r=1}^{\ell}\pder{w_{r}}\left(g(w_{r};\boldsymbol{x};w_{1},\ldots,w_{r-1},w_{r+1},\ldots,w_{\ell})\times f^{(\ell)}(\boldsymbol{x};\boldsymbol{w})\right).
\end{align*}
It is easy to see that $g$ has no poles where the variables do not
coincide, and is symmetric in its last $\ell-1$ variables.
\end{proof}
We now conclude by the important property that the functions which
correspond to highest weight vectors satisfy the Benoit~\& Saint-Aubin
partial differential equations.
\begin{prop}\label{prop: BSA PDEs}
If $E.v=0$, then the function $\sF[v]$ satisfies
$\sD_{d_{j}}^{(j)}\sF[v]=0$.
\end{prop}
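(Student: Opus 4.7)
The strategy is to combine the two preceding corollaries. Write $v=\sum_{l_1,\ldots,l_n}t_{l_1,\ldots,l_n}\,\Wbas_{l_n}\tens\cdots\tens\Wbas_{l_1}$, so that by the definition of the correspondence and Remark \ref{rem: anchor point independence},
\begin{align*}
\sF[v](\mathbf{x}) \;=\; \sum_{l_1,\ldots,l_n} t_{l_1,\ldots,l_n} \int_{\SurfFW_{l_1,\ldots,l_n}} f^{\Supset}_{l_1,\ldots,l_n}(\mathbf{x};\mathbf{w})\,\ud w_1\cdots\ud w_\ell.
\end{align*}
I would first apply $\sD_{d_j}^{(j)}$ term by term, passing it through the integral sign. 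Inside the integral, Corollary \ref{cor: the BSA operator gives total derivative} rewrites $\sD_{d_j}^{(j)} f^{\Supset}_{l_1,\ldots,l_n}$ as $\sum_r\pder{w_r}\big(g\cdot f^{\Supset}_{l_1,\ldots,l_n}\big)$, where $g$ is a rational function symmetric in the remaining variables (and is in particular branch-independent, so the formula transfers directly from the multivalued $f^{(\ell)}$ to each of its rephased branches $f^{\Supset}_{l_1,\ldots,l_n}$). Then I would invoke Corollary \ref{cor: integration by parts for hwv}: the condition $E.v=0$ is precisely what makes the sum over $(l_1,\ldots,l_n)$ of the resulting integrals of total derivatives vanish. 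This chain of equalities gives $\sD_{d_j}^{(j)}\sF[v]=0$.

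The most delicate step is justifying that the differential operator can be brought under the integral, because $\sD_{d_j}^{(j)}$ involves derivatives $\partial_{x_i}$ and the integration surface $\SurfFW_{l_1,\ldots,l_n}$ itself depends on $\mathbf{x}$ (as well as on the anchor $x_0$). I would handle this by working on a small neighborhood of a point $\mathbf{x}\in\chamber_n$, where one can deform $\SurfFW_{l_1,\ldots,l_n}$ to be independent of $\mathbf{x}$ without changing the value of the integral (the integrand is holomorphic in $\mathbf{w}$ on $\Wchamber^{(\ell)}$, so the integral depends only on the homotopy class of the contour). With the surface fixed, differentiation under the integral sign is justified in the regime of $\kappa$ where all the integrals are absolutely convergent and the integrand depends smoothly on $\mathbf{x}$.

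To cover general (irrational) $\kappa>0$, I would then argue by analytic continuation in $\kappa$: both sides of $\sD_{d_j}^{(j)}\sF[v]=0$ are meromorphic in $\kappa$ (by the counterterm regularization referenced in Section \ref{sub: real integral functions}), and agreement for $\kappa$ large suffices. The main obstacle is really just the careful bookkeeping that ensures Corollary \ref{cor: the BSA operator gives total derivative} applies termwise to the branches $f^{\Supset}_{l_1,\ldots,l_n}$ rather than only to the multivalued function $f^{(\ell)}$; once this is verified, everything else is a straightforward assembly of the pieces already developed.
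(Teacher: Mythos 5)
Your proposal is correct and follows essentially the same route as the paper: differentiate under the integral sign (justified by dominated convergence / local contour deformation), apply Corollary \ref{cor: the BSA operator gives total derivative} to turn $\sD_{d_j}^{(j)}f^{(\ell)}$ into a sum of total $w_r$-derivatives, and then use Corollary \ref{cor: integration by parts for hwv} together with $E.v=0$ to conclude the vanishing. The only cosmetic difference is that the paper first reduces by linearity to $K$-eigenvectors so that $\ell$ is fixed (which you need implicitly before invoking Corollary \ref{cor: integration by parts for hwv}), while you spell out in more detail the contour-deformation and analytic-continuation justifications that the paper leaves implicit.
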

\begin{proof}
By linearity, we may assume that $v$ is an eigenvector of $K$, i.e.
that the number of integration variables $\ell$ is fixed. Dominated
convergence ensures that we can take the differential operator $\sD_{d_{j}}^{(j)}$
inside the integral, and thus let it act directly to the integrand
$f^{(\ell)}$. Corollary~\ref{cor: the BSA operator gives total derivative}
then implies that $\sD_{d_{j}}^{(j)}\sF[v](\boldsymbol{x})$ is a linear
combination of terms of the form studied in the integration by parts
formula, Lemma~\ref{lem: general integration by parts}. Assuming
$E.v=0$ it follows from Corollary~\ref{cor: integration by parts for hwv}
that this linear combination vanishes.
\end{proof}

\begin{rem}
\emph{
If one of the tensorands is a trivial representation, $d_{i}=1$,
the $i$:th PDE is merely
\begin{align*}
\pder{x_{i}}\sF[v](x_{1},\ldots,x_{i},\ldots,x_{n})=\; & 0,
\end{align*}
which is also a consequence of the statement of 
Lemma~\ref{lem: trivial dependence of basis functions on singlet variables}.
If $d_{i}=2$, the $i$:th PDE is similar to~\eqref{eq: multiple SLE PDEs},
and it can always be interpreted as a local martingale property of
the function for a chordal $\SLEk$.
}
\end{rem}

\subsection{\label{sub: Special conformal transformations}M\"obius covariance}

Translation invariance and scaling covariance were shown for all basis
functions $\FWint^{(x_{0})}$ in 
Lemma~\ref{lem: translation and scaling of basis functions}.
We will now show that if a vector $v$ is in a trivial (one-dimensional)
subrepresentation of the entire tensor product, then the corresponding
function $\sF[v]$ transforms covariantly under all M\"obius transformations
as in~\eqref{eq: Mobius covariance for a function}, with
covariance weights $h_{1,d_{i}}$ given 
by~\eqref{eq: Kac labeled conformal weights}.

We first record a property of the integrand, which will be used in
the proof of M\"obius covariance.
\begin{lem}\label{lem: infinitesimal special conformal transformations}
If $\ell=\frac{1}{2}\sum_{i=1}^{n}(d_{i}-1)$,
then the function $f^{(\ell)}$ satisfies the partial differential equation
\begin{align*}
\left(\sum_{i=1}^{n}\Big(x_{i}^{2}\pder{x_{i}}+2h_{1,d_{i}}\,x_{i}\Big)\right)
f^{(\ell)}(\boldsymbol{x};\boldsymbol{w})
=\; & \sum_{r=1}^{\ell}\pder{w_{r}}\left(g(w_{r};\boldsymbol{x};w_{1},\ldots,w_{r-1},w_{r+1},\ldots,w_{\ell})\times f^{(\ell)}(\boldsymbol{x};\boldsymbol{w})\right),
\end{align*}
where $g$ is a rational function which is symmetric in the last $\ell-1$
variables, and whose only poles are where some of its arguments coincide.\end{lem}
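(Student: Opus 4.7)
The plan is to parallel the structure of the proof of Corollary~\ref{cor: the BSA operator gives total derivative}. First, I would embed $f^{(\ell)}$ as the extended correlator
\[
\hat f^{(0)}(\hat{\mathbf{x}}) \;:=\; f^{(\ell)}(x_1,\ldots,x_n;w_1,\ldots,w_\ell)
\]
of $\hat n = n + \ell$ points $\hat{\mathbf{x}} = (x_1,\ldots,x_n,w_1,\ldots,w_\ell)$, treating each screening variable $w_r$ as an insertion of ``formal dimension'' $\tilde d = -1$ --- whose Kac weight \eqref{eq: Kac labeled conformal weights} is exactly $h_{1,-1} = 1$. With this convention, $\hat f^{(0)} = \prod_{i<j}(\hat x_j - \hat x_i)^{\frac{2}{\kappa}(\hat d_i - 1)(\hat d_j - 1)}$ as a product of powers of differences on the enlarged set of points, with $\hat d_r = -1$ for the variables $w_r$.

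Next, I would observe that the hypothesis $\ell = \frac{1}{2}\sum_i(d_i - 1)$ is equivalent to the ``charge-balance'' condition $S := \sum_j(\hat d_j - 1) = 0$. A direct computation of $\hat L_1 \hat f^{(0)}/\hat f^{(0)}$, where $\hat L_1 := \sum_j \bigl(\hat x_j^2 \partial_{\hat x_j} + 2 h_{1,\hat d_j}\hat x_j\bigr)$, via the logarithmic differential of $\hat f^{(0)}$ (as in the proof of Lemma~\ref{lem: translation and scaling of basis functions}) then yields, after using $S = 0$,
\[
\hat L_1 \hat f^{(0)} \;=\; \frac{4-\kappa}{\kappa}\Bigl[\sum_{i=1}^{n}(d_i-1)\,x_i \,-\, 2\sum_{r=1}^{\ell}w_r\Bigr]\,\hat f^{(0)}.
\]
Since $h_{1,-1} = 1$, one has the identity $(w_r^2 \partial_{w_r} + 2 h_{1,-1}w_r)\hat f^{(0)} = \partial_{w_r}(w_r^2\,\hat f^{(0)})$, so splitting $\hat L_1$ into its $x$-part (which is the operator on the left of the stated PDE, applied to $f^{(\ell)}$) and its $w$-part gives
\[
\sum_{i=1}^n(x_i^2 \partial_{x_i} + 2 h_{1,d_i} x_i)\,f^{(\ell)} \;=\; \hat L_1 \hat f^{(0)} \;-\; \sum_{r=1}^\ell \partial_{w_r}\bigl(w_r^2\,\hat f^{(0)}\bigr).
\]

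The heart of the proof is then to rewrite the ``anomaly'' $\hat L_1 \hat f^{(0)}$ as a total $w$-derivative $\sum_r \partial_{w_r}(G(w_r;\mathbf x;w_{[r]})\,\hat f^{(0)})$ of the required form; the lemma then follows with $g = G - w_r^2$. In the single-screening case $\ell = 1$, I would verify directly that the ansatz $G(w;\mathbf{x}) = \prod_{i=1}^n(w - x_i)^{d_i - 1}$ works: one computes $\partial_w \log(G\hat f^{(0)}) = \frac{\kappa - 4}{\kappa}\sum_i(d_i-1)/(w-x_i)$, and the identity to be checked is then a polynomial identity of degree $2\ell - 1 = 1$ in $w$ whose coefficients match by an elementary expansion. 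For $\ell \ge 2$, a symmetric generalization of this ansatz is used, where the cross-contributions involving the poles $1/(w_r - w_s)$ produced by $\partial_{w_r}\log\hat f^{(0)}$ cancel among pairs $(r,s)$ via the antisymmetrization $\sum_{r<s}[\phi(r,s) + \phi(s,r)]$ already exploited in the proof of Lemma~\ref{lem: simplex in terms of hypercube}.

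The hard part will be the combinatorial bookkeeping of the general-$\ell$ ansatz and the verification that all polynomial-in-$w$ contributions collapse correctly onto the linear expression $\sum_i(d_i-1)x_i - 2\sum_r w_r$. The matching of the $1/(w_r - x_i)$ residues is robust and holds without using $S=0$, but the cancellation of the higher-degree polynomial pieces in $w$ is precisely where the charge-balance hypothesis is used decisively. Once the function $G$ is produced, the remaining properties required of $g$ --- rationality, symmetry in the $\ell - 1$ remaining $w$-variables, and the location of poles only where arguments coincide --- are evident from the construction.
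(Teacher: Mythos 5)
Your reduction is exactly the paper's: the identity
\begin{align*}
\Big(\sum_{i=1}^{n}(x_{i}^{2}\pder{x_{i}}+2h_{1,d_{i}}x_{i})\Big)f^{(\ell)}
=\; -\sum_{r=1}^{\ell}\pder{w_{r}}\big(w_{r}^{2}f^{(\ell)}\big)
+\Big(1-\tfrac{4}{\kappa}\Big)\Big(\sum_{i=1}^{n}(1-d_{i})x_{i}+2\sum_{r=1}^{\ell}w_{r}\Big)f^{(\ell)}
\end{align*}
(your ``anomaly'' computation, which uses the charge balance $\ell=\frac12\sum_i(d_i-1)$ in precisely the way you describe) is the paper's first step, and your $\ell=1$ ansatz $G(w)=\prod_i(w-x_i)^{d_i-1}$ agrees with the paper's choice. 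However, the proof is incomplete exactly where you defer to ``combinatorial bookkeeping'': for $\ell\geq 2$ you never write down the general ansatz, and the cancellation you invoke is not established. The correct general choice is
\begin{align*}
g(w_{r};\mathbf{x};w_{1},\ldots,\widehat{w_{r}},\ldots,w_{\ell})=\; -w_{r}^{2}+\prod_{i=1}^{n}(w_{r}-x_{i})^{d_{i}-1}\prod_{s\neq r}(w_{r}-w_{s})^{-2},
\end{align*}
and after substituting it the claim reduces to the vanishing of the explicit rational expression \eqref{eq: the rational function that should vanish}. The paper does not verify this by pairwise antisymmetrization; it uses a Liouville-type argument: the only candidate poles are at coincidences $w_{r}=w_{s}$ (of order at most three), one checks by a Laurent expansion in $w_{2}-w_{1}$ (sufficient by symmetry) that these are removable, and one checks decay as $w_{1}\to\infty$, whence the expression is an entire function of $w_{1}$ vanishing at infinity and hence identically zero. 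Your appeal to the phase-factor cancellation mechanism of Lemma~\ref{lem: simplex in terms of hypercube} is not a substitute --- that lemma concerns bookkeeping of contour orderings, not identities of rational functions --- so as written the central identity for $\ell\geq 2$ remains unproven.

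Aside from that missing step, the surrounding claims are sound: the matching of the residues at $w_{r}=x_{i}$, the symmetry of $g$ in the remaining $w$-variables, and the location of its poles all follow once the explicit $g$ above is in hand, and your observation that the hypothesis on $\ell$ enters only through the polynomial (non-pole) part is consistent with where the paper uses it.
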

\begin{proof}
We will perform an explicit calculation that shows the claimed identity, with
\begin{align}\label{eq: the explicit rational function in Mobius covariance}
g(w_{r};\boldsymbol{x};w_{1},\ldots,w_{r-1},w_{r+1},\ldots,w_{\ell})
=\; & -w_{r}^{2}+\prod_{i=1}^{n}(w_{r}-x_{i})^{d_{i}-1}\prod_{s\neq r}(w_{r}-w_{s})^{-2}.
\end{align}
One can begin by observing, by a direct calculation that uses the
assumption $\ell=\frac{1}{2}\sum_{i=1}^{n}(d_{i}-1)$, that
\begin{align*}
 & \left(\sum_{i=1}^{n}\Big(x_{i}^{2}\pder{x_{i}}+2h_{1,d_{i}}\,x_{i}\Big)\right)
 f^{(\ell)}(\boldsymbol{x};\boldsymbol{w})\\
=\; & -\sum_{r=1}^{\ell}\pder{w_{r}}\left(w_{r}^{2}\, f^{(\ell)}(\boldsymbol{x};\boldsymbol{w})\right)+\left(1-\frac{4}{\kappa}\right)\left(\sum_{i=1}^{n}(1-d_{i})x_{i}+2\sum_{r=1}^{\ell}w_{r}\right)f^{(\ell)}(\boldsymbol{x};\boldsymbol{w}).
\end{align*}
Comparing with the explicit $g$ in 
Equation~\eqref{eq: the explicit rational function in Mobius covariance},
the claim now reduces to the vanishing of
\begin{align}\label{eq: the rational function that should vanish}
 & \sum_{r=1}^{\ell}\prod_{i=1}^{n}(w_{r}-x_{i})^{d_{i}-1}\prod_{s\neq r}(w_{r}-w_{s})^{-2}\left(\sum_{j=1}^{n}\frac{d_{j}-1}{w_{r}-x_{j}}-2\sum_{u\neq r}\frac{1}{w_{r}-w_{u}}\right)-2\sum_{r=1}^{\ell}w_{r}+\sum_{i=1}^{n}(d_{i}-1)x_{i},
\end{align}
which is a rational function in the variables $w_{r}$, $r=1,\ldots,\ell$.
Note that there are no poles 
of~\eqref{eq: the rational function that should vanish}
except possibly poles of degree at most three at $w_{r}=w_{s}$ for
some $r\neq s$. To show that these points are in fact not poles,
it is by symmetry sufficient to consider the Laurent series expansion
in, for example, the difference $\epsilon=w_{2}-w_{1}$. This can
be done in a straightforward manner. In addition, one verifies that
with fixed $w_{2},\ldots,w_{\ell}$, as $w_{1}\to\infty$, the function
tends to zero. Thus, the 
expression~\eqref{eq: the rational function that should vanish}
is an entire function of $w_{1}$ tending to zero at infinity, and
as such vanishes identically. This concludes the proof.\end{proof}
\begin{prop}\label{prop: full Mobius covariance}
If $E.v=0$ and $K.v=v$, then for any M\"obius transformation 
$\Mob\colon\bH\to\bH$ such that $\Mob(x_{1})<\Mob(x_{2})<\cdots<\Mob(x_{n})$,
the function $\sF[v]$ satisfies
\begin{align*}
\prod_{i=1}^{n}\Mob'(x_{i})^{h_{1,d_{i}}}
\times\sF[v](\Mob(x_{1}),\ldots,\Mob(x_{n}))
=\; & \sF[v](x_{1},\ldots,x_{n}).
\end{align*}
\end{prop}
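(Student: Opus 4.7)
The plan is to verify M\"obius covariance infinitesimally for the three generators of the group of order-preserving M\"obius automorphisms of $\bH$, and then integrate to the global statement by connectedness. Translation invariance is already provided by Lemma~\ref{lem: translation and scaling of basis functions}, since it holds for every basis function $\FWint^{(x_{0})}_{l_{1},\ldots,l_{n}}$ and so passes linearly to $\sF^{(x_{0})}[v]$.

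For scaling covariance, Lemma~\ref{lem: multiple coproducts} combined with $K.v = v$ forces $v$ to be a combination of basis vectors with $\sum_i l_i = \ell := \tfrac{1}{2}\sum_i(d_i-1)$. A direct algebraic check shows that for this value of $\ell$, the homogeneity degree $\Delta^{d_1,\ldots,d_n}(\ell)$ in Lemma~\ref{lem: translation and scaling of basis functions} simplifies to $-\sum_i h_{1,d_i}$ (equivalently, the subrepresentation label $\sum_i d_i - n + 1 - 2\ell$ equals $1$, so the formula for $\Delta$ recorded in the remark following Equation~\eqref{eq: Kac labeled conformal weights} collapses to $-\sum_i h_{1,d_i}$ by $h_{1,1}=0$). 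This matches the required M\"obius factor $\lambda^{-\sum_i h_{1,d_i}}$ for $\Mob(x) = \lambda x$.

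The main step is the special conformal transformation, whose infinitesimal generator is $\xi := \sum_{i=1}^n\bigl(x_i^2 \,\pder{x_i} + 2 h_{1,d_i} x_i\bigr)$. The goal is to show that $\xi\, \sF[v] \equiv 0$ on $\chamber_n$. Working with the anchored representative $\sF^{(x_{0})}[v]$, dominated convergence brings $\xi$ inside the integrals and lets it act directly on the integrand $f^{(\ell)}$. Since $K.v = v$ forces $\ell = \tfrac{1}{2}\sum_i(d_i-1)$, the hypothesis of Lemma~\ref{lem: infinitesimal special conformal transformations} is met, and the lemma rewrites $\xi f^{(\ell)}$ as a total $w$-divergence $\sum_r \pder{w_r}\bigl(g\, f^{(\ell)}\bigr)$, with $g$ rational, symmetric in $w_1,\ldots,w_{\ell-1}$, and holomorphic on $\Wchamber^{(\ell)}$ (its only singularities are on diagonals that the integration surface avoids). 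Then, because $E.v = 0$, Corollary~\ref{cor: integration by parts for hwv} applies to the linear combination of surfaces $\SurfFW_{l_1,\ldots,l_n}$ determined by $v$, and kills the integral of this total divergence. Hence $\xi\, \sF^{(x_{0})}[v] \equiv 0$, and by Remark~\ref{rem: anchor point independence} the same holds for $\sF[v]$.

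With the three infinitesimal covariance equations established at every $\mathbf{x}$, define $G(\Mob) := \prod_i \Mob'(x_i)^{h_{1,d_i}}\, \sF[v](\Mob(\mathbf{x}))$ on the connected open set of order-preserving $\Mob \in \mathrm{PSL}(2,\bR)$. The three infinitesimal statements, applied at $\Mob(\mathbf{x})$ rather than $\mathbf{x}$, express exactly the vanishing of the derivative of $G$ along each of the three linearly independent generating vector fields of $\mathrm{PSL}(2,\bR)$ at every $\Mob$. Thus $G$ is locally constant, and by connectedness equals its value $G(\id) = \sF[v](\mathbf{x})$ throughout, giving the claim. The main obstacle is the special-conformal step, where one has to verify that the rational $g$ produced by Lemma~\ref{lem: infinitesimal special conformal transformations} genuinely satisfies the symmetry and holomorphicity assumptions of Corollary~\ref{cor: integration by parts for hwv}; this ultimately reduces to the fact that the only poles of $g$ are located on the coincidence diagonals excluded from $\Wchamber^{(\ell)}$.
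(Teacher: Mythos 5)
Your proof is correct and follows essentially the same route as the paper's: the core step is identical, namely that Lemma~\ref{lem: infinitesimal special conformal transformations} turns the special-conformal generator applied to the integrand into a total $w$-derivative, which Corollary~\ref{cor: integration by parts for hwv} annihilates because $E.v=0$, while translations and scalings are read off from Lemma~\ref{lem: translation and scaling of basis functions}. The only cosmetic difference is the final assembly: the paper factors $\Mob$ into a translation, a scaling and a special conformal map and integrates the resulting ODE in the parameter $a$, whereas you run a locally-constant argument for $G(\Mob)$ on the connected set of order-preserving M\"obius maps --- both are standard ways of integrating the same three infinitesimal identities.
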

\begin{proof}
Any M\"obius transformation $\Mob\colon\bH\to\bH$ can be written as
a composition of a translation $z\mapsto z+\xi$ (for some $\xi\in\bR$),
a scaling $z\mapsto\lambda z$ (for some $\lambda>0$), and a special
conformal transformation $z\mapsto\frac{z}{1+az}$ (for some $a\in\bR$).
Lemma~\ref{lem: translation and scaling of basis functions} shows
the assertion for translations and scalings. It suffices to prove
the statement for special conformal transformations.

For the special conformal transformation,
we may assume that $x_1 < 0$ and $x_n > 0$, by precomposing with
a translation if necessary.
Then, the special conformal transformation $\Mob_{a}(z)=\frac{z}{1+az}$ respects
the order of the boundary points $x_{1},\ldots,x_{n}$
if  $a\in(\frac{-1}{x_{n}},\frac{-1}{x_{1}})$.
To obtain the general case we will integrate an infinitesimal form
of the formula starting from the trivial case of $a=0$.

Consider a term 
\begin{align*}
 & \prod_{i=1}^{n}\Mob'(x_{i})^{h_{1,d_{i}}}\times\int_{\SurfFW_{l_{1},\ldots,l_{n}}}f_{l_{1},\ldots,l_{n}}^{\Supset}(\Mob_{a}(x_{1}),\dots,\Mob_{a}(x_{n});w_{1},\ldots,w_{\ell})\;\ud w_{1}\cdots\ud w_{\ell}
\end{align*}
on the left hand side of the asserted equation. 
Using the identities $\der a\Mob_{a}(z)=-\Mob_{a}(z)^{2}$
and $\der a(\Mob_{a}'(z)^{h})=-2h\Mob_{a}(z)\times\Mob_{a}'(z)^{h}$
we compute its derivative with respect to $a$,
\begin{align*}
 & \der a\left(\prod_{i=1}^{n}\Mob'(x_{i})^{h_{1,d_{i}}}\times\int_{\SurfFW_{l_{1},\ldots,l_{n}}}f_{l_{1},\ldots,l_{n}}^{\Supset}(\Mob_{a}(x_{1}),\dots,\Mob_{a}(x_{n});\boldsymbol{w})\;\ud w_{1}\cdots\ud w_{\ell}\right)\\
=\; & \left(\prod_{i=1}^{n}\Mob'(x_{i})^{h_{1,d_{i}}}\right)\times\int_{\SurfFW_{l_{1},\ldots,l_{n}}}\big(\sL_{1}f_{l_{1},\ldots,l_{n}}^{\Supset}\big)(\Mob_{a}(x_{1}),\dots,\Mob_{a}(x_{n});\boldsymbol{w})\;\ud w_{1}\cdots\ud w_{\ell},
\end{align*}
where 
\begin{align*}
\sL_{1}=\; & -\sum_{i=1}^{n}\left(x_{i}^{2}\pder{x_{i}}+2h_{1,d_{i}}x_{i}\right).
\end{align*}
The assumption $K.v=v$ guarantees that in these terms $\ell=\sum_{i}l_{i}$
takes the value $\ell=\frac{1}{2}\sum_{i=1}^{n}(d_{i}-1)$. Thus,
by Lemma~\ref{lem: infinitesimal special conformal transformations},
we have
\begin{align*}
\big(\sL_{1}f_{l_{1},\ldots,l_{n}}^{\Supset}\big)(\boldsymbol{x};\boldsymbol{w})
=\; & \sum_{r=1}^{\ell}\pder{w_{r}}\left(g(w_{r};\boldsymbol{x};w_{1},\ldots,w_{r-1},w_{r+1},\ldots,w_{\ell})\times f_{l_{1},\ldots,l_{n}}^{\Supset}(\boldsymbol{x};\boldsymbol{w})\right),
\end{align*}
where $g$ single-valued and symmetric with respect to the last $\ell-1$
variables. Since $E.v=0$ we can apply 
Corollary~\ref{cor: integration by parts for hwv} 
to the $a$-derivative of the left hand side of the asserted formula,
and get
\begin{align*}
\der a\left(\prod_{i=1}^{n}\Mob'(x_{i})^{h_{1,d_{i}}}\times\sF[v](\Mob(x_{1}),\ldots,\Mob(x_{n}))\right)=\; & 0.
\end{align*}
It now follows that also the left hand side of the asserted formula
is constant in $a$ for $a\in(\frac{-1}{x_{n}},\frac{-1}{x_{1}})$.
At $a=0$ we have $\Mob_{a}=\id_{\bH}$, so this constant equals $\sF[v](x_{1},\ldots,x_{n})$.
\end{proof}

\subsection{\label{sub: the main correspondence result}Main theorems about the correspondence}

We now give the precise statements of the main results about 
the spin chain~- Coulomb gas correspondence, and finish their proofs.
These results show how the properties of the vector 
$v\in\Wd_{d_{n}}\tens\cdots\tens\Wd_{d_{1}}$
translate to properties of the function 
$(x_{0};x_{1},\ldots,x_{n})\mapsto\sF^{(x_{0})}[v](x_{1},\ldots,x_{n})$.
Three types of properties of the functions are considered: partial
differential equations (PDE), covariance properties (COV) and asymptotics (ASY).

Recall the following notation, to be used in the statement below.
The exponents $h_{1,d}$ and $\Delta_{d}^{d_{1},\ldots,d_{n}}$ are given by 
\begin{align*}
h_{1,d}=\; & \frac{(d-1)(2(d+1)-\kappa)}{2\kappa}\qquad\qquad\text{ and } & 
\Delta_{d}^{d_{1},\ldots,d_{n}}=\; & h_{1,d}-\sum_{i=1}^{n}h_{1,d_{i}}.
\end{align*}
The multiplicative constants $B_{d}^{d',d''}$ are defined by the analytic 
continuation in $\kappa$ of the
integrals~\eqref{eq: generalized beta constants} over $m$-dimensional simplices, with $m=\frac{1}{2}\left(d'+d''-1-d\right)$, which by 
Remark~\ref{rem: Selberg integral} take the form
\begin{align*}
B_{d}^{d',d''} = \; & \frac{1}{m!} \; \prod_{p=1}^m
    \frac{\Gamma\big( 1 - \frac{4}{\kappa}(d'-p)\big) \; \Gamma\big( 1 - \frac{4}{\kappa}(d''-p)\big) \; \Gamma\big( 1 + \frac{4}{\kappa}p\big)}
        {\Gamma\big( 1 + \frac{4}{\kappa}\big) \; \Gamma\big( 2 - \frac{4}{\kappa}(d'+d''-m-p)\big)} .
\end{align*}
The partial differential operators
$\sD_{d_{j}}^{(j)}$ (which depend also on $d_{1},\ldots d_{n}$)
are defined in terms of
\begin{align*}
\sL_{-p}^{(j)} 
:= \; & -\sum_{i\neq j}\left((x_{i}-x_{j})^{1-p}\pder{x_{i}}+(1-p)\, h_{1,d_{i}}\,(x_{i}-x_{j})^{-p}\right),
\end{align*}
by the Benoit~\& Saint-Aubin formula
\begin{align}\label{eq: BSA differential operator}
\sD_{d_{j}}^{(j)} 
:= \; & \sum_{k=1}^{d_{j}}\sum_{\substack{p_{1},\ldots,p_{k}\geq1\\
p_{1}+\cdots+p_{k}=d_{j}}}
\frac{(-4/\kappa)^{d_{j}-k}\,(d_{j}-1)!^{2}}{\prod_{u=1}^{k-1}(\sum_{i=1}^{u}p_{i})(\sum_{i=u+1}^{k}p_{i})}\times\sL_{-p_{1}}^{(j)}\cdots\sL_{-p_{k}}^{(j)}.
\end{align}

The most important cases concern highest weight vectors, i.e., vectors
$v$ satisfying $E.v=0$. We first state, however, the properties
that do not depend on this assumption.
\begin{thm}\label{thm: SCCG correspondence non-hwv}
Let $v\in\Wd_{d_{n}}\tens\cdots\tens\Wd_{d_{1}}$.
The function $(x_{0};x_{1},\ldots,x_{n})\mapsto\sF^{(x_{0})}[v](x_{1},\ldots,x_{n})$
satisfies the following properties:
\begin{description}
\item [{(COV)}] For any $\xi\in\bR$ we have the translation invariance
\begin{align*}
\sF^{(x_{0}+\xi)}[v](x_{1}+\xi,\ldots,x_{n}+\xi)
=\; & \sF^{(x_{0})}[v](x_{1},\ldots,x_{n}).
\end{align*}
If furthermore $K.v=q^{d-1}\,v$, then for any $\lambda>0$ we have
the scaling covariance
\begin{align*}
\sF^{(\lambda x_{0})}[v](\lambda x_{1},\ldots,\lambda x_{n})
=\; & \lambda^{\Delta_{d}^{d_{1},\ldots,d_{n}}}
\times\sF^{(x_{0})}[v](x_{1},\ldots,x_{n}).
\end{align*}

\item [{(ASY)}] If $v=\pi_{j,j+1}^{(d)}(v)$ and we denote
\begin{align*}
\hat{v}=\; & \hat{\pi}_{j,j+1}^{(d)}(v)\,\in\,
\Big(\bigotimes_{i=j+2}^{n}\Wd_{d_{i}}\Big)\tens\Wd_{d}\tens\Big(\bigotimes_{i=1}^{j-1}\Wd_{d_{i}}\Big),
\end{align*}
then we have 
\begin{align*}
\lim_{x_{j},x_{j+1}\to\xi}
\Big( & (x_{j+1}-x_{j})^{-\Delta_{d}^{d_{j},d_{j+1}}}
\times\sF^{(x_{0})}[v](x_{1},\ldots,x_{n}) \Big) \\
=\; & B_{d}^{d_{j},d_{j+1}} \times 
\sF^{(x_{0})}[\hat{v}](x_{1},\ldots,x_{j-1},\xi,x_{j+2},\ldots,x_{n}).
\end{align*}
\end{description}
\end{thm}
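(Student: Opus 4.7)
The plan is to derive each assertion by combining the linearity of $v\mapsto \sF^{(x_{0})}[v]$ (built into its definition via \eqref{eq: SCCG basis vector images}) with properties of the basis functions $\FWint_{l_{1},\ldots,l_{n}}^{(x_{0})}$ and with the explicit description of the action of the Cartan generator on the standard tensor product basis.

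For the translation invariance portion of (COV), I would simply expand $v$ in the basis $\Wbas_{l_{n}}\tens\cdots\tens\Wbas_{l_{1}}$, apply the first statement of Lemma \ref{lem: translation and scaling of basis functions} to each summand, and conclude by linearity. No extra ingredient is used.

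For the scaling covariance, the key observation is that $K$ acts diagonally in the standard basis: by Lemma \ref{lem: multiple coproducts} we have $\Hcp^{(n)}(K)=K^{\tens n}$, and combined with the $K$-eigenvalues from Lemma \ref{lem: representations of quantum sl2} this gives
\begin{align*}
K.(\Wbas_{l_{n}}\tens\cdots\tens\Wbas_{l_{1}}) \;=\; q^{\sum_{i}(d_{i}-1)-2\ell}\,(\Wbas_{l_{n}}\tens\cdots\tens\Wbas_{l_{1}}),\qquad \ell=\sum_{i}l_{i}.
\end{align*}
Therefore the hypothesis $K.v=q^{d-1}v$ singles out only those basis vectors with the fixed value $\ell=\tfrac{1}{2}\bigl(\sum_{i}(d_{i}-1)-(d-1)\bigr)$. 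Since the second statement of Lemma \ref{lem: translation and scaling of basis functions} gives a scaling exponent $\Delta^{d_{1},\ldots,d_{n}}(\ell)$ which depends only on this $\ell$, the scaling factor is the same for every contributing summand, and linearity yields the asserted covariance. The only small bookkeeping task is to confirm, via the remark following Lemma \ref{lem: asymptotic basis functions}, that with $d=\sum_{i}d_{i}-n+1-2\ell$ one has $\Delta^{d_{1},\ldots,d_{n}}(\ell)=h_{1,d}-\sum_{i}h_{1,d_{i}}=\Delta_{d}^{d_{1},\ldots,d_{n}}$, matching the exponent in the theorem.

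For (ASY) nothing new is needed: the assertion is word-for-word Proposition \ref{prop: asymptotics with subrepresentations}, which has already been obtained via Lemmas \ref{lem: asymptotics for hwv in two consequtive}, \ref{lem: asymptotics for tensor basis in two consequtive}, and \ref{lem: asymptotic basis functions}, so this part of the proof reduces to a single citation. Overall the theorem is a consolidation and relabeling of results already proved, and there is no genuine obstacle; the only point requiring care is the verification that the scaling exponent of Lemma \ref{lem: translation and scaling of basis functions} and the exponent $\Delta_{d}^{d_{1},\ldots,d_{n}}$ introduced in the theorem coincide under the Cartan-eigenvalue parametrization above.
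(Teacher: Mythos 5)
Your proposal is correct and follows essentially the same route as the paper: (COV) is obtained by expanding $v$ in the standard tensor product basis and invoking Lemma \ref{lem: translation and scaling of basis functions} (with the observation that $K.v=q^{d-1}v$ fixes the number $\ell$ of screening variables, so a single scaling exponent applies to all contributing terms), and (ASY) is exactly Proposition \ref{prop: asymptotics with subrepresentations}. Your extra bookkeeping identifying $\Delta^{d_{1},\ldots,d_{n}}(\ell)$ with $\Delta_{d}^{d_{1},\ldots,d_{n}}$ via $d=\sum_{i}d_{i}-n+1-2\ell$ is the right (and only) verification needed, and it checks out.
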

\begin{proof}
The statements (COV) about the translation invariance and homogeneity
of $\sF^{(x_{0})}[v](x_{1},\ldots,x_{n})$ are clear from the corresponding
properties of the basis functions $\FWint_{l_{1},\ldots,l_{n}}^{(x_{0})}$
stated in Lemma~\ref{lem: translation and scaling of basis functions}.
The statement (ASY) was proved in 
Proposition~\ref{prop: asymptotics with subrepresentations}.
\end{proof}
In the case of highest weight vectors, the functions become independent
of the anchor point $x_{0}$ and they solve Benoit~\& Saint-Aubin partial differential
equations. In the further special case of vectors in a trivial subrepresentation,
one obtains full M\"obius covariance. These properties are stated precisely
in the following theorem.
\begin{thm}\label{thm: SCCG correspondence hwv}
Assume that $v\in\Wd_{d_{n}}\tens\cdots\tens\Wd_{d_{1}}$
satisfies $E.v=0$. Then $\sF^{(x_{0})}[v](x_{1},\ldots,x_{n})$ is
independent of $x_{0}$, and thus defines a function $\sF[v]$ on
$\chamber_{n}$. This function $(x_{1},\ldots,x_{n})\mapsto\sF[v](x_{1},\ldots,x_{n})$
satisfies the following properties:
\begin{description}
\item [{(PDE)}] With $\sD_{d_{j}}^{(j)}$ the differential 
operator~\eqref{eq: BSA differential operator}, we have
\begin{align*}
\sD_{d_{j}}^{(j)}\sF[v]=\; & 0\qquad\text{for }j=1,\ldots,n.
\end{align*}
\item [{(COV)}] For any $\xi\in\bR$ we have the translation invariance
\begin{align*}
\sF[v](x_{1}+\xi,\ldots,x_{n}+\xi)=\; & \sF[v](x_{1},\ldots,x_{n}).
\end{align*}
If furthermore $K.v=q^{d-1}\,v$, then for any $\lambda>0$ we have
the scaling covariance
\begin{align*}
\sF[v](\lambda x_{1},\ldots,\lambda x_{n})
=\; & \lambda^{\Delta_{d}^{d_{1},\ldots,d_{n}}}\times\sF[v](x_{1},\ldots,x_{n}).
\end{align*}
If furthermore $K.v=v$, then we have the full M\"obius covariance
\begin{align*}
\prod_{j=1}^{n}\Mob'(x_{j})^{h_{1,d_{j}}}
\times\sF[v]\left(\Mob(x_{1}),\ldots,\,\Mob(x_{n})\right)
=\; & \sF[v](x_{1},\ldots,\, x_{n})
\end{align*}
for any M\"obius transformation $\Mob\colon\bH\to\bH$ such that $\Mob(x_{1})<\Mob(x_{2})<\cdots<\Mob(x_{n})$.
\item [{(ASY)}] If $v=\pi_{j,j+1}^{(d)}(v)$ and we denote
\begin{align*}
\hat{v}=\; & \hat{\pi}_{j,j+1}^{(d)}(v)\,\in\,
\Big(\bigotimes_{i=j+2}^{n}\Wd_{d_{i}}\Big)\tens\Wd_{d}\tens\Big(\bigotimes_{i=1}^{j-1}\Wd_{d_{i}}\Big),
\end{align*}
then also $E.\hat{v}=0$ and we have 
\begin{align*}
\lim_{x_{j},x_{j+1}\to\xi}\Big( & \, (x_{j+1}-x_{j})^{-\Delta_{d}^{d_{j},d_{j+1}}}\times\sF[v](x_{1},\ldots,x_{n})\Big) \\
=\; & B_{d}^{d_{j},d_{j+1}}\times\sF[\hat{v}](x_{1},\ldots,x_{j-1},\xi,x_{j+2},\ldots,x_{n}).
\end{align*}
\end{description}
\end{thm}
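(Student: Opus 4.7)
The plan is to assemble the theorem directly from the preceding results of Section \ref{sec: FW correspondence}, which were set up precisely so that each clause of the theorem becomes a citation with minimal extra work. I would structure the argument as follows.

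First, I would establish the independence of $x_0$: when $E.v=0$, Proposition \ref{prop: closed integration surface} expresses $\sF^{(x_0)}[v]$ as a finite linear combination of the functions $\CubeInt^{(x_0)}_{0,m_2,\ldots,m_n}$. For $\kappa$ large enough these are convergent and, by the homotopy argument recorded in Remark \ref{rem: anchor point independence}, independent of the choice of $x_0<x_1$. Analytic continuation in $\kappa$ extends this independence to the full range, defining $\sF[v]$ on $\chamber_n$.

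The (PDE) clause is exactly Proposition \ref{prop: BSA PDEs}, so this step is immediate. For (COV), translation invariance and scaling covariance follow by linearity from Lemma \ref{lem: translation and scaling of basis functions}. The only thing to check is that when $K.v=q^{d-1}v$, a \emph{single} scaling exponent is produced. Since $K$ acts diagonally on the tensor basis with eigenvalue $q^{\sum_i(d_i-1)-2\ell}$ on $\Wbas_{l_n}\tens\cdots\tens\Wbas_{l_1}$ where $\ell=\sum_i l_i$, the assumption $K.v=q^{d-1}v$ forces all basis vectors appearing in the expansion of $v$ to share the common value $\ell=\tfrac{1}{2}(\sum_i d_i-n+1-d)$. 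Substituting this $\ell$ into $\Delta^{d_1,\ldots,d_n}(\ell)$ and using the identity from the remark following Lemma \ref{lem: asymptotic basis functions} yields the asserted exponent $\Delta_d^{d_1,\ldots,d_n}=h_{1,d}-\sum_i h_{1,d_i}$. Full M\"obius covariance is Proposition \ref{prop: full Mobius covariance} (the extra hypothesis $K.v=v$ forces $d=1$, hence $h_{1,d}=0$, so only the factors at the $x_i$ appear).

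Finally, the limit statement in (ASY) is Proposition \ref{prop: asymptotics with subrepresentations}. The one remaining observation is that $E.\hat v=0$: since $\iota^{(d)}_{j,j+1}$ is a morphism of $\Uqsltwo$-modules and $\pi^{(d)}_{j,j+1}=\iota^{(d)}_{j,j+1}\circ\hat\pi^{(d)}_{j,j+1}$, the assumption $E.v=0$ gives $0=E.\pi^{(d)}_{j,j+1}(v)=\iota^{(d)}_{j,j+1}(E.\hat v)$, and injectivity of $\iota^{(d)}_{j,j+1}$ yields $E.\hat v=0$. I expect no genuine obstacle in this proof: the hard work was carried out in the preceding propositions, and the only substantive task here is the bookkeeping that matches the parameterization by the number $\ell$ of screening charges (used in the basis functions) with the parameterization by the Cartan eigenvalue $q^{d-1}$ (used in the theorem statement).
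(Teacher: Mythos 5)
Your proposal is correct and follows essentially the same route as the paper: the anchor-point independence comes from Proposition \ref{prop: closed integration surface} with Remark \ref{rem: anchor point independence}, (PDE) is Proposition \ref{prop: BSA PDEs}, (COV) combines Lemma \ref{lem: translation and scaling of basis functions} with Proposition \ref{prop: full Mobius covariance}, and (ASY) is Proposition \ref{prop: asymptotics with subrepresentations}. The extra bookkeeping you supply (matching the Cartan eigenvalue $q^{d-1}$ to the screening number $\ell$, and deducing $E.\hat v=0$ from the injectivity of the intertwiner $\iota^{(d)}_{j,j+1}$) is accurate and only makes explicit what the paper leaves implicit.
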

\begin{proof} 
That $\sF^{(x_{0})}[v]$ does not depend on $x_{0}$ follows from
Proposition~\ref{prop: closed integration surface} and 
Remark~\ref{rem: anchor point independence}
after it --- we may define
\begin{align*}
\sF[v](x_{1},\ldots,x_{n})
=\; & \sF^{(x_{0})}[v](x_{1},\ldots,x_{n})\qquad\text{for any }x_{0}<x_{1}.
\end{align*}

The statement (PDE) was shown in Proposition~\ref{prop: BSA PDEs}.

Of the statements (COV), the first two are direct consequences of
the corresponding statements in Theorem~\ref{thm: SCCG correspondence non-hwv}
for $\sF^{(x_{0})}[v]$. The third statement was shown in 
Proposition~\ref{prop: full Mobius covariance}.

The statement (ASY) is a direct consequence of 
the fact that $\pi_{j,j+1}^{(d)}$ is a projection to a subrepresentation, and
the corresponding statement in Theorem~\ref{thm: SCCG correspondence non-hwv}
for $\sF^{(x_{0})}[v]$.
\end{proof}

\begin{rem}\label{rem: benign roots of unity}
\emph{%
We have stated our main results assuming $q$ is not a root of unity.
Let us, however, point out that certain
root of unity cases work out completely parallel to the generic case.
}

\emph{%
For $q$ a root of unity, denote by $p$ the smallest positive
integer such that $q^{p} \in \set{\pm 1}$, or equivalently,
the smallest positive integer such that $\qnum{p}=0$.
Then the representations~$\Wd_d$ are still irreducible if ${d-1 < p}$,
and moreover, the conclusion and formulas in the Clebsch-Gordan
decomposition of Lemma~\ref{lem: tensor product representations of quantum sl2}
remain valid as long as $(d'-1)+(d''-1) < p$.
Using this repeatedly, one sees that the tensor product $\bigotimes_{j=1}^n \Wd_{d_j}$
used in the spin chain~-~Coulomb gas correspondence retains the generic 
direct sum decomposition as long as $\sum_{j=1}^n (d_j - 1) < p$.
It is also easy to check that, under the assumption $\sum_{j=1}^n (d_j - 1) < p$, all
formulas in Section~\ref{sec: various integral functions} for the integral functions
remain valid, and that they
can be applied as we have done in this section to derive the above main theorems.
In conclusion, while the general root of unity case may be very complicated
(as discussed in Section~\ref{sec: conclusions}), at least for $\sum_{j=1}^n (d_j - 1) < p$ the
statements need not be changed!}

\emph{%
By this observation, for instance the pure partition functions $\mathcal{Z}_\alpha$
of multiple $\SLEk$ described in Section~\ref{sss: multiple SLEs} can be constructed
for rational $\kappa \in (0,8) \cap \bQ$ exactly like in the
generic case, as long as the number $N$ of curves is sufficiently small, namely $N < \frac{p}{2}$
where $p=p(\kappa)$ is the smallest positive integer such that $\frac{4}{\kappa} p $ is an integer.
}
\end{rem}

\bigskip{}

\section{\label{sec: further properties}Further properties}

In this section, we establish two more properties of the correspondence.
First, in Section~\ref{sub: general asymptotics}, we treat
a generalization of the asymptotics property (ASY),
to the case where more than two of the variables tend to a common limit.
Then, in Section~\ref{sub: role of infinity}, we apply the general
asymptotics statement to consider a limit
where either the first or the last variable of
a M\"obius covariant function is taken to infinity.
This property pertains to the manifestation of the
periodicity of the boundary of a simply connected domain,
discussed in Section~\ref{sub: cyclic permutations}.

\subsection{\label{sub: general asymptotics}Asymptotics as several variables tend to a common limit}

Suppose that $1 \leq j < k \leq n$.
We first address what happens to the function
\[ \sF^{(x_0)}[v](x_1, \ldots, x_j, \ldots, x_k , \ldots , x_n) \]
in the limit
\begin{align}
\label{eq: limit at same rate}
x_j, x_{j+1}, \ldots, x_{k-1}, x_k \to \; & \xi
    \qquad \;\text{ taken in such a way that } \\
\nonumber
\frac{x_{i}-x_j}{x_{k}-x_j} \to \; & \eta_i
    \qquad \text{ for } i \in \set{j,j+1, \ldots, k-1,k} .
\end{align}
We assume that $x_{j-1} < \xi < x_{k+1}$
and $0 = \eta_j < \eta_{j+1} < \cdots < \eta_{k-1} < \eta_k = 1$.

The case $k=j+1$, when only two points come together, was the content of part (ASY) of
Theorems~\ref{thm: SCCG correspondence hwv} and \ref{thm: SCCG correspondence non-hwv}.
It was based on the decomposition \[\Wd_{d_{j+1}} \tens \Wd_{d_j} \isom \bigoplus_d \Wd_d . \]
An important difference in the present case will be that in the decomposition
\[ \Wd_{d_{k}} \tens \cdots \tens \Wd_{d_j} \isom \bigoplus_d m_d \Wd_d \]
some of the multiplicities $m_d$ may be greater than one.
Note, however, that any vector in the space
$m_d \Wd_d \subset \Wd_{d_{k}} \tens \cdots \tens \Wd_{d_j}$
may be written as a linear combination of vectors of the form $F^l.\tau_0$, 
where $0 \leq l < d$, and $\tau_0$ is some highest weight vector of 
an irreducible subrepresentation of dimension $d$
(there are $m_d$ linearly independent such highest weight vectors).
It is therefore possible to deduce the behavior of the functions in general
from the following result.
\begin{prop}\label{prop: general asymptotics with subrepresentations}
Suppose that $\tau_0 \in \Wd_{d_{k}} \tens \cdots \tens \Wd_{d_j}$
satisfies $E.\tau_0 = 0$ and $K.\tau_0 = q^{d-1} \,\tau_0$. Let
\begin{align*}
v = \; & \Wbas_{l_n} \tens \cdots \tens \Wbas_{l_{k+1}} \tens F^l.\tau_0 \tens \Wbas_{l_{j-1}} \tens \cdots \tens \Wbas_{l_1}
    & & \in \, \Wd_{d_n} \tens \cdots \tens \Wd_{d_1} \\
\hat{v} = \; & \Wbas_{l_n} \tens \cdots \tens \Wbas_{l_{k+1}} \tens \Wbas_l \tens \Wbas_{l_{j-1}} \tens \cdots \tens \Wbas_{l_1}
    & & \in \, \Wd_{d_n} \tens \cdots \tens \Wd_{d_{k+1}} \tens \Wd_d \tens \Wd_{d_{j-1}} \tens \cdots \tens \Wd_{d_1} ,
\end{align*}
and let $\Delta = \Delta_d^{d_j , \ldots, d_k}$ as in 
Section~\ref{sub: the main correspondence result}.
Then in the limit~\eqref{eq: limit at same rate} we have
\[ \frac{\sF^{(x_0)}[v](x_1 , \ldots, x_n)}{|x_k - x_j|^\Delta}
 \; \longrightarrow \;
\sF[\tau_0](\eta_j, \ldots \eta_k) \times 
\sF^{(x_0)}[\hat{v}](x_1 , \ldots, x_{j-1} , \xi , x_{k+1} , \ldots, x_n) .
\]
\end{prop}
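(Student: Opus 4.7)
The plan is to prove this by first reducing to the base case $l = 0$ via an induction on $l$, closely paralleling the argument in Lemma~\ref{lem: asymptotics for tensor basis in two consequtive}, and then handling the base case by localizing the portion of the integration surface corresponding to $\tau_0$ to a small neighborhood of the cluster $\set{x_j, \ldots, x_k}$.

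For the base case $l = 0$, since $\tau_0$ is a highest weight vector, Proposition~\ref{prop: closed integration surface} expresses $\sF[\tau_0]$ as a linear combination of integrals over closed, anchor-independent contours. Using the contour rearrangement technique of Lemma~\ref{lem: two variable FW basis function} and the vanishing of total-derivative contributions provided by Corollary~\ref{cor: integration by parts for hwv}, I would rewrite $\sF^{(x_0)}[v]$ as a sum of integrals in which the integration variables split into two groups: ``external'' variables, integrated along loops anchored at $x_0$ encircling each $x_i$ with $i \notin \set{j, \ldots, k}$, and ``internal'' variables, integrated along a closed configuration of the type appearing in the expansion of $\sF[\tau_0]$, localized to a neighborhood of the cluster. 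Applying the change of variables $w_r^{\mathrm{int}} = x_j + (x_k - x_j)\, u_r$ to the internal variables produces, by the scaling computation of Lemma~\ref{lem: translation and scaling of basis functions} applied to the cluster dimensions $d_j, \ldots, d_k$, a prefactor $|x_k - x_j|^{\Delta_d^{d_j, \ldots, d_k}}$ times an integrand that, in the limit~\eqref{eq: limit at same rate}, converges pointwise and in a dominated fashion to the product of the integrand of $\sF[\tau_0](\eta_j, \ldots, \eta_k)$ and the integrand of $\sF^{(x_0)}[\hat{v}]$ at the external variables. Dominated convergence, exactly as in Lemma~\ref{lem: asymptotic basis functions}, then delivers the asserted factorization.

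For the inductive step, assume the asymptotics is known for $F^{l-1}.\tau_0$ and consider $v$ with $F^l.\tau_0$ in the cluster position. By Lemma~\ref{lem: multiple coproducts} for the iterated coproduct of $F$, passing from $F^{l-1}.\tau_0$ to $F^l.\tau_0$ corresponds, in the integration-surface picture, to adding one extra loop anchored at $x_0$ that encircles the entire cluster together with all its internal variables. Decomposing this extra loop into loops around the individual $x_i$ with $i \in \set{j, \ldots, k}$ and tracking the accumulated phase factors mirrors the recursion already used in Lemma~\ref{lem: asymptotics for tensor basis in two consequtive}; since the function $\sF^{(x_0)}[\hat{v}]$ on the right-hand side transforms by the action of $F$ on its $\Wbas_l$-slot while the internal factor $\sF[\tau_0](\eta_j, \ldots, \eta_k)$ is unaffected, the induction closes.

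The main obstacle is the careful bookkeeping in the base case: one must verify that the branch of the integrand on the localized internal surface agrees, up to phases that are absorbed into the scaling prefactor, with the branch used in the definition of $\sF[\tau_0]$, and that the contour rearrangement produces no stray boundary contributions. This is exactly where the highest weight property of $\tau_0$ is essential, via Corollary~\ref{cor: integration by parts for hwv}, analogously to its role in the proof of Proposition~\ref{prop: closed integration surface}. Once the decoupling of the internal and external contours is achieved, the remaining analytic work is the dominated convergence argument indicated above.
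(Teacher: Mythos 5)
Your proposal is correct and follows essentially the same route as the paper: first the $l=0$ case, where the highest weight property of $\tau_0$ (via Proposition~\ref{prop: closed integration surface}) lets one rearrange the cluster's contours into closed, anchor-free hypercube-type surfaces that localize and scale out the factor $|x_k-x_j|^{\Delta}$, then a recursion in $l$ that adds nested loops around the whole cluster as in Lemma~\ref{lem: asymptotics for tensor basis in two consequtive}, and finally dominated convergence. The only small inaccuracy is attributing the rearrangement to Corollary~\ref{cor: integration by parts for hwv}; the closedness actually comes from the explicit contour manipulations behind Proposition~\ref{prop: closed integration surface} rather than from the Stokes formula, but this does not affect the argument.
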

We have preferred to formulate the above proposition for a vector $v$ of 
specific form. The reason is that the limit function's dependence on the ratios 
$\eta_j,\ldots,\eta_k$ depends on the $d$-dimensional irreducible 
subrepresentation whose highest weight vector is $\tau_0$.
Note, however, that the exponent $\Delta = \Delta_d^{d_j , \ldots, d_k}$  in the
asymptotics is the same for all $d$-dimensional subrepresentations
of $\Wd_{d_k} \tens \cdots \tens \Wd_{d_j}$,
and therefore a similar limit exists also more generally, as stated in the
following.
\begin{cor}
Suppose that $v \in \bigotimes_{i=1}^n \Wd_{d_i}$ belongs
to the subrepresentation
\[ \Big(\bigotimes_{i=k+1}^n \Wd_{d_i}\Big) \tens (m_d \Wd_{d}) \tens \Big(\bigotimes_{i=1}^{j-1} \Wd_{d_i}\Big) . \]
Then the expression
$ {\sF^{(x_0)}[v](x_1 , \ldots, x_n)} \times {|x_k - x_j|^{-\Delta}} $,
with $\Delta = \Delta_d^{d_j , \ldots, d_k}$, 
has a limit~\eqref{eq: limit at same rate}.
\end{cor}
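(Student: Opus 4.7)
The plan is to reduce the corollary to Proposition~\ref{prop: general asymptotics with subrepresentations} by linearity of the correspondence $v \mapsto \sF^{(x_0)}[v]$. First I would decompose $m_d \, \Wd_d \isom \bigoplus_{s=1}^{m_d} \Wd_d^{(s)}$ into $m_d$ irreducible summands sitting inside $\Wd_{d_k} \tens \cdots \tens \Wd_{d_j}$, and choose for each copy $\Wd_d^{(s)}$ a highest weight vector $\Tbas_0^{(s)}$. The vectors $\Tbas_l^{(s)} := F^l.\Tbas_0^{(s)}$ with $0 \leq l < d$ and $1 \leq s \leq m_d$ then form a basis of $m_d \, \Wd_d$. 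This decomposition relies on the semisimplicity of finite-dimensional tensor product representations of $\Uqsltwo$ at the generic $q$ considered in the paper, which is implicit in the very statement of the corollary.

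Next, I would expand $v$ in the tensor product basis obtained by pairing the $\Tbas_l^{(s)}$ with the standard basis vectors $\Wbas_{l_i}$ of the outer tensorands $\Wd_{d_i}$, $i < j$ and $i > k$. By linearity of the map $v \mapsto \sF^{(x_0)}[v]$, the function $\sF^{(x_0)}[v]$ then becomes a finite linear combination of functions of exactly the form
\[
\sF^{(x_0)} \big[ \Wbas_{l_n}\tens\cdots\tens\Wbas_{l_{k+1}} \tens F^l.\Tbas_0^{(s)} \tens \Wbas_{l_{j-1}}\tens\cdots\tens\Wbas_{l_1} \big]
\]
to which Proposition~\ref{prop: general asymptotics with subrepresentations} applies.

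The last step is to apply Proposition~\ref{prop: general asymptotics with subrepresentations} termwise and sum the results. The crucial observation is that the exponent $\Delta = \Delta_d^{d_j, \ldots, d_k}$ in the asymptotics depends only on the irreducible type $d$ and on the fixed dimensions $d_j, \ldots, d_k$, and in particular not on the choice of copy $s$, of index $l$, or of the outer labels $l_i$. Hence multiplication by $|x_k - x_j|^{-\Delta}$ and the limit~\eqref{eq: limit at same rate} commute with the finite linear combination, and each term produces a finite limit of the form furnished by Proposition~\ref{prop: general asymptotics with subrepresentations}. The desired limit is the sum of these finitely many contributions, with the dependence on $\eta_j, \ldots, \eta_k$ encoded through the functions $\sF[\Tbas_0^{(s)}](\eta_j, \ldots, \eta_k)$.

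I do not anticipate any serious obstacle here: once the uniformity of the exponent $\Delta$ across the decomposition into irreducibles is noted, the argument is essentially pure linearity. The one mild point to be careful about is the aforementioned semisimplicity that allows one to split $m_d \, \Wd_d$ into irreducible copies and to choose their highest weight vectors, but this is a standard feature of $\Uqsltwo$ at non-roots of unity and has been used implicitly throughout the article.
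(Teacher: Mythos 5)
Your argument is correct and coincides with the paper's intended reasoning: the paper leaves this corollary without an explicit proof, having already noted just before the proposition that any vector of $m_d\Wd_d$ is a linear combination of vectors $F^l.\tau_0$ for highest weight vectors $\tau_0$ of the $d$-dimensional summands, and just before the corollary that the exponent $\Delta_d^{d_j,\ldots,d_k}$ is the same for every such summand. Your write-up simply makes this linearity argument explicit, so it matches the paper's approach.
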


\begin{proof}[Proof of Proposition~\ref{prop: general asymptotics with subrepresentations}]
The proof follows a strategy parallel to the simpler asymptotics properties
shown before.
The crucial step is to rearrange the integrations to a form where
there are deformed hypercube type contours between those variables
which tend to a common limit, and all other contours are loops
which either encircle one other variable, or encircle all the points
with a common limit together. Once rearranged this way,
dominated convergence theorem may be applied to complete the proof.
The rearrangement itself is done in two steps, analogous to the two lemmas
in Section~\ref{sub: asymptotics with subrepresentations}.
We only sketch the proofs, as the omitted details are reasonably
straightforward modifications of the case already considered.

Analogously to Lemma~\ref{lem: asymptotics for hwv in two consequtive},
we first handle the case $l=0$ in which $\tau_0$ is a highest weight vector.
We use the fact that the highest weight vector $\tau_0$ corresponds to a closed
integration contour. This is expressed precisely in 
Proposition~\ref{prop: closed integration surface},
which allows us to rearrange the integrals in
$\sF^{(x_0)}[\tau_0](x_j, \ldots, x_k)$ so that no 
integration contour starts from $x_0$:
\[ \Big( \sF^{(x_0)}[\tau_0] \Big) (x_j, \ldots, x_k)
    = \sum c_{m_{j+1},\ldots,m_{k}} \times 
    \CubeInt^{(x_0)}_{0,m_{j+1},\ldots,m_{k}}(x_j, \ldots, x_k) . \]
The sum $m_{j+1} + \cdots + m_{k}$ of the indices
is fixed, and it equals $\half (\sum_{i=j}^k d_i - k + j - d)$.
For the vector
\[ v = \Wbas_{l_n} \tens \cdots \tens \Wbas_{l_{k+1}} \tens \tau_0 \tens \Wbas_{l_{j-1}} \tens \cdots \tens \Wbas_{l_1} , \]
by the same rearrangement of integrals, we obtain the expression
\[ \Big( \sF^{(x_0)}[v] \Big) (x_1, \ldots, x_n)
    = \sum c_{m_{j+1},\ldots,m_{k}}
        \times \AsyInt^{(x_0)}_{l_1,\ldots,l_{j-1};0,\set{m_{j+1},\ldots,m_{k}};l_{k+1},\ldots,l_n}(x_1, \ldots, x_n) , \]
where $\AsyInt^{(x_0)}_{\cdots;0,\set{m_{j+1},\ldots,m_{k}};\cdots}$ 
are generalizations of the mixed integrals $\AsyInt^{(x_0)}_{\cdots;0,m;\cdots}$ 
defined in Section~\ref{sub: asymptotic integral functions}:
the integration contours to variables $x_j,\ldots,x_k$ are as in 
$\CubeInt^{(x_0)}_{0,m_{j+1},\ldots,m_{k}}$,
and all other $x_i$ are encircled by $l_i$ non-intersecting nested loops
based at the anchor point $x_0$.

The next step is to consider the case of general $l$, in a manner
analogous to Lemma~\ref{lem: asymptotics for tensor basis in two consequtive}.
By comparison with the $(k-j)$-fold coproduct formula for the quantum group 
generator $F$ given in Lemma~\ref{lem: multiple coproducts},
one shows recursively in $l$ that for vectors
\[ v = \Wbas_{l_n} \tens \cdots \tens \Wbas_{l_{k+1}} \tens F^l .\tau_0 \tens \Wbas_{l_{j-1}} \tens \cdots \tens \Wbas_{l_1} , \]
the integrals can be rearranged to
\[ \Big( \sF^{(x_0)}[v] \Big) (x_1, \ldots, x_n)
    = \sum c_{m_{j+1},\ldots,m_{k}}
        \times\AsyInt^{(x_0)}_{l_1,\ldots,l_{j-1};l,\set{m_{j+1},\ldots,m_{k}};l_{k+1},\ldots,l_n}(x_1, \ldots, x_n) , \]
where $\AsyInt^{(x_0)}_{\cdots;l,\set{m_{j+1},\ldots,m_{k}};\cdots}$ 
are generalizations of the mixed integrals $\AsyInt^{(x_0)}_{\cdots;l,m;\cdots}$. 
Compared to the case $l=0$, the new feature is that the variables 
$x_j,\ldots,x_k$ together with all integration contours connected
to them are encircled by $l$ non-intersecting nested loops based at 
the anchor point $x_0$.

To reach the conclusion, we need to perform the 
limit~\eqref{eq: limit at same rate} of 
$\AsyInt^{(x_0)}_{l_1,\ldots,l_{j-1};l,\set{m_{j+1},\ldots,m_{k}};l_{k+1},\ldots,l_n}$.
Once we divide by $|x_k - x_j|^{\Delta}$, dominated convergence theorem can be 
applied to the integration over all variables whose contour is a loop, since these 
contours remain bounded away from the points $x_j , \ldots , x_k$ and any
hypercube type integration contours between them.
The loop type integration contours are the same as for
$\sF^{(x_0)}[\hat{v}](x_1 , \ldots , x_{j-1} , \xi , x_{k+1}, \ldots, x_n)$.
The integral over the hypercube type contour divided by $|x_k - x_j|^{\Delta}$
tends to the integrand of $\sF^{(x_0)}[\hat{v}]$
multiplied by $\sF[\tau_0](\eta_j, \ldots, \eta_k)$.
The asserted result follows.
\end{proof}

\subsection{\label{sub: role of infinity}Moving one point to infinity}

In the M\"obius covariant case, we will now consider what happens to the function
\[ \sF[v](x_1, \ldots , x_n) \]
as $x_n \to + \infty$. For this, we will need to be able to
move from the trivial subrepresentation of $\bigotimes_{j=1}^n \Wd_{d_j}$
to the sum of copies of $d_n$-dimensional irreducible subrepresentations
of $\bigotimes_{j=1}^{n-1} \Wd_{d_j}$.

Symmetrically, we consider what happens to the above function as 
$x_1 \to -\infty$, in which case we will need to be able to
move from the trivial subrepresentation of $\bigotimes_{j=1}^n \Wd_{d_j}$
to the sum of copies of $d_1$-dimensional irreducible subrepresentations
of $\bigotimes_{j=2}^{n} \Wd_{d_j}$. The following lemma provides the
needed mappings in the two cases.

\begin{lem}\label{lem: isomorphism of hwv spaces}
Let $\HWsp_1 \subset \bigotimes_{j=1}^n \Wd_{d_j}$ denote the trivial subrepresentation
\[ \HWsp_1 = 
\Big\{v \in \bigotimes_{j=1}^n \Wd_{d_j} \; \Big| \; E.v=0 , \; K.v=v \Big\} . \]
\begin{itemize}
\item[(a)]
Any vector $v \in \HWsp_1$ can be written uniquely in the form
\begin{align*}
v = \; & \sum_{l_n=0}^{d_n-1} A^+_{l_n} \times 
\Wbas_{l_n} \tens (F^{d_n-1-l_n} . \tau_0^+),
\qquad \text{ with } A^+_{l_n} = (-1)^{d_n-1-l_n} \, q^{(l_n+1)(d_n-1-l_n)},
\end{align*}
where $\tau_0^+ \in \bigotimes_{j=1}^{n-1} \Wd_{d_j}$
satisfies $E.\tau_0^+ = 0$ and $K.\tau_0^+ = q^{d_n-1} \,\tau_0^+$. 
The mapping $v \mapsto \tau_0^+ = R_+(v)$ defines a linear
isomorphism $R_+ \colon \HWsp_1 \to \HWsp_{d_n}^+$ to the space
\begin{align*}
\HWsp_{d_n}^+ = 
\Big\{\tilde{v} \in \bigotimes_{j=1}^{n-1} \Wd_{d_j} \; \Big| \;
    E.\tilde{v}=0 , \; K.\tilde{v} = q^{d_n - 1} \,\tilde{v} \Big\}
\end{align*}
of highest weight vectors of irreducible subrepresentations of dimension $d_n$ 
in
$\bigotimes_{j=1}^{n-1} \Wd_{d_j}$.
\item[(b)]
Any vector $v \in \HWsp_1$ can be written uniquely in the form
\begin{align*}
v = \; & \sum_{l_1=0}^{d_1-1} A^-_{l_1} \times 
(F^{d_1-1-l_1} . \tau_0^- ) \tens \Wbas_{l_1},
\qquad \text{ with } A^-_{l_1} = (-1)^{d_1-1-l_1} \,q^{(l_1-1)(d_1-1-l_1)},
\end{align*}
where $\tau_0^- \in \bigotimes_{j=2}^{n} \Wd_{d_j}$
satisfies $E.\tau_0^- = 0$ and $K.\tau_0^- = q^{d_1-1} \,\tau_0^-$. 
The mapping $v \mapsto \tau_0^- = R_-(v)$ defines a linear isomorphism 
$R_- \colon \HWsp_1 \to \HWsp_{d_1}^-$ to the space
\begin{align*}
\HWsp_{d_1}^- = \Big\{\tilde{v} \in \bigotimes_{j=2}^{n} \Wd_{d_j} \; \Big| \;
    E.\tilde{v}=0 , \; K.\tilde{v} = q^{d_1 - 1} \,\tilde{v} \Big\}
\end{align*}
of highest weight vectors of irreducible subrepresentations of dimension $d_1$ 
in
$\bigotimes_{j=2}^{n} \Wd_{d_j}$.
\end{itemize}
\end{lem}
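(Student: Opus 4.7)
The plan is to establish parts (a) and (b) by direct verification combined with an application of the Clebsch-Gordan type result, Lemma \ref{lem: tensor product representations of quantum sl2}. I will focus on part (a); part (b) will be entirely symmetric with the appropriate adjustment of coefficients.

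\textbf{Existence direction for (a).} Given any $\tau_0^+ \in \HWsp_{d_n}^+$, the vectors $\tau_0^+, F.\tau_0^+, \ldots, F^{d_n-1}.\tau_0^+$ span an irreducible subrepresentation of $\bigotimes_{j=1}^{n-1} \Wd_{d_j}$ of dimension $d_n$, isomorphic to $\Wd_{d_n}$ via $F^k.\tau_0^+ \leftrightarrow \Wbas_k^{(d_n)}$ (last assertion of Lemma \ref{lem: representations of quantum sl2}). Under this identification, the proposed vector $v(\tau_0^+) = \sum_{l_n} A^+_{l_n}\,\bigl(\Wbas_{l_n} \tens F^{d_n-1-l_n}.\tau_0^+\bigr)$ lies in $\Wd_{d_n}\tens\Wd_{d_n}$, and a direct comparison with the formula \eqref{eq: tensor product hwv coefficients} taken with $d_1 = d_2 = d_n$, $m = d_n - 1$, $d = 1$, shows that $A^+_{l_n}$ is proportional (by a single constant independent of $l_n$) to $T^{d_n-1-l_n, l_n}_{0; d_n-1}(d_n, d_n)$. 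Hence $v(\tau_0^+)$ is proportional to the highest weight vector $\Tbas_0^{(1; d_n, d_n)}$ of the trivial subrepresentation in $\Wd_{d_n}\tens\Wd_{d_n}$, and therefore $E.v(\tau_0^+) = 0$ and $K.v(\tau_0^+) = v(\tau_0^+)$ by Lemma \ref{lem: tensor product representations of quantum sl2}. Alternatively, these two identities can be checked directly using the coproduct formulas of Lemma \ref{lem: multiple coproducts}; the $E$-annihilation then reduces to the two-term recursion $A^+_{l_n+1}\,q^{2l_n - d_n + 3} + A^+_{l_n} = 0$, which is immediate from the explicit form of $A^+_{l_n}$.

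\textbf{Injectivity and surjectivity.} Linearity and injectivity of $\tau_0^+ \mapsto v(\tau_0^+)$ are clear: the $l_n = 0$ component reads $A^+_0\,\bigl(\Wbas_0 \tens F^{d_n-1}.\tau_0^+\bigr)$ with $A^+_0 \neq 0$, and $F^{d_n-1}.\tau_0^+$ is nonzero whenever $\tau_0^+ \neq 0$ by the irreducibility argument above. For surjectivity it suffices to check that $\dim\HWsp_1 = \dim\HWsp_{d_n}^+$. Iterating Lemma \ref{lem: tensor product representations of quantum sl2}, decompose $\bigotimes_{j=1}^{n-1}\Wd_{d_j} \isom \bigoplus_d m_d\,\Wd_d$, so that $\dim\HWsp_{d_n}^+ = m_{d_n}$. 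Tensoring with $\Wd_{d_n}$ and applying \eqref{eq: decomposition of tensor product} once more to each summand, one sees that $\Wd_{d_n}\tens\Wd_d$ contains a copy of $\Wd_1$ exactly when $d = d_n$, and then precisely one such copy. Hence $\dim \HWsp_1 = m_{d_n} = \dim \HWsp_{d_n}^+$, so the map $\tau_0^+ \mapsto v(\tau_0^+)$ is a linear isomorphism onto $\HWsp_1$, which yields the existence and uniqueness of the claimed decomposition for every $v \in \HWsp_1$, and defines the inverse $R_+$.

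\textbf{Part (b).} The argument is entirely symmetric, interchanging the roles of the leftmost and rightmost tensorands. The only asymmetry that might cause concern is the coproduct $\Hcp(E) = E \tens K + 1 \tens E$, which treats the two sides differently and accounts for the slightly different $q$-power in $A^-_{l_1}$ compared with $A^+_{l_n}$. A direct computation, or equivalently a comparison with \eqref{eq: tensor product hwv coefficients} taken with $d_1 = d_2 = d_1$, $m = d_1 - 1$, $d = 1$, and with the roles of $l_1, l_2$ exchanged, shows that $A^-_{l_1}$ is again proportional to $T_{0; d_1 - 1}^{l_1, d_1-1-l_1}(d_1, d_1)$ by a constant independent of $l_1$. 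The rest of the proof (injectivity, surjectivity via dimension count) carries over verbatim. The only genuine obstacle is thus keeping track of the correct $q$-shift in $A^-_{l_1}$, which is handled by an analogous two-term recursion to the one in part (a).
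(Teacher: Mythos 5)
Your proof is correct, but it runs in the opposite direction from the paper's. The paper starts from an arbitrary $v\in \HWsp_1$, writes $v=\sum_{l_n}\Wbas_{l_n}\tens u_{l_n}$ with uniquely determined components $u_{l_n}$, and extracts from the three constraints $K.v=v$, $E.v=0$ and $F.v=0$ (the last holding because $v$ spans a trivial subrepresentation) a two-term recursion $u_{l_n-1}=-q^{1-d_n+2l_n}F.u_{l_n}$ that forces $u_{l_n}=A^+_{l_n}F^{d_n-1-l_n}.\tau_0^+$ with $\tau_0^+=u_{d_n-1}\in\HWsp^+_{d_n}$; existence, uniqueness, and bijectivity of $R_+$ all drop out of this single computation, and the formula for $A^+_{l_n}$ is derived rather than verified. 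You instead construct the inverse map $\tau_0^+\mapsto v(\tau_0^+)$, check that it lands in $\HWsp_1$ by matching $A^+_{l_n}$ (correctly, up to the constant $[d_n-1]!^{-2}(q-q^{-1})^{1-d_n}$) against the Clebsch--Gordan coefficients $T^{d_n-1-l_n,l_n}_{0;d_n-1}(d_n,d_n)$ of Lemma \ref{lem: tensor product representations of quantum sl2}, and then obtain bijectivity from injectivity plus the count $\dim \HWsp_1=m_{d_n}=\dim\HWsp^+_{d_n}$ via the decomposition \eqref{eq: decomposition of tensor product} and semisimplicity. Both routes are sound; yours reuses the already-computed coefficients \eqref{eq: tensor product hwv coefficients} (a fact the paper itself exploits later, in the proof of Proposition \ref{prop: the role of infinity in Mobius coveriant case}) and makes the identification of $v(\tau_0^+)$ with $\tau_0^{(1;d_n,d_n)}$ explicit, at the price of an extra multiplicity argument for surjectivity, whereas the paper's derivation is more self-contained and explains where the exponent in $A^{\pm}_{l}$ comes from. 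Your coefficient checks for part (b), including the $q^{-(d_1-1)}$ shift relative to part (a), are also consistent with the stated formulas.
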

\begin{proof}
The two parts are similar, so we only give the details for part (a).
Any vector $v$ in the tensor product $\bigotimes_{j=1}^{n} \Wd_{d_j}$ 
can be written in the form
$v = \sum_{l_n} \Wbas_{l_n} \tens u_{l_n}$ with
unique vectors $u_{l_n} \in \bigotimes_{j=1}^{n-1} \Wd_{d_j}$.
From the eigenvalue property 
$K.\Wbas_{l_n} = q^{d_n - 1 - 2 l_n}\,\Wbas_{l_n}$ and
the coproduct formula $\Hcp(K)=K \tens K$, it follows that we have
\[ K.v = \sum_{l_n} q^{d_n - 1 - 2 l_n} \times \Wbas_{l_n} \tens (K.u_{l_n}) .\]
The assumption $K.v = v$ thus implies that we have $K.u_{l_n} = q^{1- d_n + 2 l_n} \,u_{l_n}$.
Similarly, from the property 
$E.\Wbas_{l_n} = \qnum{l_n} \qnum{d_n - l_n} \Wbas_{l_n-1}$, and
the coproduct formula $\Hcp(E) = E \tens K + 1 \tens E$, and the already 
established $K$-eigenvalue of $u_{l_n}$, it follows that we have
\[ E.v = \sum_{l_n} \Wbas_{l_n} \tens \Big( E . u_{l_n}
    + \qnum{l_n+1}\qnum{d_n-l_n-1} q^{3 - d_n + 2 l_n} \,u_{l_n+1} \Big) .\]
The assumption $E.v = 0$ thus implies
$E.u_{l_n} = - \qnum{l_n+1}\qnum{d_n-l_n-1} q^{3 - d_n + 2 l_n} \,u_{l_n+1}$.
In particular, $u_{d_n-1}$ is a highest weight vector of an irreducible
subrepresentation of dimension $d_n$.
We denote this vector by $u_{d_n -1} = \tau_0^+$.
We furthermore claim that this vector determines the other $u_{l_n}$
uniquely as $u_{l_n} = A^+_{l_n} \, F^{d_n-1-l_n}.\tau_0^+$.
To see this, use the coproduct
formula $\Hcp(F) = F \tens 1 + K^{-1}\tens F$, to get
\[ F.v = \sum_{l_n} \Wbas_{l_n} \tens \Big( u_{l_n-1}
    + q^{1 - d_n + 2 l_n} F.u_{l_n} \Big) .\]
Note that this expression for $F.v$ must vanish, since $v$ is in the trivial
subrepresentation. Therefore the other $u_{l_n}$ are obtained recursively
from $u_{d_n -1} = \tau_0^+$,
by $u_{l_n - 1} = - q^{1 - d_n + 2 l_n} \,F.u_{l_n}$. 
The solution of this recursion is the asserted formula 
$u_{l_n} = A^+_{l_n} \, F^{d_n-1-l_n}.\tau_0^+$, and since this formula indeed 
satisfies $E.v=0$ and $K.v=v$, we get that the linear
mapping $R_+$ defined by $R_+(v) = u_{d_n -1}$ is bijective.
\end{proof}

We now show that the behavior of M\"obius covariant functions as $x_n \to +\infty$
(resp. $x_1 \to - \infty$) can be expressed in terms of the 
identification $R_+$ (resp. $R_-$) defined in 
Lemma~\ref{lem: isomorphism of hwv spaces}.
\begin{prop}\label{prop: the role of infinity in Mobius coveriant case}
Let $v \in \HWsp_1 \subset \bigotimes_{j=1}^n \Wd_{d_j}$, 
and use the mappings defined in Lemma~\ref{lem: isomorphism of hwv spaces} to 
construct
$
    R_+(v) \in \HWsp^+_{d_n} \subset \bigotimes_{j=1}^{n-1} \Wd_{d_j}$ and
$
    R_-(v) \in \HWsp^-_{d_1} \subset \bigotimes_{j=2}^{n} \Wd_{d_j}$.
Then we have
\[ \lim_{y \to + \infty} \Big( y^{2h_{1,d_n}} \times \sF[v](x_1,\ldots,x_{n-1},y) \Big)
    = C_+ \times \sF[R_+(v)](x_1,\ldots,x_{n-1}) , \]
with $C_+ = (q-q^{-1})^{d_n-1} \qfact{d_n - 1}^2 \times B^{d_n,d_n}_1$ and
\[ \lim_{y \to - \infty} \Big( |y|^{2h_{1,d_1}} \times \sF[v](y,x_2,\ldots,x_{n}) \Big)
    = C_- \times \sF[R_-(v)](x_2,\ldots,x_{n}) .\]
with $C_- = (q^{-2}-1)^{d_1-1} \qfact{d_1-1}^2 \times B^{d_1,d_1}_1$.
\end{prop}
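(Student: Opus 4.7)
The plan is to reduce both statements to Proposition~\ref{prop: general asymptotics with subrepresentations}, by exploiting scaling covariance (a special case of M\"obius covariance, available since $K.v=v$) to convert each ``one-point-to-infinity'' limit into a limit where several interior variables collapse to a common finite point at the same rate. The cases $x_n \to +\infty$ and $x_1 \to -\infty$ are entirely parallel, using parts (a) and (b) of Lemma~\ref{lem: isomorphism of hwv spaces} respectively; I describe the former in detail.

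Apply scaling with $\lambda = 1/s$: since $K.v=v$, the homogeneity exponent equals $\Delta^{d_1,\ldots,d_n}_1 = -\sum_i h_{1,d_i}$, giving
\begin{align*}
s^{2h_{1,d_n}} \sF[v](x_1,\ldots,x_{n-1},s) = s^{h_{1,d_n} - \sum_{i<n} h_{1,d_i}} \sF[v](x_1/s,\ldots,x_{n-1}/s,1) .
\end{align*}
As $s \to +\infty$, the rescaled arguments $\tilde x_i = x_i/s$ all approach $0$ at the same rate, with fixed ratios $(\tilde x_i - \tilde x_1)/(\tilde x_{n-1}-\tilde x_1) = (x_i-x_1)/(x_{n-1}-x_1)$. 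This places us precisely in the hypothesis of Proposition~\ref{prop: general asymptotics with subrepresentations} with $j=1$ and $k=n-1$. The decomposition from Lemma~\ref{lem: isomorphism of hwv spaces}(a) writes $v = \sum_{l_n} A^+_{l_n}\,(\Wbas_{l_n} \tens F^{d_n-1-l_n}.R_+(v))$, presenting $v$ as a linear combination of vectors in the precise form required by that proposition (with $\tau_0 = R_+(v)$ of $K$-eigenvalue $q^{d_n-1}$). Applying the proposition term by term and summing, the leading power $|\tilde x_{n-1}-\tilde x_1|^{\Delta_{d_n}^{d_1,\ldots,d_{n-1}}}$ produces exactly the factor $s^{-\Delta_{d_n}^{d_1,\ldots,d_{n-1}}}$ needed to cancel the $s$-dependent prefactor, and the translation plus scaling covariance of $\sF[R_+(v)]$ (from Theorem~\ref{thm: SCCG correspondence hwv}) absorbs the dependence on the ratios $\eta_i$, leaving $\sF[R_+(v)](x_1,\ldots,x_{n-1})$ multiplied by an $s$-independent constant.

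That constant equals $\sum_{l_n} A^+_{l_n}\sF[\Wbas_{l_n}\tens\Wbas_{d_n-1-l_n}](0,1)$. The key observation is that $u_0 := \sum_{l_n} A^+_{l_n}\,\Wbas_{l_n} \tens \Wbas_{d_n-1-l_n}$ lies in the trivial subrepresentation of $\Wd_{d_n}\tens \Wd_{d_n}$, so matching its coefficients against those of $\Tbas_0^{(1;d_n,d_n)}$ from \eqref{eq: tensor product hwv coefficients} yields by a direct $q$-calculation that $u_0 = \qfact{d_n-1}^2 (q-q^{-1})^{d_n-1}\,\Tbas_0^{(1;d_n,d_n)}$. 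Formula~\eqref{eq: explicit two point hwv function}, together with Lemma~\ref{lem: simplex in terms of hypercube} and the explicit evaluation $\SimplexInt^{(x_0)}_{0,d_n-1}(0,1) = B^{d_n,d_n}_1$, then produces the asserted value of $C_+$. The argument for $x_1 \to -\infty$ runs identically, using instead $j=2$, $k=n$ in Proposition~\ref{prop: general asymptotics with subrepresentations} together with the expansion $v = \sum_{l_1} A^-_{l_1}\,(F^{d_1-1-l_1}.R_-(v)\tens \Wbas_{l_1})$ from Lemma~\ref{lem: isomorphism of hwv spaces}(b); the different shape of $A^-_{l_1}$ (with $q^{(l_1-1)(d_1-1-l_1)}$ in place of $q^{(l_n+1)(d_n-1-l_n)}$) changes the proportionality constant to $\qfact{d_1-1}^2 (q^{-2}-1)^{d_1-1}\,\Tbas_0^{(1;d_1,d_1)}$, producing the correct $C_-$. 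The main technical obstacle is precisely this careful bookkeeping of $q$-phases and signs in the two coefficient matching calculations, which is what accounts for the asymmetry between $C_+$ and $C_-$.
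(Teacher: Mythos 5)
Your proposal is correct and follows essentially the same route as the paper: expand $v$ via Lemma~\ref{lem: isomorphism of hwv spaces}, use homogeneity to rewrite the limit $s\to+\infty$ as a collapse of the first $n-1$ rescaled variables to $0$, apply Proposition~\ref{prop: general asymptotics with subrepresentations} term by term, and identify $\sum_{l_n}A^+_{l_n}\,\Wbas_{l_n}\tens\Wbas_{d_n-1-l_n}$ with $(q-q^{-1})^{d_n-1}\qfact{d_n-1}^2\,\Tbas_0^{(1;d_n,d_n)}$ to evaluate the constant. The only cosmetic difference is that you evaluate $\sF^{(x_0)}[\Tbas_0^{(1;d_n,d_n)}](0,1)=B_1^{d_n,d_n}$ via Equation~\eqref{eq: explicit two point hwv function} and Lemma~\ref{lem: simplex in terms of hypercube} rather than by citing Proposition~\ref{prop: asymptotics with subrepresentations}, which amounts to the same computation.
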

\begin{proof}
Again, the two cases are similar, so we only give the details about the first.
We write the vector $v$ in the form given by 
Lemma~\ref{lem: isomorphism of hwv spaces}(a),
as $v = \sum_{l_n} A^+_{l_n} \times \Wbas_{l_n} \tens F^{d_n-1-l_n}.\tau_0^+ $,
where $\tau_0^+=R_+(v)$ is a highest weight vector of a $d_n$-dimensional subrepresentation
of $\bigotimes_{j=1}^{n-1} \Wd_{d_j}$.
We then write, using homogeneity,
\[ y^{2h_{1,d_n}} \times \sF[v](x_1,\ldots,x_{n-1},y) 
= y^{h_{1,d_n}-\sum_{j=1}^{n-1}h_{1,d_j}}
    \times \sF[v] \big(\frac{x_{1}}{y},\ldots,\frac{x_{n-1}}{y},1 \big) . \]
We apply Proposition~\ref{prop: general asymptotics with subrepresentations}
to this. More precisely, for all terms 
$\Wbas_{l_n} \tens F^{d_n-1-l_n}.\tau_0^+$ we have
\begin{align*}
& \lim_{y \to +\infty} \Big( y^{2h_{1,d_n}} \times
    \sF^{(x_0)}\big[\Wbas_{l_n} \tens F^{d_n-1-l_n}.\tau_0^+\big](x_1,\ldots,x_{n-1},y) \Big) \\
= \; & \lim_{y \to +\infty} \Big( y^{h_{1,d_n}-\sum_{j=1}^{n-1}h_{1,d_j}}
    \times \sF^{(x_0)}\big[\Wbas_{l_n} \tens F^{d_n-1-l_n}.\tau_0^+\big] 
    \big(\frac{x_{1}}{y},\ldots,\frac{x_{n-1}}{y},1 \big) \Big) \\
= \; & \sF[\tau_0^+](x_1,\ldots,x_{n-1}) \times
    \sF^{(x_0)}[\Wbas_{l_n}\tens\Wbas_{d_n-1-l_n}] (0,1) .
\end{align*}
Now the vector $v$ reads 
$v = \sum_{l_n} A^+_{l_n} \times \Wbas_{l_n} \tens F^{d_n-1-l_n}.\tau_0^+$,
so by linearity we have
\begin{align*}
\lim_{y \to +\infty} \Big( y^{2h_{1,d_n}} \times
    \sF[v](x_1,\ldots,x_{n-1},y) \Big) 
= \; & \sF[\tau_0^+](x_1,\ldots,x_{n-1}) \times \sum_{l_n} A^+_{l_n} \times
    \sF^{(x_0)}\big[ \Wbas_{l_n}\tens\Wbas_{d_n-1-l_n}\big] (0,1) \\
= \; & \big( q-q^{-1} \big)^{d_n-1} \qfact{d_n-1}^2 \times B^{d_n,d_n}_1 \times \sF[\tau_0^+](x_1,\ldots,x_{n-1}) ,
\end{align*}
where in the last step we used the facts that 
$\sum_{l_n} A^+_{l_n} \times \Wbas_{l_n}\tens\Wbas_{d_n-1-l_n}
    = \big( q-q^{-1} \big)^{d_n-1} \qfact{d_n-1}^2 \times \tau^{(1;d_n,d_n)}_0$
in the notation of Equation~\eqref{eq: tensor product hwv}, 
and $\sF^{(x_0)}[\tau^{(1;d_n,d_n)}_0](0,1) = B^{d_n,d_n}_1$,
by Proposition~\ref{prop: asymptotics with subrepresentations}.
\end{proof}

\begin{rem}\label{rem: Mobius covariance by adding variables}
\emph{
Proposition~\ref{prop: the role of infinity in Mobius coveriant case} can be 
seen in two ways: it allows us to trade M\"obius covariance to dependence on one 
variable less. Directly by the statement, any M\"obius covariant function of $n$ 
variables may be viewed as a function of $n-1$ variables --- we can either get 
rid of the variable $x_1$ or $x_n$. Conversely, if we assume that
$v \in \bigotimes_{j=1}^n \Wd_{d_j}$ lies in a $d$-dimensional
irreducible subrepresentation, then we can view it as a M\"obius covariant
function of $n+1$ variables --- either associated to the vector
$R_+^{-1}(v) \in \Wd_d \tens \big( \bigotimes_{j=1}^n \Wd_{d_j} \big)$
or to the vector
$R_-^{-1}(v) \in \big( \bigotimes_{j=1}^n \Wd_{d_j} \big) \tens \Wd_d$,
and with the additional variable respectively to the right or to the left of all
other variables.
}
\end{rem}

\subsection{\label{sub: cyclic permutations}Cyclic permutations of variables}

Remark~\ref{rem: Mobius covariance by adding variables}
suggests yet another
interesting way to interpret the operations
$R_+$ and $R_-$ in Lemma~\ref{lem: isomorphism of hwv spaces}.
Namely, in view of 
Proposition~\ref{prop: the role of infinity in Mobius coveriant case},
the composed map $S = R_-^{-1} \circ R_+$ gives rise to cyclic
permutations of variables in our functions.

Consider the M\"obius covariant case, and denote
by $H_1(V)$ the 
maximal trivial subrepresentation of a representation $V$.
Let $v$ be in the trivial subrepresentation
$H_1(\Wd_{d_n} \tens \Wd_{d_{n-1}} \tens \cdots \tens \Wd_{d_1})$.
Then the vector $S(v)$ is in the trivial subrepresentation
$H_1(\Wd_{d_{n-1}} \tens \cdots \tens \Wd_{d_1} \tens \Wd_{d_n})$,
and the function $\sF[S(v)]$ could be thought of
as corresponding to the original function $\sF[v]$
when the variables are ordered as
$x_n < x_1 < x_2 < \cdots < x_{n-1}$.

When the number of variables is $n$, and the operation of moving the
rightmost variable to the left of all others is repeated $n$ times,
one expects to recover the original function.
Above we have defined this operation by its action on vectors in the
trivial subrepresentation of the $n$-fold tensor product as
\[ S^{(d_n)} \;\colon\; H_1 ( \Wd_{d_n} \tens \Wd_{d_{n-1}} \tens \cdots \tens \Wd_{d_1} )
    \to H_1 ( \Wd_{d_{n-1}} \tens \cdots \tens \Wd_{d_1} \tens \Wd_{d_n} ) , \]
where we now
emphasize that the definition of the operation $S^{(d_n)} = R_-^{-1} \circ R_+$
depends on the dimension $d_n$.
The correct $n$:th iterate is thus
\[ S^{(d_1)} \circ S^{(d_2)} \circ \cdots \circ S^{(d_{n-1})} \circ S^{(d_n)} 
\; \colon \;
    H_1 \Big( \bigotimes_{j=1}^n \Wd_{d_j} \Big) \to H_1 \Big( \bigotimes_{j=1}^n \Wd_{d_j} \Big) . \]
This mapping turns out to be not exactly the identity, but rather a constant
multiple of the identity.

To see that $S^{(d_n)} \circ \cdots \circ S^{(d_1)}$ is a constant multiple
of the identity on $H_1(\bigotimes_{j=1}^n \Wd_{d_j})$,
it is convenient to characterize the vectors in $H_1(\bigotimes_{j=1}^n \Wd_{d_j})$
by their projections to different subrepresentations.
One then needs two basic commutative diagrams, which express how to
exchange the order of the projections and the operations $S^{(d_j)}$.
Let
\[ \hat{\pi}^{(\diadim)}_{j,j+1} \;\colon\; \bigotimes_{i=1}^n \Wd_{d_j}
    \to \Big( \bigotimes_{i=j+2}^n \Wd_{d_j} \Big) \tens \Wd_\diadim \tens \Big( \bigotimes_{i=1}^{j-1} \Wd_{d_j} \Big) \]
denote the projection $\Wd_{d_{j+1}} \tens \Wd_{d_j} \to \Wd_\diadim$
acting in the $j$:th and $j+1$:st tensorands, as
in Section~\ref{sub: asymptotics with subrepresentations}.
If~$j < n-1$, then the projection obviously commutes with the
operation $S^{(d_n)}$, according to the following square diagram
\begin{align}\label{eq: square diagram} 
\xymatrix{
    H_1 ( \Wd_{d_n} \tens \Wd_{d_{n-1}} \tens \cdots 
        \tens \Wd_{d_1} ) \; \; \ar[r]^{\hspace{-.9cm}\hat{\pi}^{(\diadim)}_{j,j+1}}\ar[d]_{S^{(d_n)}}
  & \; \; H_1 ( \Wd_{d_n} \tens \Wd_{d_{n-1}} \tens \cdots \tens \Wd_{\diadim} \tens
        \cdots \tens \Wd_{d_1} )\ar[d]^{S^{(d_n)}} \\
    H_1 ( \Wd_{d_{n-1}} \tens \cdots 
        \tens \Wd_{d_1} \tens \Wd_{d_{n}} ) \; \; \ar[r]_{\hspace{-.7cm}\hat{\pi}^{(\diadim)}_{j+1,j+2}}
  & \; \; H_1 ( \Wd_{d_{n-1}} \tens \cdots \tens \Wd_{\diadim} \tens
        \cdots \tens \Wd_{d_1} \tens \Wd_{d_n} ) .
} 
\end{align}
If $j=n-1$, then a diagram of the above type does not make sense --- instead, we
apply two $S$-operations before the projection. The following pentagon diagram
commutes up to a multiplicative constant
\begin{align}\label{eq: pentagon diagram} 
\xymatrix{
    H_1 ( \Wd_{d_n} \tens \Wd_{d_{n-1}} \tens \Wd_{d_{n-2}} \tens \cdots \tens \Wd_{d_1} ) \; \; \ar[r]^{\hspace{.9cm}\hat{\pi}^{(\diadim)}_{n-1,n}}\ar[d]_{S^{(d_n)}}
  & \; \; H_1 ( \Wd_{\diadim} \tens \Wd_{d_{n-2}} \tens \cdots \tens \Wd_{d_1} )\ar[d]^{S^{(\diadim)}} \\
    H_1 ( \Wd_{d_{n-1}} \tens \Wd_{d_{n-2}} \tens \cdots \tens \Wd_{d_1} \tens \Wd_{d_n} ) \; \; \ar[d]_{S^{(d_{n-1})}}
  & \; \; H_1 ( \Wd_{d_{n-2}} \tens \cdots \tens \Wd_{d_1} \tens \Wd_{\diadim} ) \\
    H_1 ( \Wd_{d_{n-2}} \tens \cdots \tens \Wd_{d_1} \tens \Wd_{d_n} \tens \Wd_{d_{n-1}} ) \; \; \ar[ur]_{\quad \hat{\pi}^{(\diadim)}_{1,2}} .
  &  
} 
\end{align}
Using the above square and pentagon diagrams~\eqref{eq: square diagram}
and \eqref{eq: pentagon diagram}, one can finally show that also
the larger diagram~\eqref{eq: full commutative diagram}
below commutes up to constants.
We choose a sequence of intermediate dimensions
$\diadim_2,\diadim_3,\ldots,\diadim_{n-2},\diadim_{n-1}=d_{n}$,
to specify an element of the dual of $H_1(\bigotimes_{i=1}^n \Wd_{d_i})$
through a sequence of projections.
In this diagram, the top left $H_1$ stands for $H_1(\bigotimes_{i=1}^n \Wd_{d_i})$,
and moving downwards amounts to a cyclic permutation of the tensorands
by an $S$-operation,
and moving to the right reduces the number of tensorands by a projection.
In the rightmost column, only two tensorands remain and the spaces $H_1$
stand for $H_1(\Wd_{d_n} \tens \Wd_{d_n})$ --- note that $\diadim_{n-1}=d_n$,
as is needed for $\Wd_{d_n} \tens \Wd_{\diadim_{n-1}}$ to contain
a trivial subrepresentation
(see Lemma~\ref{lem: tensor product representations of quantum sl2}).
The diagram
\begin{align}\label{eq: full commutative diagram} 
\xymatrix{
    H_1 \ar[r]^{\hat{\pi}^{(\diadim_2)}_{1,2}}\ar[d]_{S^{(d_n)}}
  & H_1 \ar[r]^{\hat{\pi}^{(\diadim_3)}_{1,2}}\ar[d]
  & \;\; \cdots \;\; \ar[r]^{\hat{\pi}^{(\diadim_{n-2})}_{1,2}}
  & H_1 \ar[r]^{\hat{\pi}^{(\diadim_{n-1})}_{1,2}}\ar[d]
  & H_1 \ar[d]^{S^{(d_n)}} \\
    H_1 \ar[r]^{\hat{\pi}^{(\diadim_2)}_{2,3}} \ar[d]_{S^{(d_{n-1})}}
  & H_1 \ar[r]^{\hat{\pi}^{(\diadim_3)}_{2,3}}\ar[d]
  & \;\; \cdots \;\; \ar[r]^{\hat{\pi}^{(\diadim_{n-2})}_{2,3}}
  & H_1 \ar[r]^{\hat{\pi}^{(\diadim_{n-1})}_{2,3}}\ar[d]
  & H_1 \ar[d]^{S^{(\diadim_{n-1})}} \\
    H_1 \ar[r]^{\hat{\pi}^{(\diadim_2)}_{3,4}} \ar[d]_{S^{(d_{n-2})}}
  & H_1 \ar[r]^{\hat{\pi}^{(\diadim_3)}_{3,4}}\ar[d]
  & \;\; \cdots \;\; \ar[r]^{\hat{\pi}^{(\diadim_{n-2})}_{3,4}}
  & H_1 \ar[d]
  & H_1 \\
    \vdots \ar[d]
  & \vdots \ar[d]
  & \;\; \vdots \;\;
  & H_1 \ar[ur]_{\hat{\pi}^{(\diadim_{n-1})}_{1,2}}
  & \\
    H_1 \ar[r]^{\hat{\pi}^{(\diadim_2)}_{n-1,n}} \ar[d]_{S^{(d_{2})}}
  & H_1 \ar[d]
  & \;\; \Ddots \;\; \ar[ur]_{\hat{\pi}^{(\diadim_{n-2})}_{1,2}}
  & \\
    H_1 \ar[d]_{S^{(d_{1})}}
  & H_1 \ar[ur]_{\hat{\pi}^{(\diadim_3)}_{1,2}}
  & & \\
    H_1 \ar[ur]_{\hat{\pi}^{(\diadim_2)}_{1,2}} ,
  & & & \\
          & & & 
} 
\end{align}
commutes up to constants, and we then deduce that
\begin{align*}
\hat{\pi}^{(\diadim_{n-1})}_{1,2} \circ \cdots \circ \hat{\pi}^{(\diadim_2)}_{1,2}
= C \times \hat{\pi}^{(\diadim_{n-1})}_{1,2} \circ \cdots \circ \hat{\pi}^{(\diadim_2)}_{1,2}
    \circ S^{(d_1)} \circ \cdots \circ S^{(d_n)} .
\end{align*}
The constant $C$ above is independent of the sequence of dimensions
$\diadim_2, \ldots, \diadim_{n-1}$. Such projections span the dual of
$H_1(\bigotimes_{i=1}^n \Wd_{d_i})$, which allows us to conclude that
\begin{align*}
S^{(d_1)} \circ \cdots \circ S^{(d_n)} = C \times \id_{H_1(\bigotimes_{i=1}^n \Wd_{d_i})} .
\end{align*}
Thus, the $S$-operations give rise to a projective action of cyclic permutations
on our M\"obius covariant functions $\sF[v] \colon \chamber_n \to \bC$.
The constant is explicitly calculated
in~\cite[Corollary~4.3]{Peltola-basis_of_solutions_of_BSA_PDEs}:
${C = \prod_{i=1}^n (-q)^{1-d_i}}$.

\section{\label{sec: conclusions}Conclusions and outlook}

We have defined the spin chain - Coulomb gas correspondence,
which associates screened Coulomb gas correlation functions to vectors
in a tensor product of representations of the quantum group $\Uqsltwo$.
Natural representation theoretical properties of the vectors have
been shown to imply properties of the corresponding functions.

The results presented here are used in \cite{JJK-SLE_boundary_visits,KP-pure_partition_functions_of_multiple_SLEs}
to explicitly solve two interesting problems about $\SLE$s. Other conformally
covariant boundary correlation functions could be treated similarly
with the techniques of the present article. Also,
in \cite{FP-monodromy_invariant_correlations},
the spin chain - Coulomb gas correspondence is applied
to the 
construction of monodromy invariant bulk correlation functions of conformal 
field theory.

The results of the present article apply in the generic, semisimple case,
in which the deformation parameter $q$ of the quantum group is not
a root of unity. If $q$ were a root of unity --- that is, $\kappa\in\bQ$
--- the representation theory of the quantum group would become non-semisimple
and the corresponding functions would have degeneracies (exceptional
linear dependencies or poles as a function of $\kappa$). Extending
a version of the spin chain~- Coulomb gas correspondence to
these degenerate cases is a natural topic of future research --- some
of the non-semisimple representation theory has been analyzed in,
e.g., \cite{BFGT-Lusztig_limit_at_root_of_unity_and_fusion}, and
examples of degeneracies of the functions have been resolved in, e.g.,
\cite{FK-solution_space_for_a_system_of_null_state_PDEs_3,
FSK-multiple_SLE_connectivity_weights_for_rectangles_hexagons_and_octagons}.

The boundary correlation functions obtained in this correspondence
have conformal weights labeled by the first row of the Kac table.
This is sufficient for those applications that served as our primary
motivation, but the correspondence could possibly be generalized by
considering another type of screening charges, and an appropriate
``two-screening quantum group'' \cite{DF-multipoint_correlation_functions,Fuchs-affine_Lie_algebras_and_quantum_groups}.

\appendix 

\section{\label{app: contour manipulations}Contour manipulations}

In this appendix, we collect the proofs of intermediate results
that involve in principle straightforward but occasionally
lengthy contour deformation and branch choice calculations.

\begin{lem*}[Lemma~\ref{lem: simplex in terms of hypercube}]
The deformed hypercube integral function 
$\CubeInt_{m_{1},\ldots,m_{n}}^{(x_{0})}
$
defined in~\eqref{eq: deformed hypercube integral}
is related to the real integral function
$\SimplexInt_{m_{1},\ldots,m_{n}}^{(x_{0})}$ defined in~\eqref{eq: real integral over product of simplices}
by
\begin{align*}
\CubeInt_{m_{1},\ldots,m_{n}}^{(x_{0})}(\boldsymbol{x})= & \left(\prod_{i=1}^{n}q^{-\binom{m_{i}}{2}}\qfact{m_{i}}\right)\times\SimplexInt_{m_{1},\ldots,m_{n}}^{(x_{0})}(\boldsymbol{x}),\qquad\text{for }\boldsymbol{x}\in\chamber_{n}^{(x_{0})}.
\end{align*}
\end{lem*}
\begin{proof}
Note that $\SimplexInt$ is obtained by integration over the set $\sR_{m_{1},\ldots,m_{n}}$
which is a product of simplices, whereas $\CubeInt$ is obtained by
integration over the set $\widetilde{\sR}_{m_{1},\ldots,m_{n}}$ which
is a product of slightly deformed hypercubes. We split each of the
$m_{i}$-dimensional hypercubes to $m_{i}!$ simplices, and thus express
$\CubeInt_{m_{1},\ldots,m_{n}}^{(x_{0})}$ as a sum of $\prod_{i=1}^{n}(m_{i}!)$
terms, each of which is a phase factor times $\SimplexInt_{m_{1},\ldots,m_{n}}^{(x_{0})}$.

The variables $w_{r}$, $r\in I^{(i)}$, are integrated over one of
the slightly deformed hypercubes. We select the deformed hypercube
integration contour so that the variables are on the real axis, except
when the distance between some of the variables becomes smaller than
a chosen $\eps>0$. In view of the integrand, proportional
to~\eqref{eq: integrand with generic phase},
the contribution from cases with some $|w_{r}-x_{i}|<\eps$ is $\OO(\eps^{1-\frac{4}{\kappa}\max(d_{i}-1)})$,
and the further contribution from cases with some $|w_{s}-w_{r}|<\eps$
is $\OO(\eps^{1+\frac{8}{\kappa}})$. We may thus neglect these contributions,
which tend to zero as $\eps\searrow0$, and only consider cases with
variables $w_{r}$ on the real line in some definite order.

We encode the possible orderings of the variables $w_{r}$ by $n$-tuples
$(\sigma^{(1)},\ldots,\sigma^{(n)})$, where $\sigma^{(i)}$ is a
permutation of $I^{(i)}$: the associated order of the variables is
\begin{align*}
w_{\sigma^{(i)}(r)}<w_{\sigma^{(i)}(r')} & \qquad\text{for all }r,r'\in I^{(i)}\text{ such that }r<r',\\
w_{\sigma^{(i)}(r)}<w_{\sigma^{(j)}(r')} & \qquad\text{for all }r\in I^{(i)},
\, r'\in I^{(j)}\text{ such that }i<j.
\end{align*}
By our definition, the phase of $f_{m_{1},\ldots,m_{n}}^{\approx}$
is positive when the ordering of the variables is the one corresponding
to all identity permutations, $\sigma^{(i)}=\id_{I^{(i)}}$ for $i=1,\ldots,n$.
In the limit $\eps\searrow0$, the integration over the set where the
variables respect this standard ordering thus simply reproduces $\SimplexInt_{m_{1},\ldots,m_{n}}^{(x_{0})}$.
If the ordering among the variables $w_{r}$, $r\in I^{(i)}$, is given by some 
other permutation $\sigma^{(i)}$, then the phase factors accumulated from 
exchanging the orders of these variables is 
$q^{-2\times\#{\rm inv}(\sigma^{(i)})}$, where 
\begin{align*}
{\rm inv}(\sigma^{(i)})
=\; & \set{r,s\in I^{(i)} \;\Big|\; r<s\text{ and }\sigma^{(i)}(r)>\sigma^{(i)}(s)}
\end{align*}
denotes the set of inversions of $\sigma^{(i)}$. Apart from these
phase factors, the contribution of the integral from the set corresponding
to the ordering $(\sigma^{(1)},\ldots,\sigma^{(n)})$ coincides with
the integral over the standard ordered part. In conclusion, we have
\begin{align*}
\CubeInt_{m_{1},\ldots,m_{n}}^{(x_{0})}
=\; & \sum_{\sigma^{(1)},\ldots,\sigma^{(n)}}\left(\prod_{i=1}^{n}q^{-2\times\#{\rm inv}(\sigma^{(i)})}\right)\times\SimplexInt_{m_{1},\ldots,m_{n}}^{(x_{0})}.
\end{align*}
By Lemma~\ref{lem: q-combinatorics}(b) we simplify the prefactor
to the asserted form.
\end{proof}

\begin{lem*}[Lemma~\ref{lem: explicit formula for one point FW integral}]
In the case $n=1$, the basis function~\eqref{eq: FW basis integral} is related 
to the
real integral by
\begin{align*}
\FWint_{\ell}^{(x_{0})}(x)
=\; & \left(\qfact{\ell}\prod_{m=1}^{\ell}(q^{d-m}-q^{m-d})\right)
\times\SimplexInt_{\ell}^{(x_{0})}(x).
\end{align*}
In particular, $\FWint_{\ell}^{(x_{0})}$ is identically zero if $\ell\geq d$.
\end{lem*}
\begin{figure}
\subfigure[The proof of Lemma~\ref{lem: explicit formula for one point FW integral} consists of manipulating the integral of $\FWint^{(x_0)}_\ell(x)$, which has $\ell$  non-intersecting positively oriented loops around $x$, anchored at $x_0$, as illustrated in this figure.]{
\includegraphics[width=0.4\textwidth]{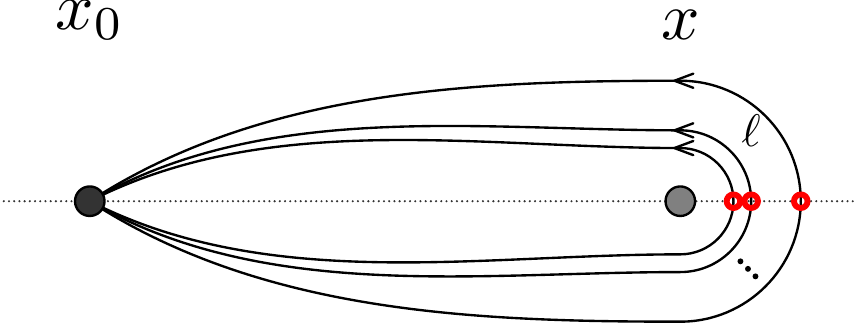}
\label{sfig: the full basis integral with one point}}\\
\bigskip{}
\bigskip{}
\bigskip{}
\bigskip{}

\subfigure[First, the integration over the innermost loop $\mathcal{C}_1$ --- corresponding to the variable $w_1$ --- is rewritten according to the illustrations in this figure. The small circular arc can be neglected. Two pieces between $x_0$ and $x$ remain, and their contributions are the same up to orientation and phase factors.]{
\includegraphics[width=0.9\textwidth]{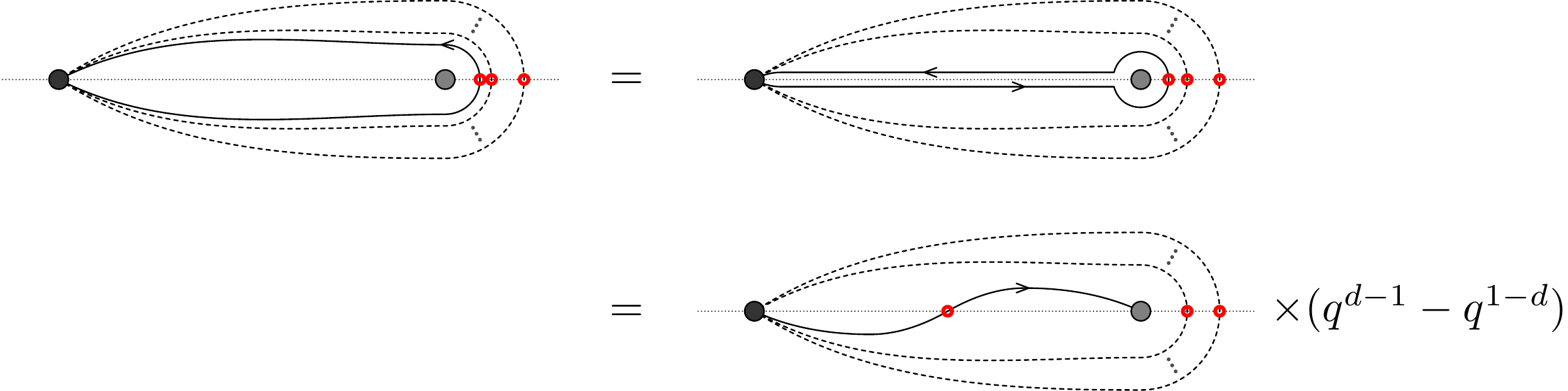}
\label{sfig: deforming the innermost loop}}\\
\bigskip{}
\bigskip{}
\bigskip{}
\bigskip{}

\subfigure[Once the integration over $w_1$ has been rewritten as an integral from $x_0$ to $x$, we start manipulating the integration over the next loop $\mathcal{C}_2$ --- corresponding to the variable $w_2$. The rewriting is illustrated in this figure. The result consists of two pieces, and upon relabeling the dummy integration variables $w_1$ and $w_2$ in one of them, the contributions are seen to be the same up to phase factors and signs.]{\includegraphics[width=0.9\textwidth]{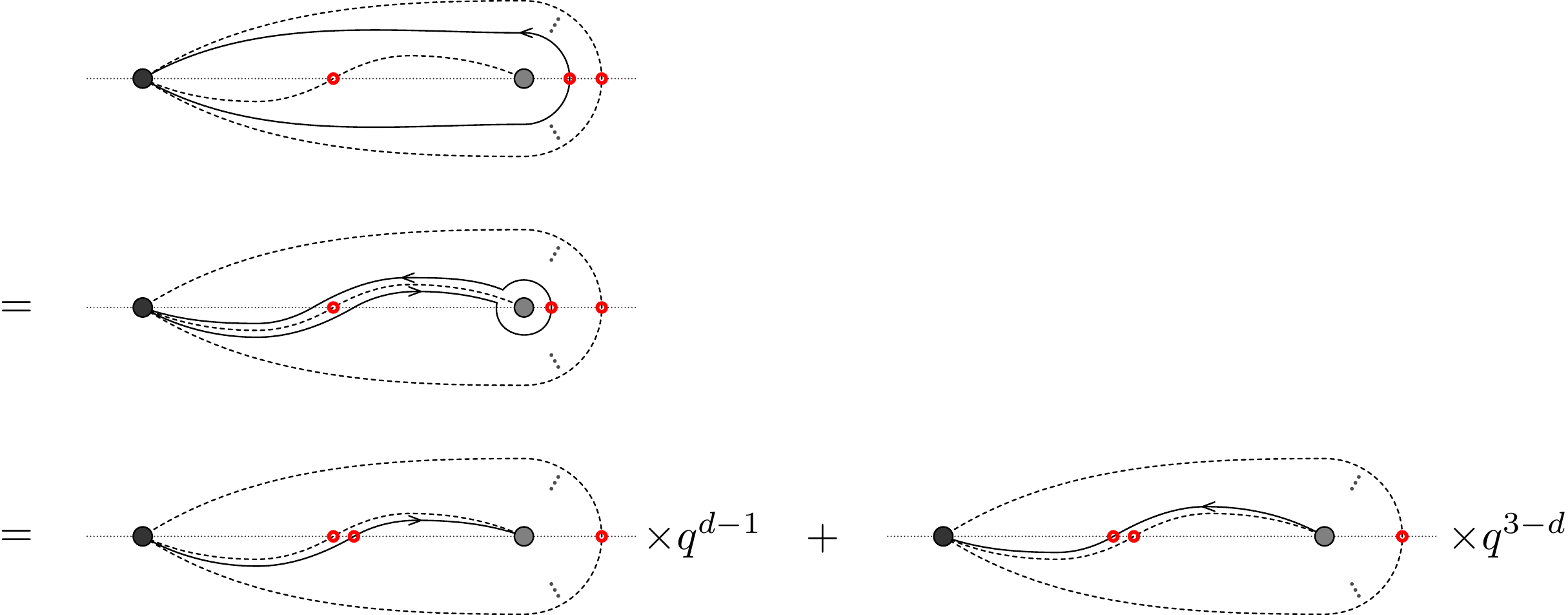}
\label{sfig: deforming the second loop}
}

\caption{\label{fig: decomposing basis function to hypercube integral}
Illustrations
for the proof of Lemma~\ref{lem: explicit formula for one point FW integral}.}
\end{figure}
\begin{proof}
We will prove that $\FWint_{\ell}^{(x_{0})}(x)=q^{\binom{\ell}{2}}\left(\prod_{m=1}^{\ell}(q^{d-m}-q^{m-d})\right)\times\CubeInt_{\ell}^{(x_{0})}(x)$
and the asserted equality will then follow from 
Lemma~\ref{lem: simplex in terms of hypercube}.
We achieve this by decomposing each of the loops based at $x_{0}$
encircling $x$ to two pieces: one from $x_{0}$ to $x$ and the other
from $x$ back to $x_{0}$. We begin from the innermost loop and inductively
proceed outwards. The procedure is illustrated in 
Figure~\ref{fig: decomposing basis function to hypercube integral}.

Let $\sC_{1},\ldots,\sC_{\ell}$ be the contours of integration of
the variables $w_{1},\ldots,w_{\ell}$, so that $\sC_{1}$ is a loop
encircling $x$, and for $r>1$ the loop $\sC_{r}$ encircles the
entire loop $\sC_{r-1}$. The basis function then reads
\begin{align*}
\FWint_{\ell}^{(x_{0})}(x)
=\; & \int_{\sC_{\ell}}\ud w_{\ell}\cdots\int_{\sC_{2}}\ud w_{2}
\int_{\sC_{1}}\ud w_{1}\;\fFW_{\ell}(x;w_{1},\ldots,w_{\ell}),
\end{align*}
see also Figure~\ref{sfig: the full basis integral with one point}.
Recall that the branch and phase of the integrand are chosen so that
\begin{align*}
\fFW_{\ell}(x;w_{1},\ldots,w_{\ell})
=\; & \prod_{r=1}^{\ell}(w_{r}-x)^{\frac{4}{\kappa}(1-d)}
\prod_{1\leq r<s\leq\ell}(w_{s}-w_{r})^{\frac{8}{\kappa}}
\end{align*}
is positive at the ``midpoint'' $\boldsymbol{w'}$ of the loops (illustrated
by the red dots in the figure). We may thus fix the branches of each
of the factors above so that they are positive when $x<w_{1}<w_{2}<\cdots<w_{\ell}$.

Choose a small $\eps>0$. To decompose the first loop, that is, the
innermost integral
\begin{align*}
 & \int_{\sC_{1}}\fFW_{\ell}\,\ud w_{1},
\end{align*}
deform it to a line segment along the real axis from $x_{0}$ to $x-\eps$,
a circle of radius $\eps$ around $x$, and a line segment along the
real axis from $x-\eps$ to $x_{0}$. The innermost integral then
becomes essentially
\begin{align*}
 & \int_{x_{0}}^{x-\eps}\fFW_{\ell}\,\ud w_{1}+\oint_{\bdry B_{\eps}(x)}\fFW_{\ell}\,\ud w_{1}+\int_{x-\eps}^{x_{0}}\fFW_{\ell}\,\ud w_{1},
\end{align*}
but in this expression we have abused notation and hidden the important
phase factors. In particular, the first and last terms do not cancel
--- the factor $(x-w_{1})^{\frac{4}{\kappa}(1-d)}$ in the integrand
$\fFW_{\ell}$ has a constant but different phase on these two line
segments. Indeed, on the first line segment 
$(x-w_{1})^{\frac{4}{\kappa}(1-d)} = q^{d-1}\,|x-w_{1}|^{\frac{4}{\kappa}(1-d)}$
and on the second 
$(x-w_{1})^{\frac{4}{\kappa}(1-d)} = q^{1-d}\,|x-w_{1}|^{\frac{4}{\kappa}(1-d)}$,
because from the reference point $\boldsymbol{w}'$ we must take $w_{1}$
half a turn around $x$ in the negative or positive direction, respectively.
The contribution of the integral around the circle is proportional
to $\eps^{1-\frac{4}{\kappa}(d-1)}$ and it thus vanishes in the limit
$\eps\searrow0$ when $\kappa>4(d-1)$. Hence, we can write
\begin{align*}
\int_{\sC_{1}}\ud w_{1}\;\fFW_{\ell}(x;w_{1},\ldots,w_{\ell})
=\; & (q^{d-1}-q^{1-d})\int_{x_{0}}^{x}\ud w_{1}
\;|x-w_{1}|^{\frac{4}{\kappa}(1-d)}
\prod_{r=2}^{\ell}(w_{r}-x)^{\frac{4}{\kappa}(1-d)}
\prod_{1\leq r<s\leq\ell}(w_{s}-w_{r})^{\frac{8}{\kappa}}.
\end{align*}
These steps are illustrated in Figure~\ref{sfig: deforming the innermost loop}.

We proceed similarly with the integration contours $\sC_{2},\ldots,\sC_{\ell}$,
cutting each of them into two pieces between $x_{0}$ and $x$, and a negligible
loop around $x$. The example of the second loop $\sC_{2}$ is illustrated
in Figure~\ref{sfig: deforming the second loop}. On each piece of
the $r$:th loop $\sC_{r}$, we rephase the factors of the integrand
that contain $w_{r}$, in order to finally compare with the integral
$\CubeInt$. The first piece of the integration contour $\sC_{r}$
is a path from $x_{0}$ to $x$ below the variables $w_{s}$, for
$s<r$, and we extract a phase factor $q^{d-1}$ resulting from taking
$w_{r}$ half a turn around $x$ in the negative direction. The second
piece of the integration contour $\sC_{r}$ is a path from $x$ to
$x_{0}$ above the variables $w_{s}$, for $s<r$, and we not only
take $w_{r}$ in the positive direction around $x$, but we also take
it positively around all $w_{s}$, $s<r$, in order to reach a position
where the corresponding piece of the contour remains in the subset
$\widetilde{\Wchamber}_{\ell}$. The phase factor accumulated in this
case is $q^{1-d+2(r-1)}$. After extracting all these phase factors,
the remaining integral is, up to an orientation,
equal to the integral $\CubeInt_{\ell}^{(x_{0})}(x)$. 
The final result is
\begin{align*}
\FWint_{\ell}^{(x_{0})}(x)
=\; & \prod_{r=1}^{\ell}(q^{d-1}-q^{1-d+2(r-1)})
\times\int_{\widetilde{\sR}_{\ell}}\fCube_{\ell}(x;w_{1},\ldots,w_{\ell})
\;\ud w_{1}\cdots\ud w_{\ell},
\end{align*}
and we finish the proof by simplifying the prefactor
\begin{align*}
\prod_{r=1}^{\ell}(q^{d-1}-q^{1-d+2(r-1)})
=\; & q^{\binom{\ell}{2}}\prod_{m=1}^{\ell}(q^{d-m}-q^{m-d})
\end{align*}
and using Lemma~\ref{lem: simplex in terms of hypercube}. If $\ell\geq d$,
the product contains a factor which vanishes, and then $\FWint_{\ell}^{(x_{0})}(x)=0$.
\end{proof}

\begin{lem*}[Lemma~\ref{lem: two variable FW basis function}]
We have 
\begin{align*}
\FWint_{l_{1},l_{2}}^{(x_{0})}(x_{1},x_{2})
=\; & q^{\binom{l_{1}}{2}+\binom{l_{2}}{2}}(q-q^{-1})^{l_{1}+l_{2}}\frac{\qfact{d_{1}-1}\qfact{d_{2}-1}}{\qfact{d_{1}-l_{1}-1}\qfact{d_{2}-l_{2}-1}} \\
& \qquad \times \sum_{m=0}^{l_{2}}q^{m(m-l_{2}+d_{1}-1)}\qbin{l_{2}}m\CubeInt_{l_{1}+m,l_{2}-m}^{(x_{0})}(x_{1},x_{2}).
\end{align*}
\end{lem*}
\begin{proof}
We use the method of the proof of 
Lemma~\ref{lem: explicit formula for one point FW integral}
to decompose the $l_{1}$ loops around $x_{1}$ and the $l_{2}$ loops
around $x_{2}$ to paths from $x_{0}$ to $x_{1}$ and $x_{2}$, respectively.
We obtain
\begin{align*}
\FWint_{l_{1},l_{2}}^{(x_{0})}(x_{1},x_{2})
=\; & \prod_{i=1,2}\left(q^{\binom{l_{i}}{2}}\prod_{t=1}^{l_{i}}(q^{d_{i}-t}-q^{t-d_{i}})\right)\times\DecoInt_{l_{1},l_{2}}^{(x_{0})}(x_{1},x_{2}),
\end{align*}
where $\DecoInt_{l_{1},l_{2}}^{(x_{0})}$ is the function defined
by an integral as in Figure~\ref{fig: decomposition auxiliary integral}.
The variables $w_{1},\ldots,w_{l_{1}}$ are integrated from $x_{0}$
to $x_{1}$ and the variables $w_{l_{1}+1},\ldots,w_{l_{1}+l_{2}}$
from $x_{0}$ to $x_{2}$ in such a way that for all $r<r'$, the
path of the variable $w_{r'}$ remains below the path of the variable
$w_{r}$. The integrand is rephased so that it is positive in the
region
\begin{align*}
 & x_{0}<w_{1}<w_{2}<\cdots<w_{l_{1}}<x_{1}<w_{l_{1}+1}<\cdots<w_{l_{1}+l_{2}}<x_{2}.
\end{align*}
We further split the integration contours of the $l_{2}$ variables
$w_{l_{1}+1},\ldots,w_{l_{1}+l_{2}}$ into two pieces: the first from
$x_{0}$ to $x_{1}$ and the second from $x_{1}$ to $x_{2}$. A contribution
proportional to $\CubeInt_{l_{1}+m,l_{2}-m}^{(x_{0})}$ is obtained
whenever we make the first choice for some $m$ of the $l_{2}$ variables
--- we again just have to keep track of the correct phase factors.
Suppose that we make the first choice for the variables
\begin{align*}
w_{l_{1}+s_{1}},w_{l_{1}+s_{2}},\ldots,w_{l_{1}+s_{m}} & \qquad(1\leq s_{1}<s_{2}<\cdots<s_{m}\leq l_{2})
\end{align*}
and the second choice for the rest. Then, when taking the variables
from the point where the integrand of $\DecoInt_{l_{1},l_{2}}^{(x_{0})}$
is positive to the point where the integrand of $\CubeInt_{l_{1}+m,l_{2}-m}^{(x_{0})}$
is positive, we accumulate some phase. Indeed, first of all, each
$w_{l_{1}+s_{p}}$ goes half a negative turn around $x_{1}$, contributing
a factor $q^{d_{1}-1}$. Moreover, each $w_{l_{1}+s_{p}}$ goes half
a negative turn around each $w_{r}$ with $l_{1}<r<l_{1}+s_{p}$,
for $r\neq l_{1}+s_{p'}$, contributing a phase factor $q^{-2(s_{p}-p)}$.
In the end, all the contributions to $\DecoInt_{l_{1},l_{2}}^{(x_{0})}$
with fixed $m$ give
\begin{align*}
 & q^{(d_{1}-1)m}\times\sum_{1\leq s_{1}<s_{2}<\cdots<s_{m}\leq l_{2}}q^{-2\sum_{p}(s_{p}-p)}\times\CubeInt_{l_{1}+m,l_{2}-m}^{(x_{0})}.
\end{align*}
By Lemma~\ref{lem: q-combinatorics}(c) we simplify the sum of the
prefactors to the form $q^{-m(l_{2}-m)}\qbin{l_{2}}m$, and obtain
\begin{align*}
\DecoInt_{l_{1},l_{2}}^{(x_{0})}
=\; & \sum_{m=0}^{l_{2}}q^{(d_{1}-1+m-l_{2})m}\qbin{l_{2}}m
\times\CubeInt_{l_{1}+m,l_{2}-m}^{(x_{0})}.
\end{align*}
We finally write
\begin{align*}
q^{\binom{l_{i}}{2}}\prod_{t=1}^{l_{i}}(q^{d_{i}-t}-q^{t-d_{i}})
=\; & q^{\binom{l_{i}}{2}}(q-q^{-1})^{l_{i}}\frac{\qfact{d_{i}-1}}{\qfact{d_{i}-l_{i}-1}},
\end{align*}
and the expression for $\FWint_{l_{1},l_{2}}^{(x_{0})}$ then takes
the form
\begin{align*}
\FWint_{l_{1},l_{2}}^{(x_{0})}(x_{1},x_{2})
=\; & q^{\binom{l_{1}}{2}+\binom{l_{2}}{2}}(q-q^{-1})^{l_{1}+l_{2}}\frac{\qfact{d_{1}-1}\qfact{d_{2}-1}}{\qfact{d_{1}-l_{1}-1}\qfact{d_{2}-l_{2}-1}}\\
 & \qquad\times\sum_{m=0}^{l_{2}}q^{(d_{1}-1+m-l_{2})m}\qbin{l_{2}}m
 \times\CubeInt_{l_{1}+m,l_{2}-m}^{(x_{0})}(x_{1},x_{2}).
\end{align*}
This finishes the proof.
\end{proof}
\noindent 
\begin{figure}
\includegraphics[width=0.5\textwidth]{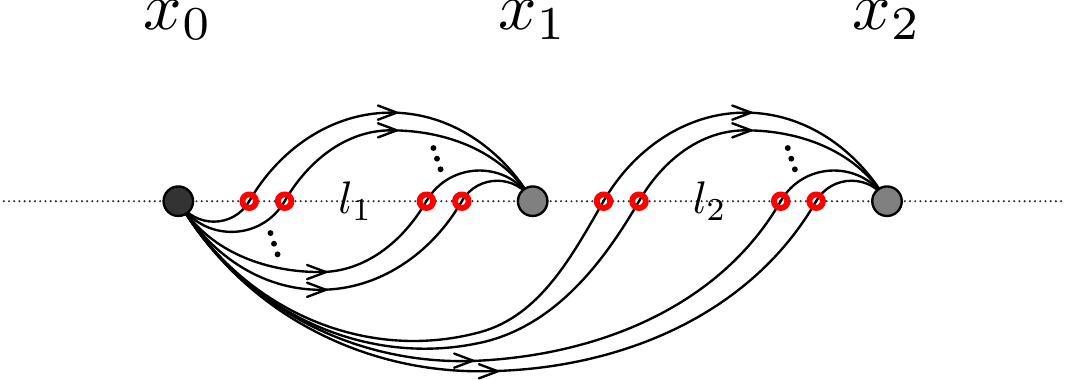}
\caption{\label{fig: decomposition auxiliary integral}
The integration surface
for the function $\DecoInt_{l_{1},l_{2}}^{(x_{0})}(x_{1},x_{2})$
used for the decomposition in the proof of 
Lemma~\ref{lem: two variable FW basis function}.
The integrand is rephased to be real and positive at the point $x_{0}<w_{1}<\cdots<w_{l_{1}}<x_{1}<w_{l_{1}+1}<\cdots<w_{l_{1}+l_{2}}<x_{2}$,
marked by the red circles.}
\end{figure}

\begin{lem*}[Lemma~\ref{lem: general FW basis function}]
We have
\begin{align*}
\FWint_{l_{1},\ldots,l_{n}}^{(x_{0})}(\boldsymbol{x})
=\; & \sum_{m_{1},\ldots,m_{n}}C_{l_{1},\ldots,l_{n}}^{m_{1},\ldots,m_{n}}\times\CubeInt_{m_{1},\ldots,m_{n}}^{(x_{0})}(\boldsymbol{x}),
\end{align*}
with some coefficients $C_{l_{1},\ldots,l_{n}}^{m_{1},\ldots,m_{n}}$,
which are zero unless $\sum_i l_{i}=\sum_i m_{i}$ and $\sum_{i=1}^{j}l_{i}\leq\sum_{i=1}^{j}m_{i}$
for all $j = 1 , \ldots, n$.
\end{lem*}
\begin{proof}
We only give a rough outline, and leave the details to the reader.
By a generalization of the method used in the proof of 
Lemma~\ref{lem: two variable FW basis function},
one can show that
\begin{align*}
\FWint_{l_{1},\ldots,l_{n}}^{(x_{0})}
= \; & \prod_{i=1}^{n}q^{\binom{l_{i}}{2}}\prod_{t=1}^{l_{i}}(q^{d_{i}-t}-q^{t-d_{i}}) \\
& \qquad \times 
\sum_{\substack{(k_{j}^{(i)})_{1\leq i\leq n;\,1\leq j\leq i}\\
\forall i:\;\sum_{j=1}^{i}k_{j}^{(i)}=l_{i}}}
C_{l_{1},\ldots,l_{n}}\left((k_{j}^{(i)})_{1\leq i\leq n;\,1\leq j\leq i}\right)
\times\CubeInt_{\sum_{i\geq1}k_{1}^{(i)},\sum_{i\geq2}k_{2}^{(i)},\ldots,k_{n}^{(n)}}^{(x_{0})},
\end{align*}
where
\begin{align*}
& C_{l_{1},\ldots,l_{n}}\left((k_{j}^{(i)})_{1\leq i\leq n;\,1\leq j\leq i}\right)\\
=\; & \prod_{i} \qbin{l_{i}}{k_{1}^{(i)};k_{2}^{(i)};\cdots;k_{i}^{(i)}}
q^{-\sum_{j'<j}k_{j}^{(i)}k_{j'}^{(i)}}\prod_{i<i'}q^{-2\sum_{j'<j}k_{j}^{(i)}k_{j'}^{(i')}}\prod_{i<i'}q^{\sum_{j\leq i}(d_{i}-1)k_{j}^{(i')}},
\end{align*}
and where we have used the following $q$-multinomial coefficients
\begin{align*}
\qbin{\ell}{k_{1};k_{2};\cdots;k_{i}}
=\; & \frac{\qfact{\ell}}{\prod_{j=1}^{i}\qfact{k_{j}}}.
\end{align*}
The desired coefficients of $\CubeInt_{m_{1},\ldots,m_{n}}$ are then
expressible as sums of the coefficients above,
\begin{align*}
C_{l_{1},\ldots,l_{n}}^{m_{1},\ldots,m_{n}}
=\; & \prod_{i=1}^{n}q^{\binom{l_{i}}{2}}\prod_{t=1}^{l_{i}}(q^{d_{i}-t}-q^{t-d_{i}})
\times\sum_{\substack{(k_{j}^{(i)}) \\ 
\forall j:\;\sum_{i\geq j}k_{j}^{(i)}=m_{j}}}
C_{l_{1},\ldots,l_{n}}\left((k_{j}^{(i)})_{1\leq i\leq n;\,1\leq j\leq i}\right).
\end{align*}
\end{proof}

\begin{lem*}[Lemma~\ref{lem: vanishing for too many bags}]
Whenever $l_{i}\geq d_{i}$
for some $i=1,\ldots,n$, we have
\begin{align*}
\FWint_{l_{1},\ldots,l_{n}}^{(x_{0})}(x_{1},\ldots,x_{n})\equiv\; & 0.
\end{align*}
\end{lem*}
\begin{proof}
We may rearrange the integrations over those $l_{i}$ $w$-variables
which encircle $x_{i}$ similarly as in the case of 
Lemma~\ref{lem: explicit formula for one point FW integral}.
After this rearrangement, the result of the whole integral $\FWint_{l_{1},\ldots,l_{n}}^{(x_{0})}(\boldsymbol{x})$
is the factor $\qfact{\ell_{i}} \, \prod_{m=1}^{l_{i}}(q^{d_{i}-m}-q^{m-d_{i}})$
times an integral which is convergent for large $\kappa$. The result
thus again vanishes if $l_{i}\geq d_{i}$ and $\kappa$ is large,
and by analyticity in $\kappa$, the same conclusion is valid for all
values of $\kappa>0$.
\end{proof}

\section{\label{app: exact form lemma}Differential operators acting on the integrand}

In this appendix, we outline the proof of Lemma~\ref{lem: BSA diff op for vertex operator corr fn},
which is used to show, roughly speaking, that the differential operators
acting on our integrands produce exact forms.

\begin{lem*}[Lemma~\ref{lem: BSA diff op for vertex operator corr fn}]
Let $d_1, \ldots, d_n$ be real numbers. The function 
\begin{align*}
f^{(0)}(x_{1},\ldots,x_{n})
=\; & \prod_{1\leq i<j\leq n}(x_{j}-x_{i})^{\frac{2}{\kappa}(d_{i}-1)(d_{j}-1)}
\end{align*}
satisfies the partial differential equation
$\sD_{d_{j}}^{(j)}f^{(0)}  =  0$, for all $j=1,\ldots,n$ such that $d_j$ is a 
positive integer.
\end{lem*}
\begin{proof}
If $d_j=1$, the claim $\sD^{(j)}_{1} f^{(0)}(x_1 , \ldots , x_n) = \pder{x_j} 
f(x_1 , \ldots , x_n) = 0 $ is obvious:
the function is constant in the variable $x_j$.

In the case $d_j=2$, the claim $\sD^{(j)}_{2} f^{(0)}(x_1 , \ldots , x_n) = 0$ 
has been verified by an
explicit calculation in~\cite[Lemma~5.1]{Kytola-local_mgales}.
We take this special case as our starting point, and proceed by a recursive argument,
which employs fusion similar to~\cite[Theorem~15]{Dubedat-fusion}.

We perform an induction on the number
\[ D = \sum_{\substack{j \\ d_j \in \set{3,4,\ldots}}} (d_j - 2) \in \bZ_{\geq 0} . \]
The base case $D=0$ is covered by the initial observations above.
In the induction step, we want to increment $D$ by one, which amounts to being able to
increment one of the parameters that are integers greater than or equal to $2$.
By permutation symmetry, we may assume that $d_1 = d$ is the parameter we wish to increment.
We may also take $d_2 = 2$, which will allow us to get to $d+1$ by fusion.

Hence, fix parameters $d_1 , \ldots, d_n$ such that
\[ d_1 = d \in \set{2,3,\ldots}
   \qquad \text{ and } \qquad
   d_2 = 2 , \]
and assume that the claim has been verified for the corresponding $f^{(0)}$.
For notational convenience, we will denote the two first variables by $x_1 = z$ and $x_2 = y$.
The fusion is based on the following obvious asymptotics of $f^{(0)}$.
If we denote 
\[ \Delta = h_{1,d+1} - h_{1,d} - h_{1,2} = \frac{2}{\kappa} (d-1) , \] 
then the function defined by
\begin{align}\label{eq: limit defining g0}
g_0(z , x_3, \ldots , x_n) = \lim_{y \to z} \frac{1}{(z-y)^\Delta} f^{(0)}(z,y,x_3,\ldots,x_n)
\end{align}
is of the same type as $f^{(0)}$, but with the new parameter sequence $d+1 , d_3 , \ldots , d_n$,
simply because
\[ (x_j - z)^{\frac{2}{\kappa}(d-1)(d_{j}-1)} \; (x_j - y)^{\frac{2}{\kappa}(d_{j}-1)}
    \; \underset{y \to z}{\longrightarrow} \; (x_j - z)^{\frac{2}{\kappa} d (d_{j}-1)} .\]
Our goal is to verify the asserted partial differential equations for this $g_0$,
assuming the ones on $f^{(0)}$.
The most involved among them is the partial differential equation of order $d+1$ given by
\begin{align}\label{eq: diff eqn goal}
\sD_{d+1} \; g_0 (z , x_3 , \ldots , x_n) = \; & 0 ,
\end{align}
where
\begin{align}\label{eq: diff op goal}
\sD_{d+1} = \; & \sum_{k=1}^{d+1}\sum_{\substack{p_{1},\ldots,p_{k}\geq1\\p_{1}+\cdots+p_{k}=d+1}}
    \frac{(-4/\kappa)^{d+1-k}\,d!^{2}}{\prod_{u=1}^{k-1}(\sum_{i=1}^{u}p_{i})(\sum_{i=u+1}^{k}p_{i})} \times \sL_{-p_{1}} \cdots \sL_{-p_{k}} 
\end{align}
and
\begin{align*}
\sL_{-p} = \; & - \sum_{i=3}^n (x_i-z)^{1-p} \pder{x_i} - (1-p) \sum_{i=3}^n (x_i-z)^{-p} \, h_{1,d_i} .
\end{align*}

In order to verify~\eqref{eq: diff eqn goal}, we use the 
induction assumption that $f^{(0)}$ satisfies, in particular, the following two differential equations
respectively associated to the points $z$ and $y$:
\begin{align}
\label{eq: assumed Dhat}
\widehat{\sD}_{d} \; f^{(0)} (z , y , x_3 , \ldots , x_n) = \; & 0
    \\
\label{eq: assumed Dtilde}
\widetilde{\sD}_{2} \; f^{(0)} (z , y , x_3 , \ldots , x_n) = \; & 0
    ,
\end{align}
where
\begin{align*}
\widehat{\sD}_{d} = \; & \sum_{k=1}^{d}\sum_{\substack{p_{1},\ldots,p_{k}\geq1\\p_{1}+\cdots+p_{k}=d}}
    \frac{(-4/\kappa)^{d-k}\,(d-1)!^{2}}{\prod_{u=1}^{k-1}(\sum_{i=1}^{u}p_{i})(\sum_{i=u+1}^{k}p_{i})} \times \widehat{\sL}_{-p_{1}} \cdots \widehat{\sL}_{-p_{k}} \\
\widetilde{\sD}_{2} = \; & \widetilde{\sL}_{-1}^{\; 2} - \frac{4}{\kappa} \widetilde{\sL}_{-2} ,
\end{align*}
with
\begin{align*}
\widehat{\sL}_{-p} = \; & - (y-z)^{1-p} \pder{y} - (1-p) (y-z)^{-p} h_{1,2} - \sum_{i=3}^n (x_i-z)^{1-p} \pder{x_i} - (1-p) \sum_{i=3}^n (x_i-z)^{-p} \, h_{1,d_i}  \\
\widetilde{\sL}_{-p} = \; & - (z-y)^{1-p} \pder{z} - (1-p) (z-y)^{-p} h_{1,d} - \sum_{i=3}^n (x_i-y)^{1-p} \pder{x_i} - (1-p) \sum_{i=3}^n (x_i-y)^{-p} \, h_{1,d_i} .
\end{align*}

To study the limit function $g_0$ defined by~\eqref{eq: limit defining g0},
we perform an expansion in $\eps = y-z$ around $\eps = 0$. We thus write first of all
\begin{align}\label{eq: expansion of f0}
f^{(0)}(z,y,x_3,\ldots,x_n) = \sum_{m=0}^\infty \eps^{\Delta + m} \, g_m(z, x_3 , \ldots, x_n) .
\end{align}
Moreover, we write
\begin{align*}
\widehat{\sL}_{-p} = \; & - \eps^{1-p} \pder{\eps} - (1-p) \eps^{-p} \, h_{1,2} + \sL_{-p} ,
\end{align*}
and note that the action of $\widetilde{\sL}_{-1}$ and $\widetilde{\sL}_{-2}$ on the translation
invariant function $f^{(0)}$ is
\begin{align*}
\widehat{\sL}_{-1} = \pder{\eps}
\qquad \text{ and } \qquad
\widehat{\sL}_{-2} = \sum_{s=-1}^{\infty} \eps^{s} \, \sL_{-2-s} - \eps^{-1} \pder{\eps} + \eps^{-2} \, h_{1,d} .
\end{align*}
With these, it is straightforward to calculate $\widetilde{\sD}_{2} \; f^{(0)}$ in a series expansion in $\eps$,
and after some simplifications the result is
\begin{align*}
\widetilde{\sD}_{2} \; f^{(0)} = \sum_{m=0}^\infty \eps^{\Delta + m -2} \Big( r(\Delta+m) \, g_m
    - \frac{4}{\kappa} \sum_{p'=1}^m \sL_{-p'} \, g_{m-p'} \Big) ,
\end{align*}
where $r(\alpha) := \alpha^2 - \alpha + \frac{4}{\kappa}\alpha - \frac{4}{\kappa} h_{1,d}$.
According to equation~\eqref{eq: assumed Dhat}, this series must vanish term by term.
Since also $r(\Delta+m) \neq 0$ for all $m = 1, 2, \ldots$, we can derive that
\begin{align*}
g_m = \sR_m \, g_0 ,
\end{align*}
where
\begin{align*}
\sR_m = \sum_{u=1}^m \sum_{\substack{p'_{1},\ldots,p'_{u}\geq1\\p'_{1}+\cdots+p'_{k}=m}}
    \frac{(4/\kappa)^u}{\prod_{j=1}^u r(\Delta+\sum_{i=j}^u p'_i) } \sL_{-p'_1} \cdots \sL_{-p'_u} \; g_0 .
\end{align*}

The next step is to study the expression $\widehat{D}_d \, f^{(0)}$,
as a series expanded in $\eps$.
Using the expression derived above, we get
\begin{align*}
\widehat{\sD}_d \, f^{(0)} = \; & \sum_{m=0}^\infty \widehat{\sD}_d \Big( \eps^{\Delta + m} \; (\sR_m \, g_0) \Big)
    = \sum_{m=0}^\infty ( \sP_m \, g_0 ) \; \eps^{\Delta - d + m} ,
\end{align*}
where $\sP_m$ is a polynomial in the operators $\sL_{-p}$, for $p = 1,\ldots$.
By assumption~\eqref{eq: assumed Dtilde}, the series above must vanish term by term, i.e., $\sP_m \, g_0 = 0$
for all $m$.
From the argument in the proof of~\cite[Lemma~1]{Dubedat-fusion} it follows that
the polynomial $\sP_{d+1}$ giving the coefficient of $\eps^{\Delta + 1}$ is a non-zero multiple of
the differential operator $\sD_{d+1}$ defined by~\eqref{eq: diff op goal}.
We conclude that
$g_0$ satisfies the differential equation~\eqref{eq: diff eqn goal} of order $d+1$.

The other asserted differential equations for $g_0$ are easier to verify.
The key is to still use the expansion~\eqref{eq: expansion of f0} in $\eps$, and
note that in the first order differential operators $\sL^{(j)}_{-p}$
acting on such an expansion, the following terms combine
\begin{align*}
& \Big\{ (z-x_j)^{1-p} \pder{z} + (1-p) (z-x_j)^{-p} \, h_{1,d} 
        + (y-x_j)^{1-p} \pder{y} + (1-p) (y-x_j)^{-p} \, h_{1,2} \Big\} 
            \sum_{m=0}^\infty \eps^{\Delta + m} \; g_m \\
= \; & \eps^{\Delta} \times \Big( (z-x_j)^{1-p} \pder{z} + (1-p) (z-x_j)^{-p} \, \big( h_{1,d} + h_{1,2} + \Delta \big) \Big) \, g_0
    + \OO(\eps^{\Delta+1}) .
\end{align*}
Moreover, here we have $h_{1,d} + h_{1,2} + \Delta = h_{1,d+1}$. With these observations,
it is routine to check that from the induction assumption that $f^{(0)}$ satisfies $\sD^{(j)}_{d_j} \, f^{(0)} = 0$
it follows that $g_0$ satisfies the corresponding partial differential equation of order $d_j$.
This finishes the proof.
\end{proof}

\section{\label{app: alternative conventions}Alternative conventions}

The main topic of the present article was the construction of 
the spin chain - Coulomb gas correspondence maps
\[ \sF^{(x_0)} \colon \Wd_{d_n} \tens \cdots \tens \Wd_{d_1}
    \to \mathcal{C}^\infty \big( \chamber^{(x_0)}_n \big) , \]
for which the representation theoretic properties on the quantum group side
translate to properties of the functions. Our construction of the maps $\sF^{(x_0)}$
necessarily involved certain somewhat arbitrary choices, and occasionally some other conventions could
be considered more natural. In particular, it may be desirable to have the order of the tensorands
correspond to the order of the variables $x_1 < \cdots < x_n$
of the functions on the real line (the boundary
of the upper half-plane), rather than to have the tensor product constructed in the
reverse order. This appendix gives two alternatives to our conventions, which can be used
to achieve the more intuitive order of tensor products. The first straightforward
option is to modify the coproduct, which defines the tensor product representations
for the quantum group. The second option is to modify the choice of basis functions and
the restricted chamber in which the basis functions are defined. The first option
alters the Hopf algebra structure of $\Uqsltwo$, and we will not explicitly list all the
necessary changes in the representation theoretic lemmas. For the second choice, the Hopf
algebra structure remains the same, so we can concisely state the corresponding versions of
our main results for that case.

\subsection{Opposite coproduct}

The coproduct in Section~\ref{subsub: definition of the quantum group} 
was chosen according to what appears more commonly in the quantum group literature.
The algebra $\Uqsltwo$ could alternatively be equipped with a unique 
Hopf algebra structure corresponding to the coproduct
\begin{align*}
\Hcp^\mathrm{op} \;\colon\; & \Uqsltwo\rightarrow\Uqsltwo\tens\Uqsltwo,
\end{align*}
given on the generators by the expressions
\begin{align*}\label{eq: coproduct}
\Hcp^\mathrm{op}(E) = K\tens E + E\tens 1, \qquad \Hcp^\mathrm{op}(K) =  K\tens K, \qquad
\Hcp^\mathrm{op}(F) = 1\tens F + F \tens K^{-1} ,
\end{align*}
i.e., with the order opposite to that of Section~\ref{subsub: definition of the quantum group}.
It is then obvious that for the correspondence defined by
\begin{align*}
\sF^{(x_0)}_\mathrm{op} \colon \Wd_{d_1} \tens \cdots \tens \Wd_{d_n}
    \to \; & \mathcal{C}^\infty \big( \chamber^{(x_0)}_n \big) \\
\sF^{(x_0)}_\mathrm{op} [\Wbas_{l_1} \tens \cdots \tens \Wbas_{l_n}] = \; & \FWint^{(x_0)}_{l_1 , \ldots, l_n}
\end{align*}
analogues of our main results still hold.

\subsection{Different basis functions}
\noindent 
\begin{figure}
\includegraphics[width=1\textwidth]{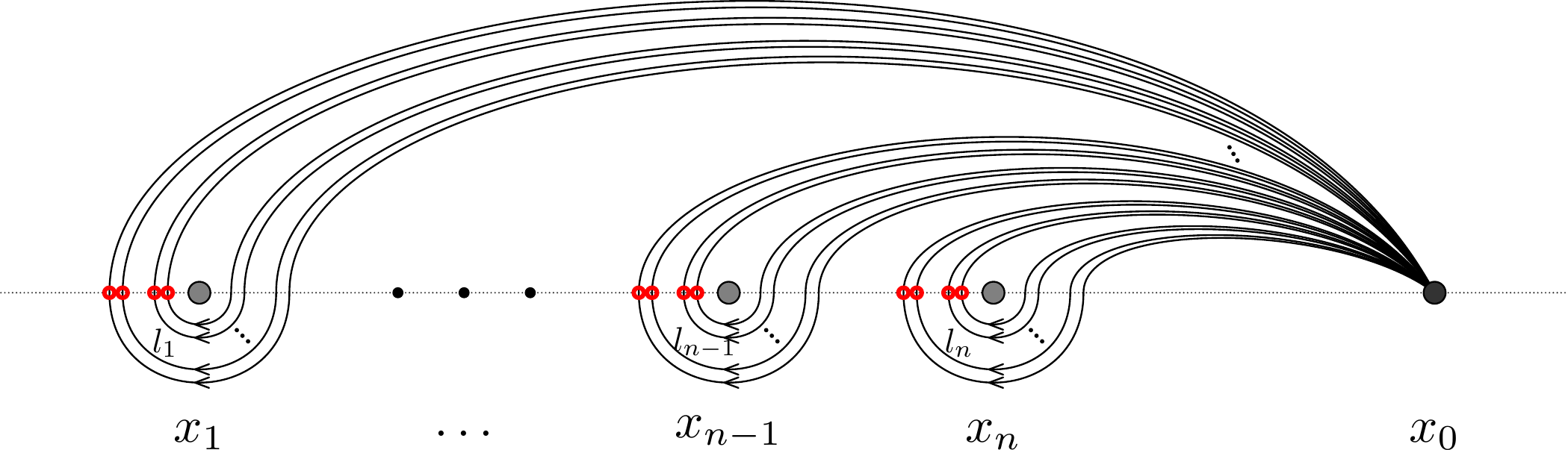}
\caption{\label{fig: alternative basis functions}
The integration surface $\SurfFWalt_{l_1 , \ldots , l_n}$.
The point where the integrand 
$\fFWalt_{l_1 , \ldots , l_n}(\boldsymbol{x};\cdot)$
is rephased to be positive is marked by red circles.}
\end{figure}

We finally present a way of keeping the standard coproduct while
achieving the more natural order of tensor products.
The Hopf algebra structure of $\Uqsltwo$ will thus be the same
as elsewhere in the article, and all representation theoretic formulas of
Section~\ref{sec: q stuff} hold without any changes. What we change now is the
basis functions of the spin chain - Coulomb gas correspondence.

We take the anchor point $x_0$ for the new basis functions to lie to the right 
of all
other variables, so the basis functions are defined on the chamber restricted 
from the right
\begin{align*}
\chamberalt^{(x_0)}_n := \set{ (x_1 , \ldots , x_n) \in \bR^n \; \Big| \; x_1 < \cdots < x_n < x_0 } .
\end{align*}
Fix again dimension parameters $d_1 , \ldots, d_n \in \bZpos$.
Analogously to the definition of basis functions
in Section~\ref{sub: basis functions as FW integrals}, define now,
for $l_1 , \ldots , l_n \in \bZnn$, the function
\begin{align*}
\FWintalt^{(x_0)}_{l_1 , \ldots , l_n} \colon \chamberalt^{(x_0)}_n \to \bC
\end{align*}
as an integral over the surface $\SurfFWalt_{l_1 , \ldots , l_n}$
depicted in Figure~\ref{fig: alternative basis functions}, with the integrand
$\fFWalt_{l_1 , \ldots , l_n}$ rephased to be positive at the point marked in the figure,
\begin{align*}
\FWintalt^{(x_0)}_{l_1 , \ldots , l_n} ( x_1 , \ldots , x_n )
    = \int_{\SurfFWalt_{l_1 , \ldots , l_n}} \fFWalt_{l_1 , \ldots , l_n}(x_1 , \ldots, x_n ; w_1 , \ldots, w_{\ell}) \; \ud w_1 \cdots \ud w_{\ell}  ,
\end{align*}
where $\ell = \sum_{j=1}^n l_j$ as before.

Then form the tensor product representation
\begin{align*}
\Wd_{d_1} \tens \cdots \tens \Wd_{d_n}
\end{align*}
in the order that is opposite to our convention~\eqref{eq: order of tensorands}
used elsewhere. We define
\begin{align*}
\sFalt^{(x_0)}_{d_1 , \ldots, d_n} \colon \Wd_{d_1} \tens \cdots \tens \Wd_{d_n}
    \to \sC^{\infty}\big(\chamberalt_{n}^{(x_{0})}\big)
\end{align*}
by setting
\begin{align*}
\sFalt^{(x_0)}_{d_1 , \ldots, d_n} \big[ \Wbas^{(d_1)}_{l_1} \tens \cdots \tens \Wbas^{(d_n)}_{l_n} \big]
    = \FWintalt^{(x_0)}_{l_1 , \ldots , l_n} ,
\end{align*}
and extending linearly.
Below we again omit the subscript dimensions from the notation.
This alternative correspondence $\sFalt$ has properties entirely similar to $\sF$.
In particular, for $v \in \Wd_{d_1} \tens \cdots \tens \Wd_{d_n}$ we have:
\begin{itemize}
\item (well-def.): If $E.v=0$, then $\sFalt^{(x_0)} [v] (x_1 , \ldots, x_n)$ is independent of $x_0$,
and thus defines a function \[ \sFalt[v] \colon \chamber_n \to \bC . \]
\item (COV): For any $\xi\in\bR$ we have the translation invariance
\begin{align*}
\sFalt^{(x_{0}+\xi)}[v](x_{1}+\xi,\ldots,x_{n}+\xi)
=\; & \sFalt^{(x_{0})}[v](x_{1},\ldots,x_{n}).
\end{align*}
If furthermore $K.v=q^{d-1}\,v$, then for any $\lambda>0$ we have
the scaling covariance
\begin{align*}
\sFalt^{(\lambda x_{0})}[v](\lambda x_{1},\ldots,\lambda x_{n})
=\; & \lambda^{\Delta_{d}^{d_{1},\ldots,d_{n}}}
\times\sFalt^{(x_{0})}[v](x_{1},\ldots,x_{n}).
\end{align*}
If $K.v=v$ and $E.v=0$, then we have the full M\"obius covariance
\begin{align*}
\prod_{j=1}^{n}\Mob'(x_{j})^{h_{1,d_{j}}}
\times\sFalt[v]\left(\Mob(x_{1}),\ldots,\,\Mob(x_{n})\right)
=\; & \sFalt[v](x_{1},\ldots,\, x_{n})
\end{align*}
for any M\"obius transformation $\Mob\colon\bH\to\bH$ such that $\Mob(x_{1})<\Mob(x_{2})<\cdots<\Mob(x_{n})$.

\item (PDE): If $E.v=0$, we have
\begin{align*}
\sD_{d_{j}}^{(j)}\sFalt[v]=\; & 0\qquad\text{for }j=1,\ldots,n ,
\end{align*}
where $\sD_{d_{j}}^{(j)}$ is the differential 
operator~\eqref{eq: BSA differential operator}.

\item (ASY): 
Suppose that $v$ belongs to the subrepresentation obtained by picking the
$d$-dimensional irreducible direct summand in the tensor product of the $j$:th
and $j+1$:st factors $\Wd_{d_{j}}$ and $\Wd_{d_{j+1}}$
(now counting from the left), and denote by
\begin{align*}
\hat{v} \in \Big( \Wd_{d_1} \tens \cdots \tens \Wd_{d_{j-1}} \Big) \tens \Wd_{d}
    \tens \Big(\Wd_{d_{j+2}} \tens \cdots \tens \Wd_{d_n} \Big) 
\end{align*}
the vector obtained by identifying $v$ as a vector in an $(n-1)$-fold
tensor product representation.
More precisely, with the
earlier notations, this means $v=\pi_{N-j,N-j+1}^{(d)}(v)$ and
$\hat{v} = \hat{\pi}_{N-j,N-j+1}^{(d)}(v)$.
Then we have 
\begin{align*}
\lim_{x_{j},x_{j+1}\to\xi}
\Big( & (x_{j+1}-x_{j})^{-\Delta_{d}^{d_{j},d_{j+1}}}
\times\sFalt^{(x_{0})}[v](x_{1},\ldots,x_{n}) \Big) \\
=\; & B_{d}^{d_{j},d_{j+1}} \times 
\sFalt^{(x_{0})}[\hat{v}](x_{1},\ldots,x_{j-1},\xi,x_{j+2},\ldots,x_{n}).
\end{align*}
\end{itemize}
Also the further results of Section~\ref{sec: further properties}
have straightforward analogues for $\sFalt$.

\bibliographystyle{annotate}

\newcommand{\etalchar}[1]{$^{#1}$}
\def\cprime{$'$}

\end{document}